\newtheorem{theorem}{Theorem}
\newtheorem{lemma}[theorem]{Lemma}%
\newtheorem{corollary}[theorem]{Corollary}%
\newtheorem*{theorem*}{Theorem}
\newtheorem{claim}{Claim}
\theoremstyle{definition}%
\newtheorem{definition}{Definition}%
\newtheorem{obs}{Observation}
\newtheorem{const}{Construction}
\DeclareMathOperator{\reach}{reach}
\DeclareMathOperator{\tfo}{t_{\text{Fo}}}
\DeclareMathOperator{\teccs}{ecc^S}
\DeclareMathOperator{\teccfa}{ecc^{Fa}}
\newcommand{\trlpp}{TRLP\xspace}
\newcommand{\trppp}{TRP\xspace}
\begin{document}

\title{Reachability in temporal graphs under perturbation} 

\author[1]{\fnm{Jessica} \sur{Enright}\,\orcidlink{0000-0002-0266-3292}}\email{jessica.enright@glasgow.ac.uk}
\author*[1]{\fnm{Laura} \sur{Larios-Jones}\,\orcidlink{0000-0003-3322-0176}}\email{l.larios-jones.1@research.gla.ac.uk}
\author[1]{\fnm{Kitty} \sur{Meeks} \,\orcidlink{0000-0001-5299-3073}}\email{kitty.meeks@glasgow.ac.uk}
\author[1]{\fnm{William} \sur{Pettersson}\,\orcidlink{0000-0003-0040-2088}}\email{william.pettersson@glasgow.ac.uk}

\affil[1]{\orgdiv{School of Computing Science}, \orgname{University of Glasgow}, \orgaddress{\country{Scotland}}}

%


\maketitle

\begin{abstract}

    Reachability and other path-based measures on temporal graphs can be used to understand spread of infection, information, and people in modelled systems. Due to delays and errors in reporting, temporal graphs derived from data are unlikely to perfectly reflect reality, especially with respect to the precise times at which edges appear. To reflect this uncertainty, we consider a model in which some number $\zeta$ of edge appearances may have their labels perturbed by $\pm\delta$ for some~$\delta$. Within this model, we investigate temporal reachability and consider the problem of determining the maximum number of vertices any vertex can reach under these perturbations. We show this problem to be intractable in general but efficiently solvable when $\zeta$ is sufficiently large. We also give algorithms which solve this problem in several restricted settings.  We complement this with some contrasting results concerning the complexity of related temporal eccentricity problems under perturbation.
\end{abstract}
\keywords{Parameterized Algorithms \and Temporal Graphs \and Reachability}
\section{Introduction}
Temporal graphs are widely used to model movement and spread in time-sensitive networks \cite{casteigts_journey_2018,deligkas_being_2023,hand_making_2022,kutner_temporal_2023}.  However, the algorithmic tools used in solving problems on these graphs typically assume that the temporal graphs as given are correct and without uncertainty --- unfortunately this may not be the case for real-world networks (e.g. \cite{chaters2019analysing,myall2021characterising}), and thus this assumption significantly limits the applicability of temporal graph methods.  


For example, reachability is used as a measure to assess risk of outbreaks in epidemiologically-relevant graphs such as livestock trading networks or human or animal contact networks \cite{holme2005network,fielding2019contact}.  It is clear that a calculated reachability may be incorrect if the temporal graph itself is incorrect.  From an application perspective, the robustness of the temporal graph's reachability to incorrect timings is important: if a small number of incorrectly-recorded edge appearances can result in a huge increase in reachability, then this temporal graph is more epidemiologically risky than one in which a small number of incorrect appearance times do not significantly increase reachability.  

In this work we address the question: what is the complexity, given a specification of the number and magnitude of errors in the timings of the temporal graph's edge appearances, of determining if the reachability of the temporal graph could be above some given threshold?  Put another way, we ask if there exists a perturbation of the times allocated to edges that increases the reachability above a target threshold.  This second formulation makes clear a link to temporal graph modification questions --- an active area of research studying the minimum number of modifications required to a temporal graph that allow it to satisfy some specified property.

Following the definition given by Kempe, Kleinberg and Kumar \cite{kempe_connectivity_2002}, we say that temporal graphs $(G,\lambda)$ consist of an underlying graph $G=(V,E)$ and a temporal assignment $\lambda:E(G)\to 2^\mathbb{N}$ which describes when each edge is active. 
We introduce the concept of a $(\delta,\zeta)$-perturbation where up to $\zeta$ time-edges of a temporal graph are changed by at most $\pm\delta$. We ask, given a temporal graph $(G,\lambda)$, if there is a perturbation of $(G,\lambda)$ such that there is a temporal path from some vertex in the graph to at least $h$ vertices. With the framing of a transport network, this asks if we can reach $h$ places from some starting point by connections which occur chronologically.  In an epidemiological framework, this asks how many nodes could, in the worst case, be infected by a single initially-infected node.  


In previous work, Deligkas and Potapov~\cite{deligkas_optimizing_2022} consider delaying and merging operations to optimise maximum, minimum and average reachability where merging of time-edges assigns all edges which are active in a set of consecutive times the latest time in that set. They show that it is NP-hard to find a set of merging operations that maximises the maximum reachability of a temporal graph, even when the underlying graph is a path. 
In contrast, they show that finding a set of delaying operations to minimize the maximum reachability is tractable 
 if the number of delays made is unbounded. In a similar way, we find that allowing a large number of perturbations makes our problem easier.
 
In more recent work, Deligkas et al.~\cite{deligkas_minimizing_2023_bugfree} investigate the problem which asks how \emph{long} it takes to reach every vertex in a temporal graph from a set of sources following some delays. They consider variations of this problem where the number of delayed edges and total sum of delays must be at most $k$ for some integer $k$, and show that this problem is \W[2]{}-hard with respect to the number of edges changed and prove tractability when the underlying graph has bounded treewidth.
Delaying appearances of edges to optimise reachability is also studied by Molter et al.~\cite{molter_temporal_2021}, who show that the problem becomes fixed-parameter tractable with respect to the number of delayed edges.  Other work on temporal graph modification to optimise reachability has considered deleting edges or edge appearances \cite{molter_temporal_2021,enright_deleting_2021} and assigning times to edges in a static graph~\cite{enright_assigning_2021}.

Some related work exists on computing structures and properties which are robust to perturbation.  Fuchsle et al.~\cite{fuchsle_delay-robust_2022} ask if there is a path between two given vertices which is robust to delays.  In network design, fault-tolerance describes a network's robustness to uncertainty \cite{baswana_fault_2016,bilo_blackout-tolerant_2022,casteigts_robustness_2020,dinitz_fault-tolerant_2011,dubois_when_2023,lochet_fault_2020}. Other models containing uncertainty allow for queries to find true values and aim to optimise the number of queries needed to find a solution \cite{erlebach_learning-augmented_2022,erlebach_minimum_2014,erlebach_computing_2008,erlebach_round-competitive_2022}.

\subsection{Our Contributions}
Motivated by the idea of uncertainty in temporal graphs, we work with perturbations of temporal graphs, and focus on the computational complexity of a problem that asks whether there exists a perturbation of an input temporal graph such that there exists a source vertex that has at least a minimum specified reachability.  We:
\begin{itemize}
\item show that this problem is NP-hard in general, and \W[2]{}-hard with respect to the number of edges perturbed (Section~\ref{sec:intractability}),
\item show that the problem is tractable when the number of the perturbations allowed is at least the number of vertices we aim to reach minus one (Section~\ref{sec:many-perturbations}), and
\item show that the problem is tractable using a knapsack-based dynamic program when the underlying graph is a tree, or using a more complex tree decomposition-based algorithm when the underlying graph has bounded treewidth (Section~\ref{sec:structural}).  
\end{itemize}

To provide contrast with our results on the reachability problem, we briefly investigate (in Section~\ref{sec:eccentricity}) related eccentricity-based problems on temporal graphs with perturbations --- in these problems we require not only that vertices are reached from a source, but that they are reached within a given number of edges (shortest eccentricity) or within a given travel duration (fastest eccentricity).  In contrast to our findings for our reachability question, for fastest and shortest eccentricity we show that the problem remains hard when we are allowed to perturb all of the edges.

\subsection{Preliminaries}\label{sec:problems}
We give some notation and definitions, as well as including preliminary results relating to reachability in temporal graphs without perturbation.


A \emph{temporal graph} is a pair $(G,\lambda)$, where $G = (V,E)$ is an underlying (static) graph and $\lambda: E \rightarrow 2^{\mathbb{N}}$ is a time-labelling function which assigns to every edge of $G$ a set of discrete-time labels. When $|\lambda(e)|=1$, we may abuse notation and write $\lambda(e)=t$ rather than $\lambda(e)=\{t\}$ for some time $t$. An example temporal graph can be found in Figure~\ref{fig:tgraph-eg}. We call a pair $(e,t)$ where $t\in \lambda(e)$ a \emph{time-edge} and denote by $\mathcal{E}((G,\lambda))$ the set of all time-edges in $(G,\lambda)$. When the graph in question is clear from context, we denote the number of vertices in the graph $|V(G)|$ by $n$ and the number of edges in the underlying graph $|E(G)|$ by $m$.
We let $\tau(G,\lambda) \coloneqq \max_{e\in E(G)} |\lambda(e)|$ denote the maximum number of time labels assigned to any one edge in $(G,\lambda)$, and call this quantity \emph{temporality} following Mertzios et al. \cite{mertzios_temporal_2019}. The \emph{lifetime} of a temporal graph, denoted $T(G,\lambda)$, is the largest $t\in \mathbb{N}$ such that there exists an edge $e\in E(G)$ with $t\in\lambda(e)$.

\begin{figure}
    \centering
    \begin{tikzpicture}[%
    node distance=2cm,
    vert/.style={draw, circle, minimum size=2mm, fill, inner sep=0pt},
    edge/.style={draw, thick},
    every edge quotes/.style = {auto, font=\footnotesize}
    ]
        \node[vert,label={above:$a$}] (a) {};
        \node[vert,label={above:$b$}, right of=a] (b) {};
        \node[vert,label={above:$c$}, right of=b] (c) {};
        \node[vert,label={above:$d$}, right of=c] (d) {};
        \node[vert,label={above:$e$}, right of=d] (e) {};
        
        \draw[edge] (a) edge["{$1,3$}"'] (b);
        \draw[edge] (b) edge["{$2$}"'] (c);
         \draw[edge] (c) edge["{$2,3,4$}"'] (d);
        \draw[edge] (d) edge["{$4$}"'] (e);
        
    \end{tikzpicture}
    \caption{An example temporal graph $(G,\lambda)$ with maximum lifetime $4$ and temporality $3$. There is a temporal path from the vertex $a$ to all other vertices in the graph. The foremost temporal path from $a$ to $d$ arrives at time $3$.}
    \label{fig:tgraph-eg}
\end{figure}

To discuss reachability, we first need a notion of a temporal path.
Here we work with \emph{strict temporal paths}, where a \emph{strict temporal path} from $v_0$ to $v_{\ell}$ in $(G,\lambda)$ is a sequence of time-edges $((v_0v_1,t_1),(v_1v_2,t_2)\dots,(v_{\ell - 1}v_{\ell},t_{\ell}))$ where, for each $i$, $t_i \in \lambda(v_{i-1}v_i)$ and $t_1 < t_2 < \dots < t_{\ell}$.
In this paper, all temporal graphs discussed are strict, and so we will simply refer to them as temporal paths for brevity.
We say that the length of a temporal path $((v_0v_1,t_1),\dots,(v_{\ell - 1}v_{\ell},t_{\ell}))$ is $\ell$.
Given a temporal path $P\coloneqq ((v_0v_1,t_1),\dots,(v_{\ell - 1}v_{\ell},t_{\ell}))$, for each $1 \leq i < \ell$ the path $P'\coloneqq ((v_0v_1,t_1),\dots,(v_{i-1}v_i,t_i))$ is a \emph{prefix} of $P$.

If there exists a temporal path from $v_s$ to $v_t$ in $(G,\lambda)$, we say that $v_t$ is \emph{reachable} from $v_s$. For a given vertex $v_s$ in a temporal graph $(G,\lambda)$, we define $\reach((G,\lambda),v_s)$ to be the set of all vertices reachable from $v_s$ in $(G,\lambda)$. The \emph{maximum temporal reachability} of a graph $R_{\max}(G,\lambda)\coloneqq \max_{v\in V(G)}|\reach((G,\lambda),v)|$ is the cardinality of the largest reachability set in $(G,\lambda)$.

We say that the \emph{arrival time} of a path $P:= ((v_0v_1,t_1),\ldots, (v_{\ell-1}v_\ell,t_\ell))$ is $t_\ell$.
A path $P$ is \emph{foremost} from a set of paths $\mathcal{P}$ if there is no other path $P'\in \mathcal{P}$ with a strictly earlier arrival time.
We will write $\tfo((G,\lambda),v_s,v_t)$ to be the arrival time of a foremost temporal path from $v_s$ to $v_t$, where $t_{\text{Fo}}((G,\lambda),v_s,v_s) := 0$ and, if there is no temporal path from $v_s$ to $v_t$, we define $t_{\text{Fo}}((G,\lambda),v_s,v_t)\coloneqq \infty$.

Bui-Xuan et al.~\cite{XUAN2003} introduce a notion that we call a \emph{ubiquitous foremost path} (they refer to this as a ubiquitous foremost journey).
Let $u$ and $v$ be two vertices in a temporal graph $(G,\lambda)$. A foremost temporal path from $u$ to $v$ is a \emph{ubiquitous foremost path} if all of its prefixes are themselves foremost temporal paths.
%
%
Given a source vertex $v_s$ in a temporal graph $(G, \lambda)$, a \emph{ubiquitous foremost temporal path tree} on $v_s$ is defined to be a temporal tree $(G_T,\lambda_T)$ with the following properties:
\begin{itemize}
    \item $G_T$ is a subtree of $G$,
    \item for each $e \in E(G_T)$, $\lambda_T(e) \subseteq \lambda(e)$ with $|\lambda_T(e)| = 1$, and
    \item for each $v_t \in \reach((G,\lambda),v_s)$, $(G_T,\lambda_T)$ contains a ubiquitous foremost temporal path from $v_t$ to $v_s$ that arrives at the same time as a ubiquitous foremost temporal path from $v_t$ to $v_s$ in $(G,\lambda)$.
\end{itemize}
In particular, note that while there may be multiple ubiquitous foremost temporal paths from $v_s$ to $v_t$ in $(G,\lambda)$, $(G_T,\lambda_T)$ will contain exactly one ubiquitous foremost temporal path from $v_s$ to $v_t$, and the arrival time of a foremost temporal path from $v_s$ to any $v_t$ can be trivially inferred as the earliest active time over all temporal edges incident to $v_t$.  We note that a ubiquitous foremost temporal path tree can be computed efficiently.
\medskip
\begin{theorem}[Theorem 2 in~\cite{XUAN2003}]\label{theorem:ubiquitous-foremost-tree}
 Given a temporal graph $(G,\lambda)$ and a source vertex $v_s\in V(G)$, a ubiquitous foremost temporal path tree on $v_s$ can be calculated in $O(m(\log n + \log \tau(G,\lambda)))$ time. 
\end{theorem}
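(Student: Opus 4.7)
The plan is to adapt Dijkstra's shortest-path algorithm, where the ``distance'' to a vertex is replaced by its foremost arrival time from $v_s$. I would maintain a priority queue of tentative arrival times $d(v)$, initialised with $d(v_s)=0$ and $d(v)=\infty$ for every other vertex, together with a ``parent'' pointer $\pi(v)$ and an associated label $t(v)$ recording the active time of the edge $(\pi(v),v)$ used. At each iteration, extract the vertex $u$ with smallest $d(u)$, mark it as settled, and relax each incident edge $uv$: using binary search on the sorted list $\lambda(uv)$, find the smallest $t\in \lambda(uv)$ with $t>d(u)$, and if such a $t$ satisfies $t<d(v)$, update $d(v)\gets t$, $\pi(v)\gets u$, and decrease-key on $v$. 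The ubiquitous foremost temporal path tree is then $(G_T,\lambda_T)$ where $G_T$ is the union of the parent edges $\pi(v)v$ for each reachable $v$, and $\lambda_T(\pi(v)v)=\{t(v)\}$.

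For correctness I would proceed by induction on the order in which vertices are extracted from the priority queue. Since edge times used in relaxations are strictly greater than the arrival time of the parent, the Dijkstra-style argument applies: when $u$ is extracted with value $d(u)$, this value equals $\tfo((G,\lambda),v_s,u)$, because any alternative temporal path from $v_s$ to $u$ must pass through some unsettled vertex $w$ whose current $d(w) \geq d(u)$, and extending such a path to $u$ can only arrive strictly later than $d(w)$. To see that the tree captures \emph{ubiquitous} foremost paths, observe by induction that the path from $v_s$ to $\pi(v)$ in the tree is itself a ubiquitous foremost temporal path arriving at $d(\pi(v))$; appending the time-edge $(\pi(v)v, t(v))$ yields a path from $v_s$ to $v$ arriving at $d(v) = \tfo((G,\lambda),v_s,v)$, so it is foremost, and its proper prefixes are foremost by the inductive hypothesis.

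For the running time, each of the at most $n$ reachable vertices is extracted once, contributing $O(n\log n)$ to the priority queue cost. Each edge $uv$ is processed at most twice (once when each endpoint is settled); each processing performs a binary search on $\lambda(uv)$, costing $O(\log \tau(G,\lambda))$, plus at most one decrease-key operation costing $O(\log n)$. Summing gives $O(n \log n + m(\log n + \log \tau(G,\lambda))) = O(m(\log n + \log \tau(G,\lambda)))$, since in any component with at least one edge we have $m \geq n-1$ (and vertices unreachable from $v_s$ can be ignored).

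The main subtlety I would expect to justify carefully is the ubiquity claim, in particular that the Dijkstra-style settling order guarantees not only foremostness at $v$ but that every prefix along the tree path is foremost; the key point is that whenever $\pi(v)$ is first settled, its tree path is already ubiquitous foremost, and strict inequality $t>d(u)$ in the edge relaxation step ensures the extension remains a valid temporal path without disturbing the arrival times of earlier prefixes.
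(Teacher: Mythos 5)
Your proposal is correct, but note that the paper does not prove this statement at all: it is imported verbatim as Theorem~2 of Bui-Xuan, Ferreira and Jarry~\cite{XUAN2003}, so there is no in-paper argument to compare against. Your Dijkstra-style reconstruction (settle vertices in order of foremost arrival time, relax each incident edge by binary-searching $\lambda(uv)$ for the least label strictly exceeding $d(u)$, and read the tree off the parent pointers) is essentially the algorithm of the cited work, and both the ubiquity argument and the $O(m(\log n + \log \tau(G,\lambda)))$ accounting are sound.
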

\medskip
By applying Theorem~\ref{theorem:ubiquitous-foremost-tree} for each source vertex $v_s$ in $V(G)$, we get the following corollary.
\medskip
\begin{corollary}\label{cor:temp-reach}
    Given a temporal graph $(G,\lambda)$ we can calculate $R_{\max}$ in $O(nm (\log n + \log \tau(G,\lambda)))$ time.
\end{corollary}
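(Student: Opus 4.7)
The plan is to establish the bound by a direct application of Theorem~\ref{theorem:ubiquitous-foremost-tree}, once per candidate source vertex. The core observation is that the definition of a ubiquitous foremost temporal path tree $(G_T,\lambda_T)$ rooted at a source $v_s$ requires that, for every $v_t \in \reach((G,\lambda),v_s)$, there is a ubiquitous foremost temporal path from $v_t$ to $v_s$ in $(G_T,\lambda_T)$. Hence the vertex set of $G_T$ is exactly $\reach((G,\lambda),v_s) \cup \{v_s\}$, so $|\reach((G,\lambda),v_s)|$ can be read off simply by counting the vertices of $G_T$ in $O(n)$ time after it is constructed.

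The algorithm then iterates over every vertex $v_s \in V(G)$: for each choice of $v_s$, invoke Theorem~\ref{theorem:ubiquitous-foremost-tree} to build the ubiquitous foremost tree on $v_s$ in $O(m(\log n + \log \tau(G,\lambda)))$ time, record the number of vertices appearing in the resulting tree (which equals $|\reach((G,\lambda),v_s)|$), and keep a running maximum. After processing all $n$ sources, return the maximum value, which by definition equals $R_{\max}(G,\lambda)$.

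The total running time is the $n$-fold repetition of the per-source computation, giving $O(nm(\log n + \log \tau(G,\lambda)))$, as the $O(n)$ cost of counting vertices in each tree is dominated by the construction cost (since $m \geq n - 1$ may be assumed when $G$ is connected on its relevant component, and in any case the tree construction time absorbs this additive term). There is no real obstacle: the only thing to verify carefully is that the tree really does contain every reachable vertex exactly once, which is immediate from the third bullet in the definition of a ubiquitous foremost temporal path tree.
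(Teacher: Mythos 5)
Your proposal is correct and follows exactly the paper's argument: the paper derives this corollary by applying Theorem~\ref{theorem:ubiquitous-foremost-tree} once per source vertex and taking the maximum reachability over all sources, which is precisely your algorithm and yields the same $O(nm(\log n + \log \tau(G,\lambda)))$ bound. The only extra detail you add — reading off $|\reach((G,\lambda),v_s)|$ by counting the vertices of the tree — is a harmless and correct elaboration of what the paper leaves implicit.
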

\medskip
Note that we can reduce the complexity of this to $O(n(n+m))$ by using Algorithm 1 of Wu et al.~\cite{Wu2014PathProblems} which calculates, for a given source $v_s$, foremost arrival times (but not actual temporal paths) for each vertex $v_t$ in $O(n+m)$ time.
This, however, does need the input data to be in a specified streaming format which allows their algorithm to calculate foremost arrival times in one pass.
For our purposes, knowing foremost arrival times is not as useful as knowing actual foremost temporal paths, so we omit further details and instead refer the reader to~\cite{Wu2014PathProblems}.

\subsection{Problems considered and initial observations}

Here we introduce the notion of perturbing the appearances of time-edges. To that end, we define $\delta$-perturbations as follows.
\medskip
\begin{definition}\label{dfn:perturbation}
     A temporal assignment $\lambda'$ is a $\delta$\emph{-perturbation} of $\lambda$ if there is a map $f$ from $\mathcal{E}(G,\lambda)$ to $E(G)\times [1, T(G,\lambda)+\delta]$ such that, for all time-edges, $f((e,t))=(e,t')$ where $t'$ is an integer in the interval $[\max(1,t-\delta), t+\delta]$. We define a new temporal assignment $\lambda'$ as $\lambda'(e)=\{t' \, \mid f(e,t)=(e,t')\}$.
    Here we say $\lambda$ has been \emph{perturbed} by $\delta$, or $\lambda'$ is a $\delta$-perturbation of $\lambda$. We also refer to any time-edge such that $(e,t)\neq f(e,t)$ as perturbed from $\lambda$ to $\lambda'$.
\end{definition}
\medskip


In some scenarios, there may be a bound on not only the amount of perturbations possible on the edges, but also the \emph{number} of time-edges perturbed. We capture this with the parameter $\zeta$.
\medskip
\begin{definition}\label{dfn:zeta-perturbation}
    A temporal assignment $\lambda'$ is a $(\delta,\zeta)$\emph{-perturbation} of $\lambda$ if it is a $\delta$-perturbation of $\lambda$ and 
    $\left|\{(e,t)\mid (e,t) \text{ perturbed from } \lambda \text{ to }\lambda'\right\}|\leq \zeta$. 
\end{definition}
\medskip
We can now define the main problem we consider in this paper.

\begin{framed}
\noindent
\textbf{\textsc{Temporal Reachability with Limited Perturbation (TRLP)}}\\
\textit{Input:} A temporal graph $(G,\lambda)$ and positive integers $\zeta$, $h\leq n$, and $\delta$.\\
\textit{Question:} Is there a $(\delta,\zeta)$-perturbation $\lambda'$ of $\lambda$ such that $R_{\max}(G,\lambda') \geq h$?
\end{framed}
\label{TR-PerturbLimit}

We also consider a straightforward special case of this problem, in which the number of perturbations is unrestricted.

\begin{framed}
\noindent
\textbf{\textsc{Temporal Reachability with Perturbation (TRP)}}\\
\textit{Input:} A temporal graph $(G,\lambda)$ and positive integers $h\leq n$ and $\delta$.\\
\textit{Question:} Is there a $\delta$-perturbation $\lambda'$ of $\lambda$ such that $R_{\max}(G, \lambda') \geq h$?
\end{framed}
\label{TR-Perturb}




Let $N_G(v)$ denote the \emph{neighbourhood} of a vertex, that is the set of all vertices which share an edge with $v$ in $G$.  We note a simple fact about neighbourhoods and reachability.
\medskip
\begin{obs}[Observation 3 in ~\cite{enright_assigning_2021}]\label{obs:neighbourhood}
    For any undirected temporal graph $(G,\lambda)$ and $v\in V(G)$, we have $\reach((G,\lambda),v)\supseteq N_G(v)\cup\{v\}$.
\end{obs}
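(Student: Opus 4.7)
The plan is to verify each side of the containment separately by exhibiting, for every vertex in $N_G(v)\cup\{v\}$, an explicit temporal path from $v$ to that vertex. Both cases are essentially immediate from the definitions of reachability and temporal path.

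First I would handle the vertex $v$ itself. Under the convention adopted in the Preliminaries (where $\tfo((G,\lambda),v_s,v_s) := 0$, and implicitly the empty sequence of time-edges is a temporal path from any vertex to itself), $v$ is reachable from $v$, so $v \in \reach((G,\lambda),v)$.

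Next I would handle an arbitrary neighbour $u \in N_G(v)$. Since $u \in N_G(v)$, the edge $uv$ lies in $E(G)$, so $\lambda(uv)$ is non-empty and contains at least one label $t \in \mathbb{N}$. Then the single time-edge sequence $((vu,t))$ satisfies all conditions in the definition of a strict temporal path (the chronological ordering condition being vacuous on a length-one path), so it is a temporal path from $v$ to $u$ and hence $u \in \reach((G,\lambda),v)$.

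Combining the two cases gives $N_G(v) \cup \{v\} \subseteq \reach((G,\lambda),v)$, which is exactly the claimed containment. No real obstacle arises here: the only subtle point is the convention that a vertex trivially reaches itself, and that edges of the underlying graph carry at least one label, both of which follow from the conventions fixed in the Preliminaries.
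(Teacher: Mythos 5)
Your argument is correct and is exactly the intended one: the paper states this observation as a known fact (citing Observation 3 of the referenced work) without reproving it, and the standard justification is precisely your two cases — the trivial path from $v$ to itself, and the single time-edge $(vu,t)$ for any $t\in\lambda(uv)$ witnessing reachability of each neighbour $u$. Nothing further is needed.
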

\medskip
We note, as a result, that if $h\leq \Delta_G+1$ where $\Delta_G$ is the maximum degree of $G$, then $((G,\lambda),\zeta,h,\delta)$ is a yes-instance of TRLP since there must be a vertex in $(G,\lambda)$ with temporal reachability at least $\Delta_G+1$ regardless of perturbation.

We now show that if we only \emph{add} to the set of times at which edges are active, we cannot delay the arrival of a foremost temporal path, nor can we reduce temporal reachability.
\medskip
\begin{lemma}\label{lemma:foremost-arrival-time-subset}
    Let $(G,\lambda_1)$ and $(G, \lambda_2)$ be two temporal graphs on the same underlying graph $G$. If $\mathcal{E}((G,\lambda_1)) \subseteq \mathcal{E}((G,\lambda_2))$, then for each $v_s,v_t \in V(G)$, $t_{\text{Fo}}((G,\lambda_1),v_s,v_t) \geq t_{\text{Fo}}((G,\lambda_2),v_s,v_t)$.
\end{lemma}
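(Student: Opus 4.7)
The plan is a direct path-preservation argument: any foremost temporal path present in $(G,\lambda_1)$ is also a valid temporal path in $(G,\lambda_2)$, so the foremost arrival time in $(G,\lambda_2)$ cannot be later than the foremost arrival time in $(G,\lambda_1)$. The only care needed is in handling the two edge cases where $t_{\text{Fo}}$ is defined by convention rather than by an actual path.

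First I would fix arbitrary $v_s, v_t \in V(G)$ and split into cases according to the value of $t_{\text{Fo}}((G,\lambda_1), v_s, v_t)$. If $v_s = v_t$, then by definition both foremost arrival times are $0$, and the inequality holds trivially. If $v_t$ is not reachable from $v_s$ in $(G,\lambda_1)$, then $t_{\text{Fo}}((G,\lambda_1), v_s, v_t) = \infty$ and the inequality holds regardless of the right-hand side.

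In the remaining case, there exists a foremost temporal path $P = ((v_0 v_1, t_1), \ldots, (v_{\ell-1} v_\ell, t_\ell))$ from $v_s = v_0$ to $v_t = v_\ell$ in $(G,\lambda_1)$ with arrival time $t_\ell = t_{\text{Fo}}((G,\lambda_1), v_s, v_t)$. Since each time-edge $(v_{i-1}v_i, t_i)$ lies in $\mathcal{E}((G,\lambda_1)) \subseteq \mathcal{E}((G,\lambda_2))$, the same sequence $P$ is also a valid strict temporal path in $(G,\lambda_2)$; in particular $v_t$ is reachable from $v_s$ in $(G,\lambda_2)$, and its arrival time there is at most $t_\ell$. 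Hence $t_{\text{Fo}}((G,\lambda_2), v_s, v_t) \leq t_\ell = t_{\text{Fo}}((G,\lambda_1), v_s, v_t)$, which is the desired inequality.

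There is no real obstacle here; the only subtlety worth flagging is that the definitions of strict temporal path and of foremost arrival time depend only on which time-edges are present, not on any structural relationship between the two time-labellings, so the subset containment $\mathcal{E}((G,\lambda_1)) \subseteq \mathcal{E}((G,\lambda_2))$ suffices without any additional assumption such as a common bijective structure between the two labellings.
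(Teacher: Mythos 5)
Your proof is correct and takes essentially the same route as the paper: the paper also argues that any foremost temporal path in $(G,\lambda_1)$ survives as a temporal path in $(G,\lambda_2)$, so the foremost arrival time cannot increase. Your explicit handling of the $v_s=v_t$ and unreachable cases is a small bonus the paper leaves implicit.
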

\begin{proof}
    For any $v_s,v_t\in V(G)$, let $P$ be a foremost temporal path in $(G,\lambda_1)$ from $v_s$ to $v_t$. But as $\lambda_1(e) \subseteq \lambda_2(e)$ for any $e\in E(G)$, this path $P$ must also exist in $(G,\lambda_2)$.
\end{proof}

This gives us the following corollary.
\medskip
\begin{corollary}\label{cor:temp-reach-time-subset}
    Let $(G,\lambda_1)$ and $(G, \lambda_2)$ be two temporal graphs on the same underlying graph $G$. If $\mathcal{E}((G,\lambda_1)) \subseteq \mathcal{E}((G,\lambda_2))$, then for each $v \in V(G)$, $\reach((G,\lambda_1),v) \subseteq \reach((G,\lambda_2),v)$, and hence $R_{\max}(G,\lambda_1) \leq R_{\max}(G,\lambda_2)$.
\end{corollary}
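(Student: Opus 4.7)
The plan is to deduce the corollary directly from Lemma~\ref{lemma:foremost-arrival-time-subset} by observing that reachability is equivalent to having finite foremost arrival time. Fix a vertex $v \in V(G)$ and let $v_t \in \reach((G,\lambda_1),v)$ be arbitrary. By definition of reachability there is a temporal path from $v$ to $v_t$ in $(G,\lambda_1)$, so $t_{\text{Fo}}((G,\lambda_1),v,v_t) < \infty$. Applying Lemma~\ref{lemma:foremost-arrival-time-subset} (with the roles of $\lambda_1$ and $\lambda_2$ matched to the hypothesis $\mathcal{E}((G,\lambda_1)) \subseteq \mathcal{E}((G,\lambda_2))$) gives $t_{\text{Fo}}((G,\lambda_2),v,v_t) \leq t_{\text{Fo}}((G,\lambda_1),v,v_t) < \infty$, so $v_t \in \reach((G,\lambda_2),v)$. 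Since $v_t$ was arbitrary, $\reach((G,\lambda_1),v) \subseteq \reach((G,\lambda_2),v)$.

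For the second claim, I would simply take cardinalities and pass to the maximum: for every $v \in V(G)$, $|\reach((G,\lambda_1),v)| \leq |\reach((G,\lambda_2),v)| \leq R_{\max}(G,\lambda_2)$, and taking the maximum over $v$ on the left-hand side yields $R_{\max}(G,\lambda_1) \leq R_{\max}(G,\lambda_2)$.

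There is no real obstacle here — the statement is a direct consequence of Lemma~\ref{lemma:foremost-arrival-time-subset} together with the fact that reachability from $v$ to $v_t$ is exactly the condition $t_{\text{Fo}}((G,\lambda),v,v_t) < \infty$. One could alternatively give a self-contained argument by noting that every temporal path in $(G,\lambda_1)$ is built from time-edges in $\mathcal{E}((G,\lambda_1)) \subseteq \mathcal{E}((G,\lambda_2))$ and hence is automatically a temporal path in $(G,\lambda_2)$, but quoting the lemma keeps the proof to a couple of lines.
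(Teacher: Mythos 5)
Your proof is correct and follows exactly the route the paper intends: the corollary is stated as an immediate consequence of Lemma~\ref{lemma:foremost-arrival-time-subset} (the paper gives no separate proof), using precisely the observation that $v_t \in \reach((G,\lambda),v)$ iff $t_{\text{Fo}}((G,\lambda),v,v_t) < \infty$ and then taking cardinalities and maxima. Nothing is missing.
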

\medskip

We now show that we can take a temporal graph with an arbitrary number of time labels on each edge and remove all but one of these time labels without affecting the temporal reachability of a given vertex $v_s$.
We do this by only keeping, for each edge, the earliest time label at which any path from $v_s$ might use the edge.
Any earlier times by definition cannot be used by any path, and so are not necessary, and for any path that use the edge at a later time there is some path that reaches the same vertices, but uses the edge at the earlier time.
\medskip
\begin{lemma}\label{lemma:temp-cannot-affect-reach-by-removing-later-times}
   Given a temporal graph $(G,\lambda)$ and vertex $v\in V(G)$ with $\reach_{G,\lambda}(v) = R$, let $\mathcal{P}$ be a set of ubiquitous foremost temporal paths such that for each $u\in R$, $\mathcal{P}$ contains a ubiquitous foremost temporal path from $v$ to $u$,
   and let $\lambda'$ be a temporal labelling function such that
   for each $e\in E(G)$, if $e$ appears in at least one path in $\mathcal{P}$ then $\lambda'(e) = \{ \min \{ t \mid (e,t) \in P \text{ for some } P\in\mathcal{P}\}\}$.
   Then for any $u\in R$, $t_{\text{Fo}}((G,\lambda),v,u) = t_{\text{Fo}}((G,\lambda'),v,u)$, and $\reach((G,\lambda),v) = \reach((G,\lambda'),v)$. 
\end{lemma}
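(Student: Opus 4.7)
The plan is to argue that every ubiquitous foremost path in $\mathcal{P}$ is retained essentially unchanged in $(G,\lambda')$; this then forces the reach set from $v$ and all foremost arrival times from $v$ to agree in $(G,\lambda)$ and $(G,\lambda')$.

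First I would dispatch the easy inclusion. By construction $\lambda'(e)\subseteq\lambda(e)$ for every edge (taking $\lambda'(e)=\emptyset$ whenever $e$ does not appear in any path of $\mathcal{P}$), hence $\mathcal{E}((G,\lambda'))\subseteq\mathcal{E}((G,\lambda))$. Lemma~\ref{lemma:foremost-arrival-time-subset} and Corollary~\ref{cor:temp-reach-time-subset} immediately give $\reach((G,\lambda'),v)\subseteq\reach((G,\lambda),v)=R$ and $t_{\text{Fo}}((G,\lambda'),v,u)\geq t_{\text{Fo}}((G,\lambda),v,u)$ for every $u$.

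For the reverse direction, fix $u\in R$ and let $P_u=((v_0v_1,t_1),\dots,(v_{\ell-1}v_\ell,t_\ell))\in\mathcal{P}$ with $v_0=v$ and $v_\ell=u$. The key claim is that $\lambda'(e_i)=\{t_i\}$ for every edge $e_i=v_{i-1}v_i$ on $P_u$. The inequality ``$\min\leq t_i$'' is immediate since $(e_i,t_i)\in P_u\in\mathcal{P}$. For the opposite direction, I would suppose toward contradiction that some $P'\in\mathcal{P}$ uses $e_i$ at a time $t^*<t_i$, and consider the prefix $P''$ of $P'$ ending with that traversal of $e_i$. A short case split on the direction in which $P'$ traverses $e_i$, and on whether $e_i$ is the first edge of $P'$, uses the ubiquity of $P'$ to extract a foremost temporal path from $v$ to $v_i$ with arrival time strictly less than $t_i$. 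But the prefix of $P_u$ of length $i$ is itself a foremost path from $v$ to $v_i$ by ubiquity of $P_u$, forcing $t_{\text{Fo}}((G,\lambda),v,v_i)=t_i$, a contradiction.

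Granted the claim, the edges of $P_u$ carry the strictly increasing times $t_1<\dots<t_\ell$ under $\lambda'$, so $P_u$ itself is a valid temporal path in $(G,\lambda')$ with arrival time $t_\ell$. Hence $u\in\reach((G,\lambda'),v)$ and $t_{\text{Fo}}((G,\lambda'),v,u)\leq t_\ell=t_{\text{Fo}}((G,\lambda),v,u)$, which together with the first paragraph yields the desired equalities. The fiddliest part is the corner subcase in which $e_i$ is the first edge of $P'$, forcing $v$ to be an endpoint of $e_i$: ubiquity of $P_u$ rules out $v=v_i$ (since the prefix of $P_u$ of length $i$ would otherwise give $t_i=t_{\text{Fo}}((G,\lambda),v,v)=0$), while $v=v_{i-1}$ forces $i=1$ and then the single-edge prefix of $P'$ already witnesses a foremost path to $v_1$ arriving at $t^*<t_1$, still contradicting the foremost property of $P_u$.
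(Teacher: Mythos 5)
Your proof is correct, but it takes a genuinely different route from the paper's. The paper argues by minimal counterexample over arrival times: it supposes a smallest time $t_i$ at which some path in $\mathcal{P}$ uses an edge with an endpoint not reachable in $(G,\lambda')$, and then reroutes reachability through the prefix of whichever path in $\mathcal{P}$ attains the minimum label on that edge; notably, it never invokes ubiquity in the contradiction, only minimality of $t_i$, and it only directly establishes preservation of the reach set. You instead prove the stronger structural fact that for every edge $e_i$ used at time $t_i$ by some $P_u\in\mathcal{P}$, no other path in $\mathcal{P}$ uses $e_i$ at an earlier time, so $\lambda'(e_i)=\{t_i\}$ and every path in $\mathcal{P}$ survives verbatim in $(G,\lambda')$. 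Your contradiction leans entirely on ubiquity (prefixes of both $P_u$ and $P'$ being foremost pin down $t_{\text{Fo}}((G,\lambda),v,v_i)$ two incompatible ways), and your case analysis on traversal direction and on $e_i$ being the first edge of $P'$ is sound (the $v=v_i$ corner is also excluded simply because temporal paths do not repeat vertices). What your approach buys is that the equality of foremost arrival times $t_{\text{Fo}}((G,\lambda),v,u)=t_{\text{Fo}}((G,\lambda'),v,u)$ falls out immediately, since the witnessing foremost path is literally unchanged; the paper's argument leaves that part implicit. What the paper's approach buys is slightly weaker hypotheses in spirit — its contradiction would go through even without the full ubiquity assumption — at the cost of a somewhat more delicate induction.
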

\begin{proof}
   First, note that as $\mathcal{E}((G, \lambda')) \subseteq \mathcal{E}((G, \lambda))$ we get $\reach_{G,\lambda'}(v) \subseteq \reach_{G,\lambda}(v)$ from \Cref{cor:temp-reach-time-subset}, so it remains to show that $\reach_{G,\lambda}(v) \subseteq \reach_{G,\lambda'}(v)$.
   Towards this, consider an arbitrary vertex $u\in R$. We will show that $u$ must also be an element of $\reach_{G,\lambda'}(v)$; thus $\reach_{G,\lambda}(v) \subseteq \reach_{G,\lambda'}(v)$. Let $P$ be the ubiquitous foremost temporal path from $v$ to $u$ in $(G,\lambda)$. Then, for each time-edge $(e,t)$ in $P$, $(e,t)$ must be in $\mathcal{E}((G,\lambda'))$. Thus, $u$ is temporally reachable from $v$ in $(G,\lambda')$ and the foremost temporal path from $v$ to $u$ arrives at the same time in both $(G,\lambda)$ and $(G,\lambda)$, as required.
\end{proof}

\section{Intractability of TRLP}\label{sec:intractability}
We begin by analysing the complexity of TRLP.  We do this by giving a polynomial-time reduction from \textsc{Dominating Set}~\cite{subconstant-pcp,Cygan2015} to TRLP, which lets us show that the problem
\begin{itemize}
    \item is \NP-complete,
    \item is \W[2]-complete with respect to $\zeta$,
    \item is \NP-hard to approximate to within a factor of $\alpha \log n$ for some $\alpha > 0$, and
    \item has no $f(\zeta)n^{o(\zeta)}$-time algorithm unless ETH fails. 
\end{itemize}
Following this, we show that it is unlikely that the ETH lower bound can be significantly improved, as we give an algorithm running in time $n^{O(\zeta)}\log \tau(G,\lambda)$. 

We begin by describing a particular construction of a temporal graph from a static graph.
\medskip
\newcommand{\consds}[1]{\ensuremath{\mathfrak{C}#1}}
\begin{const}\label{const:domset}
Given a graph $G$, let $\consds{(G)} = (G',\lambda)$ be the temporal graph constructed as follows.
Create a vertex $v_s$ in $V(G')$, and for each vertex $v_i \in V(G)$ create two vertices $v^1_i$ and $v^2_i$ in $V(G')$.
For each $v_i \in V(G)$, add the edges $v_sv^1_i$ and $v^1_iv^2_i$ to $(G')$.
Then, for each edge $v_iv_j\in E(G)$, add the edges $v^1_i v^2_j$ and $v^1_jv^2_i$ to $(G')$.
Let $\lambda(e) = 2$ for each $e\in E(G')$.
A sample construction is given in \Cref{fig:trlp-no-fpt}.
\end{const}

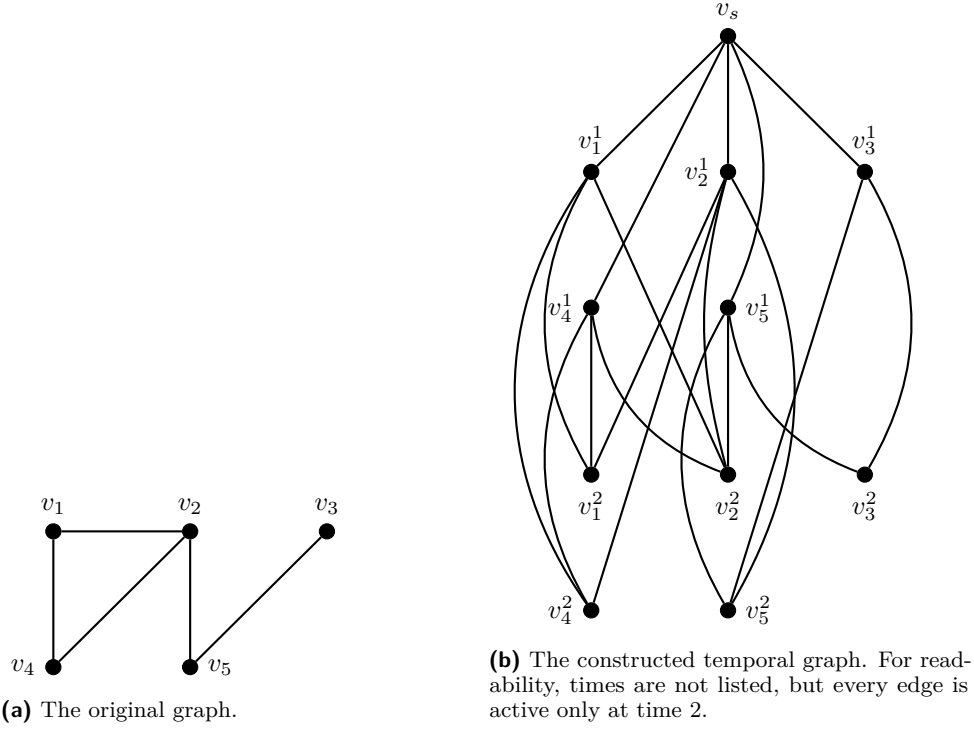
\begin{figure}[ht]
    \centering
    \begin{subfigure}{0.45\textwidth}
    \begin{tikzpicture}[%
    node distance=18mm,
    vert/.style={draw, circle, minimum size=2mm, fill, inner sep=0pt},
    edge/.style={draw, thick},
    every edge quotes/.style = {auto, font=\footnotesize}
    ]
        \node[vert,label={above:$v_1$}] (v1) {};
        \node[vert,label={above:$v_2$}, right of=v1] (v2) {};
        \node[vert,label={above:$v_3$}, right of=v2] (v3) {};
        \node[vert,label={left:$v_4$}, below of=v1] (v4) {};
        \node[vert,label={right:$v_5$}, right of=v4] (v5) {};

        \draw[edge] (v1) -- (v2);
        \draw[edge] (v1) -- (v4);
        \draw[edge] (v2) -- (v4);
        \draw[edge] (v2) -- (v5);
        \draw[edge] (v3) -- (v5);
    \end{tikzpicture}
    \caption{The original graph.}\label{fig:trlp-no-fpt-ds}
    \end{subfigure}
    \begin{subfigure}{0.45\textwidth}
    \begin{tikzpicture}[%
    node distance=18mm,
    vert/.style={draw, circle, minimum size=2mm, fill, inner sep=0pt},
    edge/.style={draw, thick},
    every edge quotes/.style = {auto, font=\footnotesize}
    ]
        \node[vert,label={above:$v_s$}] (vsa) {};
        \node[vert,label={left:$v^1_2$}, below of=vsa] (v2a) {};
        \node[vert,label={above:$v^1_1$}, left of=v2a] (v1a) {};
        \node[vert,label={above:$v^1_3$}, right of=v2a] (v3a) {};
        \node[vert,label={left:$v^1_4$}, below of=v1a] (v4a) {};
        \node[vert,label={right:$v^1_5$}, right of=v4a] (v5a) {};

        \draw[edge] (vsa) edge (v1a);
        \draw[edge] (vsa) edge (v2a);
        \draw[edge] (vsa) edge (v3a);
        \draw[edge] (vsa) edge (v4a);
        \draw[edge] (vsa) edge[bend left=25] (v5a);
        
        \node[vert,label={below:$v^2_2$}, below=20mm of v5a] (v2b) {};
        \node[vert,label={below:$v^2_1$}, left of=v2b] (v1b) {};
        \node[vert,label={below:$v^2_3$}, right of=v2b] (v3b) {};
        \node[vert,label={left:$v^2_4$}, below of=v1b] (v4b) {};
        \node[vert,label={right:$v^2_5$}, right of=v4b] (v5b) {};

        \draw[edge] (v1a) edge[bend right=30] (v1b);
        \draw[edge] (v1a) -- (v2b);
        \draw[edge] (v1a) edge[bend right=35] (v4b);
        
        \draw[edge] (v2a) edge (v1b);
        \draw[edge] (v2a) edge[bend right=15] (v2b);
        \draw[edge] (v2a) edge (v4b);
        \draw[edge] (v2a) edge[bend left=30] (v5b);
        
        \draw[edge] (v3a) edge[bend left=30] (v3b);
        \draw[edge] (v3a) edge (v5b);
        
        \draw[edge] (v4a) edge[bend right=30] (v4b);
        \draw[edge] (v4a) edge (v1b);
        \draw[edge] (v4a) edge[bend right=30] (v2b);
        
        \draw[edge] (v5a) edge[bend right=30] (v5b);
        \draw[edge] (v5a) edge (v2b);
        \draw[edge] (v5a) edge[bend right=30] (v3b);

    \end{tikzpicture}
    \caption{The constructed temporal graph. For readability, times are not listed, but every edge is active only at time 2.}\label{fig:trlp-no-fpt-temporal}
    \end{subfigure}
    \caption{Diagram of the construction from Construction~\ref{const:domset}.}\label{fig:trlp-no-fpt}
\end{figure}

Note that $G'$ is a bipartite graph where the sets $\{v_s\} \cup \{v^2_a \mid v_a\in V(G)\}$ and $\{v^1_a \mid v_a \in V(G)\}$ form the two partitions.
\medskip
\begin{obs}
    Given a graph $G$, $\consds{(G)}$ can be constructed in time polynomial in the size of $G$.
\end{obs}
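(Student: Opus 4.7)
The plan is to carry out a straightforward counting argument by inspecting \Cref{construction:domset} step by step and bounding the work done at each step by a linear function of $|V(G)| + |E(G)|$.

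First, I would enumerate the objects created. The construction introduces one vertex $v_s$, and then two vertices $v^1_i, v^2_i$ for each $v_i \in V(G)$, giving $2|V(G)| + 1$ vertices in $V(G')$ in total. This can be done in $O(|V(G)|)$ time, provided we maintain a table mapping each $v_i \in V(G)$ to its pair $v^1_i, v^2_i$ for constant-time lookup. Next, for each $v_i \in V(G)$ we add the two edges $v_s v^1_i$ and $v^1_i v^2_i$, a further $2|V(G)|$ edges created in $O(|V(G)|)$ time. Finally, for each $v_i v_j \in E(G)$ we add the two edges $v^1_i v^2_j$ and $v^1_j v^2_i$, using the table from before to look up the relevant pairs in constant time, for an additional $2|E(G)|$ edges created in $O(|E(G)|)$ time.

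For the temporal labelling, we simply set $\lambda(e) = \{2\}$ for each of the $2|V(G)| + 2|E(G)|$ edges, which again takes $O(|V(G)| + |E(G)|)$ time. Summing the contributions, the total running time of the construction is $O(|V(G)| + |E(G)|)$, which is polynomial in the size of $G$. There is no real obstacle here; the only mild care needed is to keep the lookup from $v_i$ to its pair $v^1_i, v^2_i$ constant-time, which is immediate with an array indexed by the vertices of $G$.
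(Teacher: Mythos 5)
Your counting argument is correct and is exactly the justification the paper has in mind; the paper states this observation without an explicit proof precisely because the construction visibly creates only $O(|V(G)|+|E(G)|)$ vertices and edges, each in constant time. Nothing is missing.
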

\medskip
We now demonstrate the relationship between the existence of a dominating set in $G$ and the existence of a solution to TRLP in $\consds{(G)}$.

\medskip
\begin{lemma}\label{lemma:ds-to-reachability}
    Given an instance $(G,r)$ of \textsc{Dominating Set} with $r\geq 1$, $G$ has a dominating set of size $r$ if and only if $((G',\lambda),1,r,|V(G')|)$ is a yes-instance of \trlpp, where $(G',\lambda) = \consds{(G)}$.
\end{lemma}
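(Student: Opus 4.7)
The plan is to prove the equivalence by direct construction in both directions, exploiting the fact that every edge of $(G',\lambda)$ originally has time~$2$, so with $\delta=1$ every time label in a perturbed graph lies in $\{1,2,3\}$ and hence every strict temporal path has length at most~$3$.

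For the forward direction, given a dominating set $D=\{v_{i_1},\ldots,v_{i_r}\}$ of $G$, I will shift exactly $r$ edges by~$1$ to make $v_s$ reach every vertex of $G'$: relabel each edge $v_sv_{i_k}^1$ from time~$2$ to time~$1$, leaving all other edges at time~$2$. Every $v_j^1$ is then reached from $v_s$ by the direct edge (at time $1$ or~$2$). For each $v_j^2$, the dominating-set property yields $v_{i_k}\in D$ with $v_j\in N_G(v_{i_k})\cup\{v_{i_k}\}$, so the edge $v_{i_k}^1v_j^2$ exists in $G'$ by \Cref{construction:domset}, and the path $((v_sv_{i_k}^1,1),(v_{i_k}^1v_j^2,2))$ reaches $v_j^2$. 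Thus $v_s$ reaches all $|V(G')|$ vertices.

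For the backward direction, suppose the instance is a yes-instance witnessed by a perturbation $\lambda'$ (shifting at most $r$ edges) and a source $u$ of reachability $|V(G')|$. For the main subcase $u=v_s$, I use that $G'$ is bipartite with parts $\{v_s\}\cup\{v_j^2 : j\in[|V(G)|]\}$ and $\{v_j^1 : j\in[|V(G)|]\}$: any path from $v_s$ to $v_j^2$ has even length, and together with the length bound of~$3$ this forces length exactly~$2$, i.e.\ a path $v_s\to v_i^1\to v_j^2$ with $v_j\in N_G(v_i)\cup\{v_i\}$ and at least one of the two edges perturbed (otherwise both lie at time~$2$, violating strictness). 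Classifying useful perturbations into \emph{opening} edges $v_sv_i^1\mapsto 1$, which cover every $v_j$ in the closed neighbourhood of $v_i$, and \emph{closing} edges $v_i^1v_j^2\mapsto 3$, which cover only $\{v_j\}$, the requirement ``every $v_j^2$ is reached'' becomes a set-cover condition over $V(G)$ whose allowed sets are closed neighbourhoods and singletons. Since every singleton is contained in a closed neighbourhood, an optimal cover uses only closed neighbourhoods, and its defining vertex set is a dominating set of $G$ of size at most~$r$, which we pad to exactly~$r$ by adding arbitrary extra vertices if needed.

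The main obstacle is handling the remaining subcases $u=v_i^1$ or $u=v_i^2$, where paths to same-parity destinations have odd length up to~$3$ and hence admit more varied perturbation patterns. For $u=v_i^1$, I plan to argue by direct counting: each $v_j^2$ with $v_j\notin N_G(v_i)\cup\{v_i\}$ requires a length-$3$ path from $v_i^1$, whose final edge $v_l^1v_j^2$ must be a closing perturbation (distinct per such $v_j$) and whose first edge (incident to $v_i^1$) must be an opening perturbation at time~$1$. This gives a lower bound of $|V(G)|-\deg_G(v_i)$ perturbations, which is at least the domination number of $G$ because $\{v_i\}$ together with any dominating set of $G\setminus(N_G(v_i)\cup\{v_i\})$ is itself a dominating set of $G$ of size at most $|V(G)|-\deg_G(v_i)$. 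An analogous count handles $u=v_i^2$, so any yes-certificate yields a dominating set of $G$ of size at most~$r$, completing the backward direction.
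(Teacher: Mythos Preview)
Your proof is correct, and the forward direction together with the $u=v_s$ case of the backward direction track the paper's argument closely (the paper defines the dominating set directly as $\{v_a : \text{some perturbed edge is incident to }v_a^1\}$, which is just a compact way of expressing your set-cover-then-replace-singletons construction).

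Where you genuinely diverge is in the cases $u=v_i^1$ and $u=v_i^2$. The paper handles these by a \emph{reduction back to the $v_s$ case}: it shows that from any $(1,r)$-perturbation $\lambda'$ making $v_i^1$ (or $v_i^2$) universal, one can build a new $(1,r)$-perturbation $\lambda''$ making $v_s$ universal, by charging each perturbed edge of $\lambda'$ to at most one edge of the form $v_sv_a^1$ and setting that to time~$1$. This is a clean ``change of source'' trick that reuses Case~1 verbatim. Your approach is instead a direct counting argument: each of the $|V(G)|-\deg_G(v_i)-1$ vertices outside $N_G[v_i]$ forces a distinct closing perturbation at time~$3$, plus at least one first-edge perturbation at time~$1$, giving $r\ge |V(G)|-\deg_G(v_i)\ge\gamma(G)$, the last inequality because $\{v_i\}\cup\bigl(V(G)\setminus N_G[v_i]\bigr)$ is itself a dominating set. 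This is shorter and more elementary than the paper's transformation, and the ``analogous count'' for $u=v_i^2$ does go through symmetrically (look at length-$3$ paths to the vertices $v_j^1$ with $v_j\notin N_G[v_i]$). One small point you should make explicit: when $v_i$ is universal in $G$ your count degenerates to~$0$, not to $|V(G)|-\deg_G(v_i)=1$, so you need the hypothesis $r\ge 1$ together with $\gamma(G)=1$ to finish that edge case. Also, in Case~2 the ``opening'' edge is any edge incident to $v_i^1$ set to time~$1$, not necessarily of the specific form $v_sv_\ell^1\mapsto 1$ you defined earlier; this does not affect the argument since any time-$1$ edge is distinct from every time-$3$ edge, but the terminology overlap could confuse a reader.
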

\begin{proof}
    We begin by showing that if we have a dominating set $\mathcal{C} =\{v_1,\ldots, v_r\}$ of $G$, we can find a $(1,r)$-perturbation $\lambda'$ of $\lambda$ such that $v_s$ has temporal reachability $|V(G')|$.
    We create $\lambda'$ by perturbing the edges $v_sv^1_i$ for $v_i\in \mathcal{C}$ to be active at time 1.
    Clearly $\lambda'$ is a $(1,r)$-perturbation of $\lambda$.
    We will show that $v_s$ can reach every other vertex in $V(G')$.
    Let $v$ be an arbitrary vertex in $V(G')$.
    If $v=v_s$ then clearly $v_s$ reaches $v$.
    If $v = v^1_a$ for some $a$, then $v_s$ can reach $v$ along a temporal path of one edge.
    If $v = v^2_a$ for some $a$, then let $v_b$ be the vertex in $\mathcal{C}$ that dominates $v_a$ in $G$.
    Then $\lambda'(v_sv^1_b) = 1$ and $\lambda'(v^1_bv^2_a)=2$, so $v_s$ can reach $v$ along a temporal path of two edges.
    As $v$ was taken arbitrarily, there is a temporal path in $(G',\lambda')$ from $v_s$ to every other $v\in V(G')$, so $((G',\lambda),1,r,|V(G')|)$ is a yes-instance of \trlpp.

    We now show that if $((G',\lambda),1,r,|V(G')|)$ is a yes-instance of \trlpp, there exists a dominating set of cardinality $r$. Assume that $((G',\lambda),1,r,|V(G')|)$ is a yes-instance of \trlpp.
    Let $v$ be a vertex in $V(G')$ and let $\lambda'$ be a $(1,r)$-perturbation of $\lambda$ such that $v$ that has temporal reachability $|V(G')|$ in $(G',\lambda')$.
    Note that for any $(1,r)$-perturbation $\lambda'$ of $\lambda$ and for any edge $e\in E(G')$, $\lambda'(e) \leq 3$.
    As a result, a temporal path in $(G',\lambda')$ can have a length of at most three.


    We now take cases based on whether the vertex $v\in V(G')$ is $v_s$, is of the form $v^1_a$ for some $a$, or is of the form $v^2_a$ for some $a$.
    If $v=v_s$, we show how to take the perturbation $\lambda'$ and construct a dominating set.
    For the other cases we show either how to construct a dominating set, or how to construct a $(1,r)$-perturbation such that $v_s$ can reach all other vertices, and then utilise the first case to complete the argument.
    
    \textbf{Case 1: $v=v_s$}
    Let $C = \{v_a \mid \lambda(v_sv^1_a) = 1 \vee \exists \; v^2_b : \lambda(v^1_av^2_b) = 3\}$.
    That is, $C$ contains the vertex $v_a$ if and only if at least one edge incident to $v^1_a$ is perturbed to create $\lambda'$.
    It follows that $|C| \leq r$; we now show that $C$ is a dominating set in $G$.
    Let $v_b$ be an arbitrary vertex in $G$, and let $v_sv^1_av^2_b$ be a temporal path in $(G',\lambda')$.
    Such a path must exist as $v$ can temporally reach all vertices in $(G',\lambda')$, and it must have a length of two as there is no path of length either one or three in $G'$ from $v_s$ to $v^2_b$, and there are no temporal paths of length $\geq 4$ in $(G',\lambda')$.
    It must also be that $\lambda'(v_sv^1_a) < \lambda'(v^1_av^2_b)$, and since $\lambda(v_sv^1_a) = \lambda(v^1_av^2_b) = 2$, it follows that at least one of $\lambda'(v_sv^1_a) = 1$ or $\lambda'(v^1_av^2_b) = 3$ must hold.
    However, then $v_a\in C$, and as $v^1_av^2_b \in E(G')$, by construction we also have $v_av_b\in E(G)$ and so $v_a$ dominates $v_b$.
    As $v_b$ was taken arbitrarily, $C$ is a dominating set of size at most $r$ in $G$.

    \textbf{Case 2: $v=v^1_a$ for some $a$:}
    First, we will show that if there is no edge $e$ incident to $v^1_a$ such that $\lambda'(e) = 1$, then $\{v_a\}$ forms a dominating set.
    If no such edge is perturbed, then all temporal paths from $v^1_a$ must have length at most two.
    As $G'$ is bipartite, this means that for $v^1_a$ to be able to temporally reach any vertex of the form $v^2_b$, $v^1_a$ must be adjacent to it (and thus by construction $v_a$ is adjacent to every other vertex $v_b$ in $G$).
    As $r\geq 1$, $\{v_a\}$ is a dominating set in $G$ with size at most $r$.

    We next assume that there is some edge $e$ incident to $v^1_a$ with $\lambda'(e) = 1$.
    We will construct a $(1,r)$-perturbation $\lambda''$ of $\lambda$ such that $v_s$ can temporally reach all vertices in $(G',\lambda'')$;
    the result then follows by applying Case 1.
    We achieve this by ensuring that for each perturbed edge in $\lambda'$, we select at most one edge in $\lambda''$ to perturb.
    Recalling that at least one edge incident to $v^1_a$ is perturbed, let $\lambda''(v_s v^1_a) = 1$, and let $\lambda''(v^1_a v^2_b) = 2$ for every edge of the form $v^1_a v^2_b$ in $E(G')$.
    For any vertex $v^1_b$ such that there exists a vertex $v^2_c$ with $\lambda'(v^1_bv^2_c) = 3$, let $\lambda''(v_sv^1_b) = 1$, and 
    for every other edge $e$, let $\lambda''(e) = 2$.
    By this construction, $\lambda''$ is a $(1,r)$-perturbation of $\lambda$ as $\lambda'$ is a $(1,r)$-perturbation of $\lambda$.
    Note that $v_s$ is adjacent to, and thus can temporally reach, every vertex of the form $v^1_b$ for any $b$.
    Consider next an arbitrary vertex of the form $v^2_b$.
    If $v^2_b$ is adjacent to $v^1_a$, then $v_s$ can temporally reach $v^2_b$ along the temporal path $v_sv^1_av^2_b$ in $(G,\lambda'')$.
    Otherwise, $v^2_b$ is not adjacent to $v^1_a$.
    As $v^1_a$ can temporally reach $v^2_b$ in $(G',\lambda')$, temporal paths in $(G',\lambda')$ have length at most 3, and $G'$ is bipartite, there is a vertex $y$ and a temporal path $v^1_ayv^1_cv^2_b$ in $(G',\lambda')$, where either $y=v_s$ or $y$ is of the form $v^2_d$.
    This means that $\lambda'(v^1_cv^2_b) = 3$, so $\lambda''(v_sv^1_c) = 1$ and $v_sv^1_cv^2_b$ is a temporal path in $(G',\lambda'')$.
    Thus $v_s$ can temporally reach every vertex in $G'$ in $(G',\lambda'')$,
    and so by applying Case 1 there is a dominating set $C$ of $G$ of size at most $r$.
    
    \textbf{Case 3: $v=v^2_a$ for some $a$:}
    We will construct a $(1,r)$-perturbation $\lambda''$ of $\lambda$ such that $v_s$ can temporally reach all vertices in $(G',\lambda'')$; 
    the result then follows by applying Case 1.
    We achieve this by ensuring that for each perturbed edge in $\lambda'$, we select at most one edge in $\lambda''$ to perturb.
    For every vertex $v^1_b$ incident to at least one edge $e$ with $\lambda(e) \neq \lambda'(e)$, let $\lambda''(v_sv^1_b) = 1$,
    and for every vertex $v^1_b$ with $\lambda'(v_sv^1_b)=3$, let $\lambda''(v_sv^1_b) = 1$.
    For any vertex of the form $v^1_b$ incident to one or more edges $v^1_bv^2_c$ with $\lambda'(v^1_bv^2_c) = 3$, let $\lambda''(v^sv^1_b) = 1$.
    For every other edge $e$, let $\lambda''(e) = 2$.
    By construction, $\lambda''$ is a $(1,r)$-perturbation of $\lambda$ as $\lambda'$ is a $(1,r)$-perturbation of $\lambda$.
    Note that $v_s$ is adjacent to, and thus can temporally reach, every vertex of the form $v^1_b$ for any $b$.
    Consider an arbitrary vertex of the form $v^2_b$.
    Recall that $G'$ is bipartite, and that one of the partitions is given by $\{v_s\}\cup \{v^2_b \mid v_b \in G \}$.
    As temporal paths in any $(1,r)$-perturbation of $\lambda$ can have maximum length 3, we therefore know that any temporal path from $v^2_a$ to $v^2_b$ in $(G',\lambda'')$ must have length either zero (if $v^2_a = v^2_b$) or two, and that any temporal path from $v^s$ to $v^2_b$ in $(G',\lambda'')$ must have length two.
    If $v^2_b = v^2_a$, then as $v^2_a$ can temporally reach $v_s$ along some temporal path $v^2_av^1_cv_s$ for some $v^1_c$, where at least one of $\lambda'(v^2_av^1_c) \neq 2$ or $\lambda'(v^1_cv_s)\neq 2$ must hold, we have $\lambda''(v_sv^1_c) = 1$ and so $v_s$ can reach $v^2_b$ along the temporal path $v_sv^1_cv^2_b$.
    Else $v^2_b \neq v^2_a$, and so there is a temporal path $v^2_av^1_cv^2_b$ in $(G',\lambda')$ for some vertex $v^1_c$.
    It must be that $\lambda'(v^2_av^1_c) < \lambda'(v^1_cv^2_b)$, so either $\lambda'(v^2_av^1_c) \neq \lambda(v^2_av^1_c)$ or $\lambda'(v^1_cv^2_b) \neq \lambda(v^1_cv^2_b)$ holds.
    This means that $\lambda''(v_sv^1_c) = 1$ and $\lambda''(v^1_cv^2_b)=2$ so $v_s$ can temporally reach $v^2_b$.
    Thus $v_s$ can temporally reach every vertex in $G'$ in $(G',\lambda'')$,
    and so by applying Case 1 there is a dominating set $C$ of $G$ of size at most $r$.
\end{proof}

\newcommand{\optp}[1]{\ensuremath{\text{OPT}_\zeta(\text{#1})}}

We can now use this equivalence to give a number of hardness and lower-bound results. Together with the observation that the perturbation $\lambda'$ serves as a certificate for the problem, Lemma~\ref{lemma:ds-to-reachability} gives us \NP-completeness of TRLP.  We further observe that, given an instance of TRLP with all parameters except $\zeta$ specified, approximating the smallest value of $\zeta$ such that the resulting instance is a yes-instance to within a logarithmic factor is \NP-hard; this follows as the size of a smallest dominating set cannot be approximated to a factor better than $\alpha \log n$ for some $\alpha > 0$ (see \cite{subconstant-pcp}).
 In addition, \textsc{Dominating Set} is \W[2]-hard parameterised by the size of the dominating set (see e.g.~\cite[Proposition 13.20]{Cygan2015}). Therefore, TRLP must be \W[2]-hard parameterised by $\zeta$. Finally, assuming ETH, there is no $f(k)n^{o(k)}$-time algorithm for \textsc{Dominating Set} (see \cite[Corollary 14.23]{Cygan2015}). This leads us to the following corollary.
\medskip
\begin{theorem}\label{thm:inapprox-reach}
    TRLP is \NP-complete. Moreover:
    \begin{itemize}[leftmargin=0.5cm]
        \item For some $\alpha>0$ it is \NP-hard to approximate to within a factor $\alpha \log n$ the minimum number $\zeta$ of $\delta$-perturbations required to transform the input temporal graph $(G,\lambda)$ into a temporal graph with maximum reachability at least $h$.
        \item TRLP is \W[2]-hard parameterised by $\zeta$, the number of edges that may be perturbed.
        \item Assuming ETH, there is no $f(\zeta)n^{o(\zeta)}$-time algorithm for TRLP for any computable function $f$.
    \end{itemize}
\end{theorem}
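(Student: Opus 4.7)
The plan is to derive all four conclusions as direct consequences of Lemma~\ref{lemma:ds-to-reachability} combined with known hardness results for \textsc{Dominating Set}. The key structural observations from Construction~\ref{construction:domset} are that $\consds{(G)}$ has $1 + 2|V(G)|$ vertices, is constructible in polynomial time, and (by Lemma~\ref{lemma:ds-to-reachability}) the assignment $r \mapsto (\zeta,\delta,h) = (r,|V(G')|,|V(G')|)$ yields an equivalence between dominating sets of size $r$ in $G$ and yes-instances of \trlpp.

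For \NP-completeness, I would first argue membership in \NP\ by taking a $(\delta,\zeta)$-perturbation $\lambda'$ as the certificate: each time label is bounded by $T(G,\lambda)+\delta$, so $\lambda'$ has polynomial bit-length, and $R_{\max}(G,\lambda')$ can be computed in polynomial time by Corollary~\ref{cor:temp-reach}. Hardness then follows immediately from Lemma~\ref{lemma:ds-to-reachability} and the \NP-hardness of \textsc{Dominating Set}.

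For the inapproximability claim, I would argue contrapositively: an $\alpha\log n$-approximation algorithm for the minimum $\zeta$ making $((G',\lambda),\zeta,|V(G')|,|V(G')|)$ a yes-instance of \trlpp\ would, via Lemma~\ref{lemma:ds-to-reachability}, approximate the minimum dominating set in $G$ to within a factor of $\alpha\log|V(G')| = \alpha\log(2|V(G)|+1) = (\alpha+o(1))\log|V(G)|$. For sufficiently small $\alpha>0$ this contradicts the known logarithmic-factor inapproximability of \textsc{Dominating Set}~\cite{subconstant-pcp}.

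For the two parameterised conclusions, the point is that the reduction of Lemma~\ref{lemma:ds-to-reachability} is an FPT-reduction mapping the natural parameter $r$ of \textsc{Dominating Set} to $\zeta = r$, and produces an instance whose size is polynomial in $|V(G)|$. Hence the \W[2]-hardness of \textsc{Dominating Set} parameterised by solution size~\cite[Proposition 13.20]{Cygan2015} transfers to \trlpp\ parameterised by $\zeta$, and the ETH-based lower bound excluding $f(k)n^{o(k)}$-time algorithms~\cite[Corollary 14.23]{Cygan2015} transfers verbatim. I anticipate no real obstacle: the combinatorial content is carried entirely by Lemma~\ref{lemma:ds-to-reachability}, and it only remains to observe that Construction~\ref{construction:domset} is polynomial-size and parameter-preserving, so each of the four standard hardness transfers goes through.
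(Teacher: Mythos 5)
Your proposal is correct and follows essentially the same route as the paper: \NP-membership via the perturbation $\lambda'$ as a polynomially-verifiable certificate, and all three hardness claims read off Lemma~\ref{lemma:ds-to-reachability} as parameter-preserving transfers of the logarithmic inapproximability, \W[2]-hardness, and ETH lower bound for \textsc{Dominating Set}. The only quibble is your parameter tuple $(\zeta,\delta,h)=(r,|V(G')|,|V(G')|)$: the instance produced by Lemma~\ref{lemma:ds-to-reachability} uses $\delta=1$ (its backward direction relies on every perturbed label lying in $\{1,2,3\}$, so a large $\delta$ would break the equivalence), though this mislabelling does not affect your argument, since $\zeta=r$ and polynomial instance size are all that the transfers require.
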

\medskip



We now give an algorithm which shows that our ETH lower bound is close to best possible. The algorithm considers every possible subset of edges of size $\zeta$, and then for each such subset performs a Dijkstra-like exploration of a temporal graph using any edge in the selected subset as early as possible.
We begin by giving \Cref{alg:trlp-xp-inner}, which performs this exploration for a given source and given set of perturbed edges in $O(n^2)$ time. Repeatedly calling this algorithm for each possible set of perturbed edges and each possible source gives us the desired result.

For \Cref{alg:trlp-xp-inner}, we need the concept of a \emph{priority queue}.
A priority queue is a data structure for storing data with an associated priority.
For our purposes, priority queues allow for insertion of data as well as the removal of one entry with highest priority.
Assuming the priority queue stores at most $c$ elements, priority queues can be implemented to allow for either constant time insertion and $O(c)$ removal, $O(c)$ insertion and constant time removal, or $O(\log c)$ insertion and $O(\log c)$ removal.
Any of these suffice for our result, and for further information on priority queues we point the reader to~\cite[Section 3.5]{algorithmdesign}.
\begin{algorithm}
\caption{An algorithm which determines if there exists a perturbation of the input temporal graph such that a source vertex reaches at least $k$ vertices, given the exact edges which may be perturbed}\label{alg:trlp-xp-inner}
\begin{algorithmic}[1]
\Require A temporal graph $(G,\lambda)$, a source vertex $v_s$, non-negative integers $k$ and $\delta$, and a set $E'$ of edges of $G$.
\Ensure True if and only if there is a $\delta$-perturbation $\lambda'$ of $\lambda$ where $\lambda'(e) \neq \lambda(e)$ implies $e\in E'$ for all $e\in E(G)$ and $\text{reach}((G,\lambda'),v_s)\geq k$.
\State Let $Q$ be a time-based priority queue with priority for earlier times. We build $Q$ such that if $(v_1,v_2,t)$ is in $Q$ then there exists a temporal path that reaches $v_2$ at time $t$. 
\State Let $\mathcal{V}$ be a set such that if vertex $v$ is in $\mathcal{V}$ then a temporal path from $v_s$ to $v$ exists.
\State Add $v_s$ to $\mathcal{V}$ \label{alg:trlp-xp-base}
\ForAll{edges $(v_s,v_2)$ incident to $v_s$}\label{alg:trlp-xp:edges_first}
    \State Add $v_2$ to $\mathcal{V}$
    \If{ $(v_s,v_2)\in E'$}
        \State Let $t' = \min(\{ t + d \mid d \in [-\delta,\delta] \wedge t+d\geq 1 \wedge t\in \lambda(v_s,v_2) \})$\label{alg:trlp-xp:tmin_first}
        \State Add $(v_s,v_2,t')$ to $Q$\label{alg:trlp-xp:add_first_p}
    \Else
        \State Add $(v_s,v_2,\min(\lambda(v_s,v_2)))$ to $Q$\label{alg:trlp-xp:add_first_no_p}
    \EndIf 
\EndFor
\While{$Q$ is not empty}
    \State Let $(v_1, v_2,t)$ be a tuple of highest priority removed from $Q$\label{alg:trlp-xp:off_q}
    \If{$v_2\not\in \mathcal{V}$}
        \State Add $v_2$ to $\mathcal{V}$
        \ForAll{edges $(v_2,v_3)$ incident to $v_2$ where $v_3\notin \mathcal{V}$}\label{alg:trlp-xp:edges}
            \If{ $v_2v_3\in E'$}
                \If{$\max\lambda(v_2,v_3) + \delta > t$}
                    \State Set $t' = \min(\{ t^* + d \mid d \in [-\delta,\delta] \wedge t^*\in \lambda(v_2,v_3) \wedge t^* + d > t\})$\label{alg:trlp-xp:tmin}
                    \State Add $(v_2, v_3,t')$ to $Q$\label{alg:trlp-xp:add_p}
                \EndIf
            \Else
                \If{$\max(\lambda(v_2,v_3))> t$}
                    \State Set $t''=\min(\{\hat{t}\in \lambda(v_2,v_3)\,:\, \hat{t}>t\})$
                    \State Add $(v_2, v_3,t'')$ to $Q$\label{alg:trlp-xp:add_no_p}
                \EndIf
            \EndIf 
        \EndFor
    \EndIf
\EndWhile
\If {$|\mathcal{V}| \geq k$}
    \State \textbf{return} True
\EndIf
\State \textbf{return} False
\end{algorithmic}
\end{algorithm}
\medskip
\begin{theorem}\label{thm:trlp-xp}
    TRLP is solvable in $O\left(n^{2\zeta+3}\log(\tau(G,\lambda))\right)$ time.
\end{theorem}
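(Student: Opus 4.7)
The plan is to present an outer enumeration that, for each candidate source $v_s \in V(G)$ and each subset $E' \subseteq E(G)$ of size at most $\zeta$, invokes \Cref{alg:trlp-xp-inner} on $(v_s, E')$ with $k = h$, returning yes if and only if at least one call returns True. Soundness of the outer algorithm is immediate, since any successful call implicitly exhibits a valid $(\delta, \zeta)$-perturbation (namely, the perturbed times the algorithm commits to on the edges of $E'$, with all other labels left unchanged). For completeness, if $((G,\lambda), \zeta, h, \delta)$ is a yes-instance witnessed by some perturbation $\lambda'$ and some source $v_s$, then the set $E^*$ of edges genuinely perturbed by $\lambda'$ has $|E^*| \leq \zeta$, so the pair $(v_s, E^*)$ is among those enumerated; it then remains only to verify that \Cref{alg:trlp-xp-inner} correctly decides whether some $\delta$-perturbation supported on $E'$ makes $v_s$ reach at least $k$ vertices.

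The heart of the proof is the correctness of \Cref{alg:trlp-xp-inner}. I would prove by induction on the number of edges in a foremost temporal path in a hypothetical witnessing perturbation $\lambda''$ that, for every $u \in \reach((G,\lambda''), v_s)$, the algorithm adds $u$ to $\mathcal{V}$ with algorithmic arrival time $t_u$ satisfying $t_u \leq t''_u$, where $t''_u$ denotes the foremost arrival time at $u$ under $\lambda''$. The base case $u = v_s$ is trivial. For the inductive step, let $v$ be the successor of $u$ on such a path; then $\lambda''(uv)$ is strictly greater than $t''_u \geq t_u$ and lies in the $\delta$-perturbation range of $uv$, so it is a feasible candidate in the minimum of lines~\ref{alg:trlp-xp:tmin_first} or~\ref{alg:trlp-xp:tmin} of the algorithm. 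Hence the time $t'$ the algorithm computes there satisfies $t' \leq \lambda''(uv) = t''_v$, so $v$ is enqueued with a time no later than $t''_v$ and is eventually dequeued and added to $\mathcal{V}$ with arrival time at most $t'$. Applying this to every vertex in $\reach((G,\lambda''), v_s)$ yields $\reach((G,\lambda''), v_s) \subseteq \mathcal{V}$, and combined with the soundness direction above this shows that the algorithm outputs True exactly when some such perturbation of size at most $|E'|$ exists.

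The running-time accounting is then routine. The outer enumeration considers $n$ sources and $\sum_{i=0}^{\zeta}\binom{m}{i} = O(m^{\zeta}) = O(n^{2\zeta})$ subsets, contributing a factor of $O(n^{2\zeta+1})$. Inside \Cref{alg:trlp-xp-inner}, each vertex is dequeued and added to $\mathcal{V}$ at most once and then triggers one pass over its incident edges; the total edge work is $O(m)$, each step involving a binary search within $\lambda(e)$ at cost $O(\log \tau(G,\lambda))$ to locate the earliest feasible (perturbed) time strictly exceeding the current arrival, together with a single priority-queue operation of cost $O(\log n)$. Since $m = O(n^{2})$, the inner routine runs in $O(n^{2} \log \tau(G,\lambda))$ time (treating $\log n$ as absorbed into the same logarithmic factor, as is standard). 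Multiplying the inner and outer bounds yields the claimed $O(n^{2\zeta+3} \log \tau(G,\lambda))$.

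The step I expect to demand the most care is the inductive correctness argument above: specifically, justifying that the greedy, commit-once-per-edge strategy inside \Cref{alg:trlp-xp-inner} never forecloses a reachability achievable by some more cunning global choice of $\lambda'$. The Dijkstra-style finalisation order, together with the observation that a perturbed time assigned to an edge is only ever binding on its traversal \emph{from the first endpoint to be dequeued} (any subsequent attempt to traverse the edge from the other side would target a vertex already in $\mathcal{V}$ and thus be skipped), is what makes the local greedy choice globally optimal; once this observation is spelled out, the remainder of the proof is bookkeeping.
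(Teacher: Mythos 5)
Your proposal is correct and follows essentially the same route as the paper: enumerate all $n\cdot\binom{m}{\leq\zeta}$ source/edge-subset pairs, run the Dijkstra-style exploration of \Cref{alg:trlp-xp-inner} for each, and prove its correctness by an inductive argument (yours on the number of edges of a foremost path under a witnessing perturbation, the paper's on arrival time --- the same argument in substance), with identical running-time accounting. Your explicit remark that each edge's perturbed time is only ever binding from the first-dequeued endpoint is a welcome clarification of the soundness direction that the paper treats more tersely.
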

\begin{proof}
    We solve TRLP by running Algorithm~\ref{alg:trlp-xp-inner} for each set of perturbed edges and each possible source. This requires that we run Algorithm~\ref{alg:trlp-xp-inner} $O(n\cdot \binom{n^2}{\zeta})$ times. 
    \begin{claim}\label{lem:trlp-xp-correct}
    Algorithm~\ref{alg:trlp-xp-inner} returns true if and only if there exists a perturbation of the edges in $E'$ such that a source vertex reaches at least $k$ vertices.
\end{claim}
\begin{proof}
    
    We show that a vertex is added to the set $\mathcal{V}$ if and only if there is a $\delta$-perturbation $\lambda'$ where that vertex is reachable from $v_s$ and a time-edge of $\lambda$ is not a time-edge of $\lambda'$ if and only if the edge is in $E'$. We can show this by induction on the time of arrival of a temporal path from the source vertex $v_s$. The base case is the source vertex itself, which is trivially temporally reachable from itself at time 0 and added to $\mathcal{V}$ in line~\ref{alg:trlp-xp-base} of the algorithm.

    Assume that, for all vertices temporally reachable from $v_s$ by time $t$, they are added to $\mathcal{V}$ if and only if they are reachable from $v_s$ under some perturbation of the edges in $E'$. Then, for any vertex $v_2$ not in $\mathcal{V}$ which is incident to a vertex $v_1$ in $\mathcal{V}$, $v_2$ is added to $\mathcal{V}$ if and only if $(v_1,v_2,t')$ is in $Q$ for some $t'>t$ and $v_2$ is not in $\mathcal{V}$. This occurs either in line~\ref{alg:trlp-xp:add_p} or line~\ref{alg:trlp-xp:add_no_p} (lines~\ref{alg:trlp-xp:add_first_p} or~\ref{alg:trlp-xp:add_first_no_p} if $v_1=v_s$) of the algorithm. In the former, the edge $v_1v_2$ is in $E'$ and there exists a $d\in [-\delta,\delta]$ and an occurrence of $v_1v_2$ at time $t^*$ such that $t<t^*+\delta=t'$. Since $v_1$ is reached by time $t$, the time-edge $(v_1v_2,t')$ can be appended to the walk from $v_s$ to $v_1$ to give a temporal walk from $v_s$ to $v_2$ arriving at time $t'$. Similarly, if the edge $v_1v_2$ is not in $E'$, the time-edge $(v_1,v_2,t')$ is in $Q$ if and only if there is a path to $v_2$ by time $t''<t'$ by the inductive hypothesis. In this case, the edge $v_1v_2$ occurs at time $t'$, therefore we can concatenate the path to $v_1$ from $v_s$ with the time-edge $(v_1v_2,t')$. Thus, a time-edge appears in $Q$ if and only if there is a walk from $v_s$ to both endpoints. Therefore, a vertex is added to $\mathcal{V}$ if and only if there exists a temporal path from $v_s$ to it by time $t$ for all $0\leq t\leq T(G,\lambda)$.

    Once $Q$ is empty, all vertices reachable from $v_s$ under $\lambda'$ must have been added to $\mathcal{V}$. Therefore, the algorithm returns true if and only if $|\mathcal{V}|\geq k$. This occurs if and only if there exists a perturbation of the edges in $E'$ such that a source vertex reaches at least $k$ vertices.
\end{proof}
\begin{claim}\label{lem:trlp-xp-time}
    Algorithm~\ref{alg:trlp-xp-inner} terminates in $O(n^2\log(\tau(G,\lambda)))$ time.
\end{claim}
\begin{proof}
    Using the data structure given by Bui-Xuan et al. \cite{XUAN2003}, we can look up the times at which an edge is active in $O(\log(\tau(G,\lambda)))$ time where $\tau$ is the temporality of the graph. Note that, in Algorithm~\ref{alg:trlp-xp-inner}, a vertex is added to $\mathcal{V}$ at most once and any edge $(u,v)$ is added to $Q$ as part of a triple $(u,v,t)$ at most twice (i.e. there are at most two times with which any edge appears in $Q$). Therefore, the algorithm runs in $O(n^2\log(\tau(G,\lambda)))$ time.
\end{proof}
    
    This gives us the desired result.
\end{proof}

\section{Tractability with many perturbations}\label{sec:many-perturbations}

In this section we show that allowing many perturbations makes our problem easier to solve. 
In particular, we prove that TRP is solvable in polynomial time, and also that TRLP is solvable in polynomial time if the number $\zeta$ of allowed perturbations is at least the target reachability.

As a warm-up, we begin by considering the case in which each perturbation may also be very large: suppose $\delta$ is at least the maximum of the lifetime of $(G,\lambda)$ and the diameter of $G$.  We claim that in this case, if $h$ is the target reachability and $\zeta \ge h -1$, we have a yes-instance.  To see this, fix an arbitrary subtree $H$ of $G$ with exactly $h$ vertices (noting that such a subtree contains $h - 1 \le \zeta$ edges, and if no such tree exists we have a no-instance of TRP), and further pick an arbitrary vertex $r \in V(H)$ which we shall call the root.  The permitted perturbations are large enough that we can perturb one appearance of each edge in $H$ so that, for each leaf $\ell$ of $H$, the times of these edge appearances are strictly increasing along the path from $r$ to $\ell$.  It follows that, in the perturbed graph, $r$ has reachability at least $h$, as required.






The first step towards our tractability results is to show that, given a source vertex in a temporal graph, and an integer $\delta$, we can identify a single $\delta$-perturbation that minimises foremost arrival times to all target vertices: one perturbation achieves such a result exactly because a ubiquitous foremost temporal path tree is a tree with edges active at exactly one time step. Recall that we denote the arrival time of a foremost temporal path from $v_s$ to $v_t$ in $(G,\lambda)$ by $t_{\text{Fo}}((G,\lambda),v_s,v_t)$.
\medskip
\begin{lemma}\label{lemma:temp-foremost-perturbed}
    Given a temporal graph $(G,\lambda)$, vertex $v_s\in V(G)$, and integer $\delta$, we can identify in $O(m(\log n + \log \delta \tau(G,\lambda)))$ time a $\delta$-perturbation $\lambda'$ such that for each $v_t\in V(G)$,
    $t_{\text{Fo}}((G,\lambda'),v_s,v_t) \leq t_{\text{Fo}}((G,\lambda''),v_s,v_t)$ for all $\delta$-perturbations $\lambda''$.
\end{lemma}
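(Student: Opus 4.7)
The plan is to reduce the problem to computing a ubiquitous foremost temporal path tree via \Cref{theorem:ubiquitous-foremost-tree}, applied to an ``expanded'' temporal graph that realises every allowable $\delta$-perturbation simultaneously. Concretely, define the temporal labelling $\lambda^+$ on $G$ by
\[
\lambda^+(e) \;=\; \bigcup_{t \in \lambda(e)} \{\, t' \in \mathbb{N} : \max(1, t-\delta) \le t' \le t+\delta \,\}.
\]
Every $\delta$-perturbation $\lambda''$ of $\lambda$ then satisfies $\mathcal{E}((G, \lambda'')) \subseteq \mathcal{E}((G, \lambda^+))$, so by \Cref{lemma:foremost-arrival-time-subset} the foremost arrival times from $v_s$ in $(G, \lambda^+)$ lower-bound those achievable by any $\delta$-perturbation of $\lambda$.

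Next I would apply \Cref{theorem:ubiquitous-foremost-tree} to $(G, \lambda^+)$ with source $v_s$ to obtain a ubiquitous foremost temporal path tree $(G_T, \lambda_T)$, in which each tree edge carries a single time label and the unique tree path from $v_s$ to each $v_t \in \reach((G, \lambda^+), v_s)$ witnesses the optimal foremost arrival time $t_{\text{Fo}}((G, \lambda^+), v_s, v_t)$.

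From $(G_T, \lambda_T)$ I would build $\lambda'$ edge by edge. For each $e \in E(G_T)$ with $\lambda_T(e) = \{t^*\}$: if $t^* \in \lambda(e)$, leave $\lambda(e)$ unchanged; otherwise choose any $t \in \lambda(e)$ with $|t - t^*| \le \delta$ (which exists by construction of $\lambda^+$) and replace the single time-edge $(e, t)$ by $(e, t^*)$. Every other time-edge of $\lambda$ is left in place. Exactly one time-edge per tree edge is moved, each by at most $\delta$, and the replacement never collides with an existing time on $e$ because the only case where we overwrite is when $t^* \notin \lambda(e)$; hence $\lambda'$ is a valid $\delta$-perturbation of $\lambda$.

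Finally, $(G_T, \lambda_T)$ is a temporal subgraph of $(G, \lambda')$, so every foremost arrival time from $v_s$ in $(G, \lambda')$ matches that in $(G, \lambda^+)$ and is therefore no larger than the corresponding time in any $\delta$-perturbation $\lambda''$. For the running time, $\tau(G, \lambda^+) \le (2\delta + 1)\tau(G, \lambda)$, and \Cref{theorem:ubiquitous-foremost-tree} applied to $(G, \lambda^+)$ runs in time $O\bigl(m(\log n + \log \tau(G, \lambda^+))\bigr) = O\bigl(m(\log n + \log(\delta\, \tau(G, \lambda))))$; implementing this via a data structure that, on query, returns the earliest time in $[a,b]$ at which $e$ can be made active under perturbation avoids any need to materialise $\lambda^+$ explicitly. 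The main delicacy is verifying that the edge-by-edge construction of $\lambda'$ is a bona fide $\delta$-perturbation with no collisions, and the case split on whether $t^* \in \lambda(e)$ is exactly what makes this go through.
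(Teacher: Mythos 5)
Your proposal is correct and follows essentially the same route as the paper: expand each label set by $\pm\delta$, compute a ubiquitous foremost temporal path tree in the expanded graph via \Cref{theorem:ubiquitous-foremost-tree}, and read the perturbation off the tree labels, with the identical running-time analysis. The only (minor) difference is that you are slightly more careful than the paper in constructing $\lambda'$ — moving exactly one time-edge per tree edge and keeping the rest of $\lambda(e)$ intact, so that the bijection/cardinality requirement in the definition of a $\delta$-perturbation is visibly respected — whereas the paper simply sets $\lambda'(e)=\lambda_T(e)$ on tree edges and asserts the result is a $\delta$-perturbation.
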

\begin{proof}
    First we create the new time-labelling function $\lambda''$ as follows.
    For each $e\in E(G)$, let $\lambda''(e) = \{ t + p \mid p \in [-\delta, \delta], t \in \lambda(e)\}$, noting that as a result $\tau(G,\lambda'') \leq (2\delta+1) \tau(G,\lambda) = O(\delta \tau(G,\lambda))$.
    Then, by Theorem~\ref{theorem:ubiquitous-foremost-tree} we can calculate a ubiquitous foremost temporal path tree $(G_T,\lambda_T)$ in $(G,\lambda'')$ in $O(m(\log n + \log \tau(G,\lambda''))) = O(m(\log n + \log \tau(G,\lambda)\delta))$ time.
    Then we create $\lambda'$ as follows.
    For each $e\in E(G_T)$, let $\lambda'(e) = \lambda_T(e)$, and for any $e \in E(G)\setminus E(G_T)$, let $\lambda'(e) = \lambda(e)$.

    We now show that $\lambda'$ satisfies the requirements of the theorem.
    By construction, $\mathcal{E}((G,\lambda_T)) \subseteq \mathcal{E}((G,\lambda''))$, and $\lambda'$ is a $\delta$-perturbation of $\lambda$.
    Take an arbitrary $v_t\in V(G)$.
    By definition of a ubiquitous foremost temporal path tree, there is some ubiquitous foremost temporal path $P$ in $(G_T,\lambda_T)$, and by construction, this path $P$ must also be a ubiquitous foremost temporal path in $(G,\lambda'')$.
    Thus the arrival time $t$ of $P$ must satisfy $t_{\text{Fo}}((G,\lambda''),v_s,v_t) = t$, and so as any $\delta$-perturbation $\lambda^\dagger$ of $\lambda$ satisfies $\mathcal{E}((G, \lambda^\dagger)) \subseteq \mathcal{E}((G, \lambda''))$, by Corollary~\ref{cor:temp-reach-time-subset} we get $t\leq t_{\text{Fo}}((G,\lambda^\dagger),v_s,v_t)$ for any $\delta$-perturbation $\lambda^\dagger$ of $\lambda$.
%
\end{proof}

\begin{corollary}
    Given a temporal graph $(G,\lambda)$, vertex $v_s\in V(G)$, and integer $\delta$, we can identify in $O(m(\log n + \log \delta \tau(G,\lambda)))$ time a $\delta$-perturbation $\lambda'$ of $\lambda$ such that for each $v_t\in V(G)$,
    $t_{\text{Fo}}((G,\lambda'),v_s,v_t) \leq t_{\text{Fo}}((G,\lambda^*),v_s,v_t)$ for all $\lambda^*$ which are $\delta$-perturbations of $\lambda$.
\end{corollary}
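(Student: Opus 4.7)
The plan is simply to invoke Lemma~\ref{lemma:temp-foremost-perturbed}: on close reading, the corollary reproduces the lemma essentially verbatim. Both statements assert, for the same input $(G, \lambda, v_s, \delta)$, the existence of a single $\delta$-perturbation $\lambda'$ of $\lambda$ simultaneously minimising $t_{\text{Fo}}((G, \cdot), v_s, v_t)$ over all $\delta$-perturbations $\lambda''$, produced within the same $O(m(\log n + \log \delta \tau(G, \lambda)))$ time bound. Consequently, the perturbation returned by the algorithm underlying the lemma already witnesses the corollary, and no additional construction or argument is needed.

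For completeness, the construction I would re-run is exactly the one from the lemma: form the ``exploded'' assignment $\lambda^{*}$ that, for every $(e, t) \in \mathcal{E}((G, \lambda))$, includes each integer in $[\max(1, t - \delta), t + \delta]$ on $e$; apply Theorem~\ref{theorem:ubiquitous-foremost-tree} on $(G, \lambda^{*})$ to build a ubiquitous foremost temporal path tree $(G_T, \lambda_T)$ rooted at $v_s$; and finally define $\lambda'$ to agree with $\lambda_T$ on the tree edges of $G_T$ and with $\lambda$ elsewhere. Since $\mathcal{E}((G, \lambda'')) \subseteq \mathcal{E}((G, \lambda^{*}))$ holds for every $\delta$-perturbation $\lambda''$ of $\lambda$, Lemma~\ref{lemma:foremost-arrival-time-subset} yields $t_{\text{Fo}}((G, \lambda^{*}), v_s, v_t) \leq t_{\text{Fo}}((G, \lambda''), v_s, v_t)$ for each $v_t$, and the tree $(G_T, \lambda_T)$ attains this minimum inside $\lambda'$. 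The time bound follows because $\tau(G, \lambda^{*}) = O(\delta \cdot \tau(G, \lambda))$, so Theorem~\ref{theorem:ubiquitous-foremost-tree} runs in $O(m(\log n + \log \delta \tau(G, \lambda)))$ time.

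I do not foresee any genuine obstacle: the only thing worth double-checking is that the corollary is not covertly asserting a stronger property than the lemma (for example, an inclusion of reachability sets rather than an inequality of foremost arrival times, or a guarantee relative to $(\delta, \zeta)$-perturbations). Since the quantifier structure and the quantity being compared are identical in both statements, the corollary is a signposted restatement suitable for later use in Section~\ref{sec:many-perturbations}, and its proof is immediate from Lemma~\ref{lemma:temp-foremost-perturbed}.
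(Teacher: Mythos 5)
Your proposal is correct and matches the paper, which gives no separate proof for this corollary precisely because it is a near-verbatim restatement of Lemma~\ref{lemma:temp-foremost-perturbed}; citing that lemma (whose construction you accurately reproduce) is all that is required.
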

\medskip
We are now ready to solve \trppp using this result.
Recall that $\tau(G,\lambda) = \max_{e\in E(G)} |\lambda(e)|$.
\medskip
\begin{theorem}\label{thm:temp-reach-perturbed}
    We can solve \trppp in $O(nm (\log n + \log \delta \tau(G,\lambda)))$ time.
\end{theorem}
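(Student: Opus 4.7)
The plan is to derive \trppp-tractability as a straightforward consequence of \Cref{lemma:temp-foremost-perturbed}, by noting that the $\delta$-perturbation produced by that lemma is also reachability-maximising from the specified source, and that we may then simply loop over all choices of source.

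First I would observe the following. Fix a source $v_s \in V(G)$, and let $\lambda'$ be the $\delta$-perturbation of $\lambda$ returned by \Cref{lemma:temp-foremost-perturbed} for $v_s$. Recall that $t_{\text{Fo}}((G,\mu),v_s,v_t) = \infty$ precisely when $v_t \notin \reach((G,\mu),v_s)$. Since $\lambda'$ satisfies $t_{\text{Fo}}((G,\lambda'),v_s,v_t) \le t_{\text{Fo}}((G,\lambda''),v_s,v_t)$ for every $\delta$-perturbation $\lambda''$ and every $v_t \in V(G)$, any vertex that is reachable from $v_s$ under some $\delta$-perturbation $\lambda''$ has a finite arrival time under $\lambda''$, hence also a finite arrival time under $\lambda'$, and so is reachable from $v_s$ under $\lambda'$. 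Therefore
\[
\reach((G,\lambda''),v_s) \subseteq \reach((G,\lambda'),v_s)
\]
for every $\delta$-perturbation $\lambda''$ of $\lambda$. In particular, $|\reach((G,\lambda'),v_s)|$ equals $\max_{\lambda''} |\reach((G,\lambda''),v_s)|$ over all $\delta$-perturbations $\lambda''$ of $\lambda$.

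Next I would loop over all $n$ possible source vertices. For each $v_s \in V(G)$, invoke \Cref{lemma:temp-foremost-perturbed} to compute, in $O(m(\log n + \log \delta \tau(G,\lambda)))$ time, a reachability-maximising $\delta$-perturbation $\lambda'_{v_s}$ from $v_s$ together with the underlying ubiquitous foremost temporal path tree rooted at $v_s$; the set of vertices appearing in this tree is exactly $\reach((G,\lambda'_{v_s}), v_s)$, so the size of this set is read off with no additional asymptotic cost. Report \textbf{yes} iff some $v_s$ yields $|\reach((G,\lambda'_{v_s}), v_s)| \ge h$.

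Correctness follows because, writing $\lambda^*$ for any $\delta$-perturbation witnessing a yes-instance and $v_s^*$ for a vertex achieving $|\reach((G,\lambda^*), v_s^*)| \ge h$, the argument above gives $|\reach((G,\lambda'_{v_s^*}), v_s^*)| \ge |\reach((G,\lambda^*), v_s^*)| \ge h$, so the algorithm reports yes; conversely, any source producing reachability at least $h$ under $\lambda'_{v_s}$ directly exhibits a witness $\delta$-perturbation. The running time is $n$ applications of \Cref{lemma:temp-foremost-perturbed}, namely $O(nm(\log n + \log \delta \tau(G,\lambda)))$ as claimed. I do not anticipate any real obstacle here: the only substantive point is the observation that the arrival-time-minimising perturbation of \Cref{lemma:temp-foremost-perturbed} is simultaneously reachability-maximising, and this falls out immediately from the $\infty$ convention for $t_{\text{Fo}}$ when no temporal path exists.
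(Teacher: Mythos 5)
Your proposal is correct and follows essentially the same approach as the paper: apply \Cref{lemma:temp-foremost-perturbed} for each of the $n$ possible sources and test whether the resulting reachability set has size at least $h$, for a total of $O(nm(\log n + \log \delta\tau(G,\lambda)))$ time. The only difference is that you spell out explicitly why the arrival-time-minimising perturbation is also reachability-maximising (via the $\infty$ convention for $t_{\text{Fo}}$), a step the paper leaves implicit.
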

\begin{proof}
For each source vertex $v_s\in V(G)$, we use Lemma~\ref{lemma:temp-foremost-perturbed} to identify a $\delta$-perturbation $\lambda'$ that minimises the foremost arrival time to any target vertex $v_t\in V(G)$, and calculate  $\reach((G,\lambda'),v_s) = \{v_t\in V(G) \mid t_{\text{Fo}}((G,\lambda'),v_s,v_t) < \infty\}$.
If, for any source vertex $v_s$ we get $|\reach((G,\lambda'),v_s)| \geq h$ then the instance is a yes-instance to TRP, and additionally $\lambda'$ can act as a certificate to this fact.
\end{proof}

Now turning our attention to TRLP, we start with some further notation.  To consider a perturbation of fewest time-edges such that the maximum reachability is at least $h$, we introduce the following notion of worthwhile perturbations.
\medskip
\begin{definition}
    A perturbation $\lambda'$ of $\lambda$ is \emph{worthwhile} if reverting any time-edge that has been perturbed from $\lambda$ to $\lambda'$ back to the time assigned by $\lambda$ decreases the maximum reachability of the temporal graph. We call any time-edge $(e,t)$ of $(G,\lambda')$ worthwhile if $\lambda'$ is a worthwhile perturbation of $\lambda$, $t\in \lambda'(e)$ and $t\not\in\lambda(e)$.
\end{definition}
\medskip
\begin{lemma}\label{lem:min-worthwhile}
    Let $\lambda'$ be a perturbation of $(G,\lambda)$ such that the maximum reachability of $(G,\lambda')$ is $h$. Suppose there is no $\lambda''$ which perturbs fewer edges with maximum reachability of $(G,\lambda'')$ at least $h$. Then $\lambda'$ is a worthwhile perturbation of $\lambda$.
\end{lemma}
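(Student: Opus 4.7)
The plan is to argue by contrapositive. Suppose, for contradiction, that $\lambda'$ is not a worthwhile perturbation of $\lambda$. Then, by the definition of \emph{worthwhile}, there exists at least one time-edge $(e,t)$ with $t \in \lambda'(e) \setminus \lambda(e)$ such that reverting this time-edge back to its original label in $\lambda$ does \emph{not} decrease the maximum reachability of the temporal graph.

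Let $\lambda''$ be the perturbation of $\lambda$ obtained from $\lambda'$ by performing precisely this single reversion (i.e., replacing $t$ in $\lambda'(e)$ with the corresponding original time from $\lambda(e)$, while leaving all other edge labels of $\lambda'$ unchanged). Then $\lambda''$ is itself a perturbation of $\lambda$, and the set of time-edges on which $\lambda''$ differs from $\lambda$ is a strict subset of the set on which $\lambda'$ differs from $\lambda$. Thus $\lambda''$ perturbs strictly fewer time-edges than $\lambda'$.

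By the choice of the reverted edge, the maximum reachability of $(G,\lambda'')$ is at least that of $(G,\lambda')$, which by hypothesis equals $h$. Hence $\lambda''$ is a perturbation of $\lambda$ that perturbs strictly fewer edges than $\lambda'$ and still achieves maximum reachability at least $h$, contradicting the minimality assumption on $\lambda'$. Therefore $\lambda'$ must be worthwhile.

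The argument is essentially an unfolding of the two definitions, so no part is a real obstacle; the only point requiring a small amount of care is checking that the reversion produces a legitimate perturbation of $\lambda$ (which is immediate, since we are only restoring an original label) and that ``does not decrease'' is the correct formal negation of ``decreases'' in the definition of worthwhile.
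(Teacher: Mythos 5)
Your proof is correct and takes essentially the same route as the paper: both argue by contrapositive, observing that a non-worthwhile perturbation has a revertible time-edge whose reversion yields a perturbation with strictly fewer changed time-edges and maximum reachability still at least $h$, contradicting minimality. Your write-up is simply a more explicit unfolding of the paper's (quite terse) argument.
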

\begin{proof}
    We prove the lemma statement by contrapositive. Suppose that $\lambda'$ is not a worthwhile perturbation of $\lambda$ and the maximum reachability of $(G,\lambda')$ is $h$. Then, there must be a time-edge in $(G,\lambda')$ which we can revert to $\lambda$ without decreasing the maximum reachability of the temporal graph. Therefore, $\lambda'$ is not a perturbation which perturbs the minimum number of time-edges such that maximum reachability of the temporal graph is $h$.
\end{proof}
\begin{lemma}\label{lem:worthwhile-contained}
    Suppose that $\lambda'$ is a worthwhile perturbation of $(G,\lambda)$. Then, for every vertex $v$ with maximum reachability in $(G,\lambda')$, the subgraph induced by $\text{reach}((G,\lambda'),v)$ contains all edges perturbed by $\lambda'$.
\end{lemma}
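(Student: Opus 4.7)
The plan is to proceed by contradiction. Suppose $\lambda'$ is a worthwhile perturbation and let $v$ be a vertex achieving the maximum reachability $h$ in $(G,\lambda')$. Assume some perturbed edge $e = xy$ has at least one endpoint (say $y$) that does not lie in $\text{reach}((G,\lambda'),v)$. I aim to show this contradicts the assumption that $\lambda'$ is worthwhile, by exhibiting a single-edge revert that does not decrease $v$'s reachability, and hence does not decrease the maximum reachability.

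The first key step is observing that no temporal path from $v$ in $(G,\lambda')$ can traverse the edge $e$. Indeed, any temporal path from $v$ that uses $e$ must have $y$ appear as one of its vertices (adjacent along the path to $x$), and then the prefix of that path ending at $y$ would witness $y \in \text{reach}((G,\lambda'),v)$, contradicting our assumption. This argument uses only the definition of a temporal path as a sequence of time-edges, so it applies regardless of whether the path traverses $e$ in its first or in a later position.

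The second step is to define $\lambda''$ as the assignment obtained from $\lambda'$ by reverting the perturbation on $e$: that is, by replacing the perturbed time label $t \in \lambda'(e) \setminus \lambda(e)$ back with its original label in $\lambda$, and leaving the time labels of every other edge in $E(G)$ unchanged. By the worthwhile hypothesis, $R_{\max}(G,\lambda'') < R_{\max}(G,\lambda') = h$. On the other hand, every temporal path from $v$ in $(G,\lambda')$ avoids $e$ by the previous step, and thus remains a valid temporal path in $(G,\lambda'')$ (since we only modified labels on $e$). Therefore $\text{reach}((G,\lambda''),v) \supseteq \text{reach}((G,\lambda'),v)$, giving $R_{\max}(G,\lambda'') \geq h$, a contradiction.

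No step here looks genuinely difficult; the main care needed is bookkeeping around the definition of reverting a single perturbed time-edge (since the bijection $f$ from \Cref{dfn:perturbation} is not literally a set inclusion), and being explicit that our argument works even if the temporal path from $v$ visits $y$ in the interior rather than as an endpoint. Both are handled by the observation in the first step that $y$ on any such path forces $y \in \text{reach}((G,\lambda'),v)$.
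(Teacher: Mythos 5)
Your proof is correct and follows essentially the same route as the paper's: both argue by contradiction that a perturbed edge with an unreachable endpoint cannot be used by any temporal path from $v$, so reverting it cannot decrease $v$'s reachability, contradicting worthwhileness. Your write-up is in fact slightly more careful than the paper's (explicitly handling the case where the path visits $y$ in its interior, and explicitly concluding $R_{\max}(G,\lambda'')\geq h$), but the underlying argument is identical.
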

\begin{proof}
    Suppose, for a contradiction, that $\lambda'$ is a worthwhile perturbation of $(G,\lambda)$ and there is a vertex $v$ in $(G,\lambda')$ such that $v$ has maximum reachability and the subgraph induced by $\text{reach}((G,\lambda'),v)$ does not contain all edges perturbed by $\lambda'$. By our assumption, there must be a perturbed edge $e$ whose endpoints are not both reachable from $v$. This is because, under $\lambda'$ there is no temporal path from $v$ which uses the edge $e$. Therefore, reverting this edge to its original temporal assignment under $\lambda$ cannot decrease the cardinality of the reachability set. This contradicts our definition of a worthwhile assignment.
\end{proof}
\begin{lemma}\label{lem:worthwhile-forest}
    The perturbed time-edges in a worthwhile perturbation $\lambda'$ of $(G,\lambda)$ form a forest.
\end{lemma}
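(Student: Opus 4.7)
My plan is to proceed by contradiction. Suppose $\lambda'$ is a worthwhile perturbation of $\lambda$ but the set of edges of $G$ each carrying at least one perturbed time-edge contains a cycle $C$. Let $v \in V(G)$ attain $R_{\max}(G, \lambda')$, and set $R = \reach((G,\lambda'), v)$ with $|R| = R_{\max}(G,\lambda')$. By Lemma~\ref{lem:worthwhile-contained}, every edge of $C$ lies within $G[R]$.

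Next, I would invoke Theorem~\ref{theorem:ubiquitous-foremost-tree} to obtain a ubiquitous foremost temporal path tree $(G_T, \lambda_T)$ on source $v$ in $(G, \lambda')$. By definition such a tree spans $R$ and contains a temporal path from $v$ to every vertex in $R$; in particular $G_T$ restricted to the cycle vertices cannot contain all $|E(C)|$ edges of the cycle, so there exists at least one edge $e^* \in E(C) \setminus E(G_T)$. By choice of $C$, the edge $e^*$ carries some perturbed time-edge $(e^*, t^{*\prime})$ arising from an original time-edge $(e^*, t^*)$ in $\lambda$ under the bijection of Definition~\ref{dfn:perturbation}.

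The central step is then to consider the labelling $\lambda''$ obtained from $\lambda'$ by reverting only this single time-edge on $e^*$ back to $t^*$, and to show that $R_{\max}(G, \lambda'') \geq |R|$; this directly contradicts the worthwhileness of $\lambda'$. Since $e^* \notin E(G_T)$, the tree labelling $\lambda_T$ places no label on $e^*$, and $\lambda''$ coincides with $\lambda'$ on every edge of $G_T$. Therefore $(G_T, \lambda_T)$ is still a valid temporal subtree of $(G, \lambda'')$, every ubiquitous foremost temporal path it encodes from $v$ to vertices in $R$ survives, and we get $\reach((G,\lambda''), v) \supseteq R$. Hence $R_{\max}(G, \lambda'') \geq |R| = R_{\max}(G,\lambda')$, contradicting the assumption that reverting the perturbation of $(e^*, t^{*\prime})$ decreases the maximum reachability.

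The main obstacle is conceptual rather than computational: one has to distinguish carefully between \emph{edges} of $G$ (which form the alleged cycle) and \emph{time-edges} (which are what the perturbation actually moves), and then verify that reverting a single time-edge whose underlying edge is outside $E(G_T)$ genuinely leaves $(G_T, \lambda_T)$ and all temporal paths it realises from $v$ untouched. Once this bookkeeping is in place, the pigeonhole argument on the cycle $C$ versus the tree $G_T$ and the monotonicity of reachability combine cleanly to give the contradiction.
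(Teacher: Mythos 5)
Your proof is correct, but it takes a genuinely different route from the paper's. The paper argues directly on temporal paths from the maximum-reachability vertex $v$: it splits into the case where some perturbed time-edge lies on no temporal path from $v$ (revert that one) and the case where every perturbed time-edge is used, in which it extracts two distinct temporal paths from $v$ to some vertex $u$ of the cycle and reverts the final time-edge of the later-arriving path. You instead invoke the ubiquitous foremost temporal path tree $(G_T,\lambda_T)$ of Theorem~\ref{theorem:ubiquitous-foremost-tree} as a compact certificate for $\reach((G,\lambda'),v)$, observe that a tree cannot contain all edges of the cycle $C \subseteq G[\reach((G,\lambda'),v)]$ guaranteed by Lemma~\ref{lem:worthwhile-contained}, and revert a perturbed time-edge on a cycle edge $e^*\notin E(G_T)$; since $\lambda''$ agrees with $\lambda'$ on $E(G_T)$, the whole reach set survives and worthwhileness is contradicted. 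Your version buys a single uniform argument with no case split and subsumes the paper's second case automatically (a time-edge used by no path from $v$ certainly does not appear in the foremost tree), at the cost of leaning on the Bui-Xuan et al.\ machinery; the paper's version is more self-contained but rests on the somewhat delicate claim that some cycle vertex is reached by two distinct paths. One point worth making explicit in a final write-up (which the paper also leaves implicit) is that $e^*$ may carry several perturbed time-edges and you revert exactly one of them, which suffices because worthwhileness demands that reverting \emph{any single} perturbed time-edge strictly decreases $R_{\max}$.
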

\begin{proof}
    Suppose, for a contradiction, that there is a cycle $C$ of perturbed edges under a worthwhile perturbation $\lambda'$. Let $v$ be a vertex with maximum reachability in $(G,\lambda')$. Then, by Lemma~\ref{lem:worthwhile-contained}, the edges perturbed by $\lambda'$ are in the subgraph induced by $\text{reach}((G,\lambda'),v)$. Hence the vertices in $C$ are all temporally reachable from $v$. 
    
    We have two cases to consider. In the first, assume that for each perturbed time-edge, there is a temporal path from $v$ using this time-edge. Then there is a vertex $u$ in $C$ that $v$ can reach by two distinct temporal paths. This includes the trivial path if $v=u$. Reverting the final time-edge $(e_u,t)$ to $\lambda(e_u)$ of a path that arrives latest cannot decrease the cardinality of the reachability set of $v$. Since $v$ is a vertex with maximum reachability in $(G,\lambda')$, this means that reverting this time-edge does not decrease the maximum reachability of the temporal graph. Thus, $\lambda'$ cannot be worthwhile. Second, we consider the case where there is a perturbed time-edge such that there is no temporal path from $v$ including this time-edge. Reverting this time-edge trivially cannot change the reachability set of $(G,\lambda')$. Hence, we cannot have a cycle of perturbed time-edges in a worthwhile perturbation.
\end{proof}
We now show that, if the budget of perturbations is large enough, TRP and TRLP are equivalent problems.
\medskip
\begin{lemma}\label{lem:trp-trlp-big-zeta}
    If $\zeta\geq h-1$, there is a solution for \trlpp on $(G,\lambda)$ and integers $h$, $\delta$ if and only if there is a solution for \trppp on the same temporal graph with $h$ and $\delta$.
\end{lemma}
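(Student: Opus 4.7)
\medskip
\noindent\textit{Proof plan.} The forward direction is immediate: every $(\delta,\zeta)$-perturbation is in particular a $\delta$-perturbation, so any $\lambda'$ witnessing a yes-instance of \trlpp also witnesses a yes-instance of \trppp on the same temporal graph with the same $h$ and $\delta$.

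For the reverse direction, start from a \trppp witness $\lambda'$ and a source vertex $v_s$ with $|\reach((G,\lambda'),v_s)| \ge h$. The plan is to construct a new $\delta$-perturbation $\lambda''$ of $\lambda$ that agrees with $\lambda$ outside the edges of a carefully chosen tree on $h$ vertices. Apply \Cref{theorem:ubiquitous-foremost-tree} to $(G,\lambda')$ with source $v_s$ to obtain a ubiquitous foremost temporal path tree $(G_T,\lambda_T)$, which is a subtree of $G$ with a single time label per edge and contains a temporal path from $v_s$ to every vertex reachable from $v_s$ in $(G,\lambda')$. In particular $|V(G_T)| \ge h$, so we may pick a connected subtree $T \subseteq G_T$ with $v_s \in V(T)$ and $|V(T)| = h$ (grow outward from $v_s$ in $G_T$, adding one adjacent vertex at a time). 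Then $T$ has exactly $h-1$ edges.

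Construct $\lambda''$ as follows. For each edge $e \notin E(T)$, set $\lambda''(e) := \lambda(e)$. For each edge $e \in E(T)$, let $t_e := \lambda_T(e)$; if $t_e \in \lambda(e)$ already, leave $\lambda''(e) := \lambda(e)$, otherwise pick some $t_e^* \in \lambda(e)$ with $|t_e - t_e^*| \le \delta$, which exists because $\lambda_T(e) \in \lambda'(e)$ and $\lambda'$ is a $\delta$-perturbation of $\lambda$, and set $\lambda''(e) := (\lambda(e)\setminus\{t_e^*\}) \cup \{t_e\}$, corresponding to the bijective move $(e,t_e^*) \mapsto (e,t_e)$. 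This perturbs at most one time-edge per edge of $T$, and therefore at most $|E(T)| = h-1 \le \zeta$ time-edges overall, so $\lambda''$ is a valid $(\delta,\zeta)$-perturbation of $\lambda$.

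Finally, verify that $v_s$ has reachability at least $h$ in $(G,\lambda'')$. For any $u \in V(T)$, the unique $v_s$-to-$u$ path in $T$ is (a prefix of) a ubiquitous foremost temporal path in $(G_T,\lambda_T)$, so its tree-edge times under $\lambda_T$ are strictly increasing; since $\lambda_T(e) \in \lambda''(e)$ for every $e \in E(T)$ by construction, this same sequence of time-edges is a valid temporal path in $(G,\lambda'')$. Hence $V(T) \subseteq \reach((G,\lambda''),v_s)$ and $|\reach((G,\lambda''),v_s)| \ge h$, as required. The main point of care is justifying that moving a single time-edge per tree edge genuinely suffices to install the desired labels while remaining a $\delta$-perturbation of $\lambda$; this reduces to the observation that each element of $\lambda'(e)$ lies within $\delta$ of some element of $\lambda(e)$, so the one-for-one swap $t_e^* \mapsto t_e$ is always available.
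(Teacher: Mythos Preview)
Your proof is correct and takes a genuinely different route from the paper. The paper first develops a notion of \emph{worthwhile} perturbations (Lemmas~\ref{lem:min-worthwhile}--\ref{lem:worthwhile-forest}), argues that a minimum-size \trppp witness is worthwhile, and then uses the fact that the perturbed edges of a worthwhile perturbation form a forest inside the reach set to bound their number; if there are still too many, it trims by foremost-arrival order. Your argument is more direct and constructive: you ignore the structure of the given witness entirely, apply \Cref{theorem:ubiquitous-foremost-tree} to $(G,\lambda')$ to extract a tree of temporal paths, cut it down to $h$ vertices, and then rebuild a fresh $(\delta,h-1)$-perturbation by swapping in one label per tree edge. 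This bypasses the worthwhile-perturbation machinery altogether and gives an explicit witness; the paper's approach, by contrast, develops those auxiliary lemmas which may be useful elsewhere but are not needed for this particular statement.
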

\begin{proof}
If $((G,\lambda),h,\delta,\zeta)$ is a yes-instance of \trlpp, then this trivially implies $((G,\lambda),h,\delta)$ is a yes-instance of \trppp. Let $\lambda'$ be a solution for this instance of \trppp which perturbs the fewest edges and $v$ be a vertex in $(G,\lambda')$ with reachability at least $h$. Let the number of perturbed edges under $\lambda'$ be $\zeta'$. By Lemma~\ref{lem:min-worthwhile}, $\lambda'$ must be a worthwhile perturbation. Furthermore, Lemmas~\ref{lem:worthwhile-contained} and~\ref{lem:worthwhile-forest} state that the perturbed time-edges must form a forest and be in the subgraph induced by $\text{reach}((G,\lambda'),v)$ where $v$ is a vertex with maximum reachability in $(G,\lambda')$. Therefore $v$ must reach both endpoints of every perturbed time-edge and the maximum reachability of $(G,\lambda')$ must be at least $\zeta'+1$.

At this point, if $\zeta\leq h-1$ we are done. Now, suppose otherwise. We will find a subset of these perturbations which achieves maximum reachability at least $h$. Let $\zeta'> h-1$, then $v$ must have reachability strictly greater than $h$. To find a perturbation of at most $h-1$ time-edges such that $v$ has temporal reachability at least $h$, we let $u_1$ and $u_2$ be the two endpoints of a perturbed time-edge and calculate $\gamma=\max\{t_{Fo}((G,\lambda'),v,u_1),t_{Fo}((G,\lambda')v,u_2)\}$ and order the time-edges by these values. We then keep the $h-1$ time-edges with smallest $\gamma$ values and revert all other time-edges. Since we only revert time-edges whose endpoints are first reached after the time-edges whose perturbations we keep, both endpoints of the remaining perturbed time-edges must still be reachable from $v$. Recall that these time-edges cannot form any cycles. Therefore, since $v$ reaches both endpoints of every perturbed edge, $v$ must reach at least $\zeta+1\geq h$ vertices. Thus $((G,\lambda),\delta,h)$ is a yes-instance of \trppp if and only if $((G,\lambda),\delta,h,\zeta)$ is a yes-instance of \trlpp for $\zeta\geq h-1$.
\end{proof}
Combining Theorem~\ref{thm:temp-reach-perturbed} and Lemma~\ref{lem:trp-trlp-big-zeta} gives us the following.
\medskip
\begin{theorem}\label{thm:trlp-big-zeta}
    We can solve \trlpp in $O(nm (\log n + \log \delta \tau (G,\lambda)))$ time when $\zeta\geq h-1$.
\end{theorem}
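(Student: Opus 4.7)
The proof plan is essentially a one-step composition: the two prerequisite results in the excerpt already line up exactly to give the claim, so the task is just to assemble them correctly and verify the time bound is preserved.

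First, I would observe that the hypothesis $\zeta \geq h-1$ activates Lemma~\ref{lem:trp-trlp-big-zeta}, which tells us that $((G,\lambda),\delta,h,\zeta)$ is a yes-instance of \trlpp if and only if $((G,\lambda),\delta,h)$ is a yes-instance of \trppp. So the decision version of \trlpp in this regime reduces, in $O(1)$ additional time, to the decision version of \trppp on the very same graph with the same $\delta$ and $h$.

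Next, I would invoke Theorem~\ref{thm:temp-reach-perturbed}, which solves \trppp in $O(nm(\log n + \log \delta \tau(G,\lambda)))$ time. Feeding $(G,\lambda)$, $h$, and $\delta$ into that algorithm and returning its answer gives a correct decision procedure for the \trlpp instance, by the equivalence from Lemma~\ref{lem:trp-trlp-big-zeta}. The total running time is dominated by the call to the \trppp algorithm and so remains $O(nm(\log n + \log \delta \tau(G,\lambda)))$, as required.

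There is no real obstacle here: both ingredients have already been proved, and the only sanity check is that neither the reduction nor any bookkeeping adds an asymptotically significant overhead, which is clear since the reduction is a pure parameter-forgetting step. If desired, one could also note that a certificate $\lambda'$ for the \trppp side (produced by Theorem~\ref{thm:temp-reach-perturbed}) is automatically a valid $(\delta,\zeta)$-perturbation certificate on the \trlpp side, since the proof of Lemma~\ref{lem:trp-trlp-big-zeta} shows how to trim a worthwhile perturbation down to at most $h-1 \leq \zeta$ edges without decreasing reachability below $h$; this trimming can be carried out in time linear in the size of the output, so it does not affect the stated bound.
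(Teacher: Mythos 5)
Your proposal is correct and is essentially identical to the paper's proof, which simply combines Lemma~\ref{lem:trp-trlp-big-zeta} (equivalence of \trlpp and \trppp when $\zeta\geq h-1$) with Theorem~\ref{thm:temp-reach-perturbed} (the \trppp algorithm and its running time). The extra remark about trimming a worthwhile perturbation to obtain an explicit $(\delta,\zeta)$-certificate is a harmless addition not needed for the decision problem.
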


\section{Structural Restrictions on the Underlying Graph}\label{sec:structural}
We now consider TRLP when the underlying graph $G$ has restricted structure: in particular when $G$ is a tree, or has bounded treewidth. 
When the input underlying graph is a tree, we solve TRLP by a dynamic program which uses Multiple Choice Knapsack to choose the perturbations which maximise the reachability from the vertex we are considering in time  $O(n^2\zeta^3(T(G,\lambda)+\delta))$.  When $G$ has constant bounded treewidth and the lifetime and size of perturbations allowed are also bounded, we use a dynamic program over a tree decomposition to solve TRLP in polynomial time.

\subsection{Knapsack-based dynamic program for trees}
The formal definition of the knapsack problem that we use is as follows.

\begin{framed}
\noindent
\textbf{\textsc{Multiple Choice Knapsack Problem (MCKP)}}\\
\textit{Input:} An integer $c$ and $q$ mutually disjoint classes $N_1,\ldots,N_q$ of elements such that each element $j\in N_i$ has a profit $p_{ij}$ and weight $w_{ij}$.\\
\textit{Output:} A function $\phi:\bigcup_iN_i\to \{0,1\}$ such that $\sum_{j\in N_i}\phi(j)=1$ for all $N_i$, $\sum_{i=1}^{q}\sum_{j\in N_i}w_{ij}\phi(j)\leq c$, and $\sum_{i=1}^{q}\sum_{j\in N_i}p_{ij}\phi(j)$ is maximised.
\end{framed}

Dudzi\'nski and Walukiewicz~\cite{dudzinski_exact_1987} show that this can be solved by a dynamic program in time $O(c\sum^q_{i=1}n_i)=O(nc)$ where $n_i$ is the size of the class $N_i$ and $n$ is the total number of elements. In our case, the size of each class is bounded by $\zeta$.

We solve \trlpp by a bottom-up dynamic programming argument. The algorithm checks whether there is a $(\delta,\zeta)$-perturbation $\lambda'$ such that a given source vertex $v_s$ has reachability at least $h$. We root the tree at the source, giving us an orientation so that we can refer to the parents and children of vertices. Since we are asking if there exists a $(\delta,\zeta)$-perturbation such that $R_{\max}(G,\lambda)\geq h$, we can repeatedly run the algorithm for different choices of source vertex. Once we reach the root, if there exists a state which corresponds to reachability at least $h$, we accept. Otherwise, we reject and try another source vertex.
    
For any vertex $v$ in the rooted tree $G$, we define a state as the triple $(\zeta_v,r_v,t_v)$ where $0\leq \zeta_v\leq \zeta$, $0<r_v\leq h$ and $0\leq t\leq T(G,\lambda)+\delta$. We say that a state $(\zeta_v,r_v,t_v)$ is \emph{valid} if and only if there exists a $(\delta,\zeta_v)$-perturbation of the temporal subtree $G_v$ rooted at $v$ where $v$ reaches $r_v$ vertices after time $t_v$. By ``after time $t_v$'' we mean that the vertices are reachable by a temporal path which departs $v$ at time $t_v$ at the earliest. We say that the state is \emph{supported} by this perturbation.  We further say that a state $(\zeta_v,r_v,t_v)$ is \emph{maximally valid} if, for every other valid state $(\zeta_v,r',t_v)$ of $v$, we have $r' < r_v$.

    
    
Given all maximally valid states of the children of a vertex $v$, for each time $t$ we can calculate all maximally valid states $(\zeta_v,r_v,t)$ of $v$ in $O(d_c\zeta^2)$ time where $d_c$ is the number of children of $v$. To do this we use the algorithm of Dudzi\'nski and Walukiewicz for MCKP \cite{dudzinski_exact_1987}.

The algorithm starts by assigning every leaf the set of states $\{(0,1,t)\colon t \in [0,T(G,\lambda)+\delta]\}$. For any non-leaf vertex $v$ and every time $t\in [0,T(G,\lambda)+\delta]$, we check if the state $(\zeta_v,r_v,t)$ is valid and then easily remove states that are not maximally valid. The intuition behind our method is that we use MCKP to find an allocation of perturbations to the subtrees rooted at each child of $v$ which maximises the reachability of $v$.
    \begin{enumerate}
        \item For each child $c$, we make a list of pairs $(\zeta_c,r_c)$. A pair $(\zeta_c,r_c)$ is present if
        \begin{enumerate}
            \item $\zeta_c=0$ and $r_c=1$; or
            \item there is a time $t'$ after $t$ at which the edge $vc$ is active and there exists a valid state $(\zeta_c,r_c-1,t_c)$ of $c$ such that $t_c>t'$; or
            \item there is a valid state $(\zeta_c-1,r_c-1,t')$ of $c$ where $t'$ is a time after $t$ and $vc$ is active in the interval $[t'-\delta,t'+\delta]$.
        \end{enumerate}
        \item If multiple pairs associated with a single child $c$ have the same $\zeta_c$ value, we remove all but the one with the highest $r_c$ value.
        \item  For each integer $w$ with $0\leq w\leq \zeta$, we can find a combination $S$ of pairs in $O(n\zeta^2)$ time where: one pair is chosen per child; the sum of $\zeta_c$ is at most $w$; and the sum of $r_c$ is maximised. Each set of the input of MCKP consists of the pairs belonging to a given child vertex. For a pair $(\zeta,r)$, the profit is $r$ and the weight is $\zeta$. More formally, our input to an algorithm which solves MCKP is the integer $c=w$ and the set of classes $N_1,\ldots,N_{d_c}$ where a class $N_i$ corresponds to the $i$th child. For each pair $(\zeta_c,r_c)$ of a child $c$, we add an element with weight $\zeta_c$ and profit $r_c$ to the corresponding class. We can then solve MCKP for each $w$ such that $1\leq w\leq\zeta$. The state $(w,\max(r),t)$ where $\max(r)$ is the maximum value of profit as found by an algorithm solving MCKP with our inputs is then a valid state of $v$ and records the maximum reachability in the subtree rooted at $v$ given $w$ perturbations.
        \item Delete any states with $\zeta_v>\zeta$ or $t_c<0$ and any states $(\zeta_v,r_v',t_v)$ where there exists a state $(\zeta_v,r_v,t_v)$ such that $r_v>r_v'$.
    \end{enumerate}
\medskip
\begin{lemma}\label{lem:tree-trlp-correctness}
    For every vertex $v$, a state $(\zeta_v,r_v,t_v)$ is found by the algorithm above if and only if it is maximally valid for $v$.
\end{lemma}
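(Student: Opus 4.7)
The plan is to prove this by induction on the height of $v$ in the rooted tree, handling both directions (soundness: every state produced is maximally valid; completeness: every maximally valid state is produced) simultaneously in the inductive step. For the base case, $v$ is a leaf: the subtree $G_v$ has no edges, so every $(\delta,\zeta_v)$-perturbation of $G_v$ is trivial and $v$ reaches exactly itself. Hence the maximally valid states are precisely $\{(0,1,t) : 0 \le t \le T(G,\lambda)+\delta\}$, which is what the algorithm assigns.

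For the inductive step, let $v$ be internal with children $c_1,\dots,c_{d_c}$ and assume the claim for each $c_i$. Fix a time $t$. For soundness, suppose the algorithm outputs $(\zeta_v,r_v,t)$. Then the MCKP step selects, for each child $c_i$, a pair $(\zeta_{c_i},r_{c_i})$ with $\sum_i\zeta_{c_i}\le\zeta_v$ and $\sum_i r_{c_i}=r_v$. I would construct a supporting $(\delta,\zeta_v)$-perturbation $\lambda'$ of $G_v$ by taking, for each child, the perturbation of $G_{c_i}$ guaranteed by the validity of the underlying child state (which exists by the inductive hypothesis, since the pair was placed in $c_i$'s list by one of cases (a)-(c)), together with an appropriate time-label for the edge $vc_i$ satisfying the time constraints of the relevant case. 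Because the subtrees $G_{c_i}$ are vertex-disjoint apart from $v$, these partial perturbations combine into a $\delta$-perturbation of $G_v$ using at most $\zeta_v$ perturbed time-edges, and it supports $(\zeta_v,r_v,t)$ at $v$. The final deletion step discards only dominated states, so every output state is maximally valid.

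For completeness, let $(\zeta_v,r_v,t)$ be maximally valid, supported by some $(\delta,\zeta_v)$-perturbation $\lambda'$. I would partition the perturbed time-edges of $\lambda'$ by which child subtree (including the incident edge $vc_i$) they lie in, defining $\zeta_{c_i}$ as the number of perturbed time-edges in this piece and capturing the reach of $c_i$ in $G_{c_i}$ starting after the time $v$ traverses $vc_i$ by some child state $(\zeta'_{c_i},r'_{c_i},t'_{c_i})$. The inductive hypothesis provides a maximally valid state of $c_i$ dominating this decomposition, so the corresponding pair $(\zeta_{c_i},r_{c_i})$ appears in $v$'s list under case (a), (b), or (c), depending on whether $vc_i$ is unused, used unperturbed, or used perturbed. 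Since MCKP maximises $\sum_i r_{c_i}$ over all combinations with total weight at most $\zeta_v$, it produces some state $(\zeta_v,r'_v,t)$ with $r'_v\ge r_v$, and maximal validity of $(\zeta_v,r_v,t)$ forces $r'_v=r_v$.

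The main obstacle will be the bookkeeping in the completeness direction: correctly partitioning the budget $\zeta_v$ among the children (in particular deciding which child's budget absorbs any perturbation of an edge $vc_i$ and matching this to case (c)), and checking that the induced per-child data satisfy the conditions in cases (a)-(c) verbatim, so that the correct pairs are already present in each child's list before MCKP runs. A secondary verification is that step (2), which keeps only the highest-$r_c$ pair among those of equal $\zeta_c$ per child, cannot eliminate a pair needed in any optimal MCKP selection, because swapping in the kept pair can only improve the objective; and that the final deletion step in (4) preserves exactly the maximally valid states of $v$ by removing those dominated at the same $(\zeta_v,t)$.
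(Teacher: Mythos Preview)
Your proposal is correct and follows essentially the same approach as the paper: induction on the height of $v$, with the base case for leaves and the inductive step split into soundness (combine the child perturbations guaranteed by the inductive hypothesis into a supporting perturbation for $v$) and completeness (decompose a supporting perturbation $\lambda^*$ among the children, invoke the inductive hypothesis to get dominating maximally valid child states, and use the optimality of MCKP to conclude). Your treatment of the bookkeeping---in particular, the explicit domination argument in the completeness direction and the observation that step~(2) cannot discard a needed pair---is in fact slightly more careful than the paper's version, but the structure and ideas are the same.
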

\begin{proof}
    We prove this statement by induction on the height of $v$. The height of a vertex $v$ is defined as the maximum distance to a leaf in the subtree rooted at $v$. 
    For the base case, it is clear that the set of states $\{(0,1,t)\colon t \in [0,T(G,\lambda)+\delta]\}$ contains precisely the maximally valid states for a leaf.

    We now assume that the lemma statement is true for all descendants of $v$. In particular, we assume that for every child $c$ of $v$, the algorithm finds a state $(\zeta_c,r_c,t_c)$ if and only if that state is maximally valid for $c$.  
    
    Suppose first that the algorithm returns the state $(\zeta_v,r_v,t_v)$ for $v$; we will show that this state is maximally valid for $v$.  By our computation of states, there must be a set $S$ of valid pairs $(\zeta_c,r_c)$ of the children consisting of one state per child such that the sum of $r_c$ values is maximised and the sum of $\zeta_c$ values is at most $\zeta_v$. These child pairs either correspond to a maximally valid child state with a $t$ value later than $t_v$ or account for a perturbation on the edge between $v$ and the child.

    By the inductive hypothesis, each of the child states $(\zeta_c,r_c,t_c)$ used to construct a state of $v$ must be supported by a $(\delta,\zeta_c)$-perturbation. Let $\lambda'$ be a perturbation of $\lambda$ which, when restricted to each subtree, supports the state $(\zeta_c,r_c,t_c)$. We also ensure that, if $t_c$ is earlier than $vc$ is active, the edge $vc$ is perturbed under $\lambda'$ so that it is active before $t_c$. This must be achievable by a $\delta$-perturbation by our construction. Then, under $\lambda'$, $v$ must reach each of the vertices reachable from its children after time $t_c$ in the subtree rooted at $v$. This gives us a $(\delta,\zeta_v)$-perturbation such that $v$ reaches $r_v$ vertices after $t_v$.
    
    Now assume the state $(\zeta_v,r_v,t_v)$ is maximally valid for $v$; we will show that the algorithm finds this state.  Let $\lambda^*$ be a $(\delta,\zeta_v)$-perturbation of the subtree induced at $v$ which supports $(\zeta_v,r_v,t_v)$.  Denote by $\zeta_c$ the number of permutations made by $\lambda^*$ in the subtree rooted at a child $c$ of $v$. Let $t_c$ be the earliest time after $t_v$ at which the edge $vc$ is active. Note that the number of vertices reached from $c$ after time $t_c$ under these perturbations must be maximal over the perturbations of $(G_c,\lambda|_{E(G_c)})$, otherwise $\lambda^*$ would not maximise the vertices reached.  By the inductive hypothesis, there must be maximally valid states $(\zeta_c,r_c,t_c)$ for each child $c$ of $v$. These states correspond to pairs which maximise the profit in our instance of MCKP. Therefore, the state $(\zeta_v,r_v,t_v)$ is found by the algorithm in Step 3; moreover, by maximality, it will not be deleted in Step 4.

    Hence the algorithm finds a state $(\zeta_v,r_v,t_v)$ if and only if it is maximally valid.
    \end{proof}
    This proves correctness of the algorithm.
    \medskip
\begin{theorem}\label{thm:tree-alg-trlp}
    \trlpp is solvable in $O(n^2\zeta^3(T(G,\lambda)+\delta))$ time on temporal graphs where the underlying graph $G$ is a tree.
\end{theorem}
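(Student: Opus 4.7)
The plan is to apply the algorithm described above once for each candidate source vertex $v_s \in V(G)$: root the tree at $v_s$, compute the set of maximally valid states at every vertex in post-order, and finally check whether the root's set contains a state $(\zeta_{v_s}, r_{v_s}, 0)$ with $\zeta_{v_s} \leq \zeta$ and $r_{v_s} \geq h$. We return yes if and only if some source admits such a state.

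Correctness follows directly from Lemma~\ref{lem:tree-trlp-correctness}: the dynamic program computes exactly the maximally valid states at every vertex, so at the root $v_s$ we can read off whether some $(\delta,\zeta_{v_s})$-perturbation of the whole temporal tree realises reachability $r_{v_s}$ starting from $v_s$ at time $0$. Iterating over all candidate sources therefore detects a $(\delta,\zeta)$-perturbation $\lambda'$ with $R_{\max}(G,\lambda') \geq h$ exactly when one exists, which is the \trlpp question.

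For the running time, the strategy is to charge work per source, per vertex $v$, and per candidate departure time $t \in \{0,\ldots,T(G,\lambda)+\delta\}$. At a vertex $v$ with $d_v$ children, for each fixed $t$ each child contributes at most $\zeta+1$ pairs (because for each value of $\zeta_c$ there is at most one maximum $r_c$), so the MCKP instance built in Step~3 has $d_v$ classes of size $O(\zeta)$. Using the algorithm of Dudziński and Krzysztof, solving this instance with capacity $w$ costs $O(d_v \zeta^2)$, and iterating over the $O(\zeta)$ possible values of $w \leq \zeta$ gives $O(d_v \zeta^3)$ total work at $v$ for each time $t$. Applying the handshaking identity $\sum_v d_v = O(n)$ across the tree yields $O(n\zeta^3)$ per departure time per source; multiplying by $T(G,\lambda)+\delta$ time slots and by $n$ possible sources gives the claimed $O(n^2 \zeta^3 (T(G,\lambda)+\delta))$ bound.

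The main obstacle is ensuring that the bookkeeping at each internal vertex is tight enough for the amortised bound. Specifically, one has to verify that for each $(v,t)$ the list of candidate pairs per child truly has size $O(\zeta)$ (so that children's state sets do not blow up across iterations) and that Steps~1--4 can be carried out within the MCKP cost without hidden factors of $T(G,\lambda)+\delta$ or $n$. Once these checks are in place, the theorem follows directly from correctness of the DP together with the aggregated running-time analysis above.
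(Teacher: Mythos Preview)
Your proposal is correct and follows essentially the same approach as the paper: you invoke Lemma~\ref{lem:tree-trlp-correctness} for correctness and then bound the running time by charging MCKP work per $(v,t)$, iterating over capacities $w\le\zeta$, and multiplying by the number of sources. Your use of the handshaking identity $\sum_v d_v = O(n)$ to aggregate across vertices is slightly tighter than the paper's bookkeeping, but the overall argument and bound are the same.
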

\begin{proof}
    Lemma~\ref{lem:tree-trlp-correctness} shows correctness of our algorithm. This algorithm runs in $O(n^2\zeta^3(T(G,\lambda)+\delta))$ time. To see this, note that for any vertex there are at most $(\zeta+1)(T(G,\lambda)+\delta)$ maximally valid states. That is, there are at most $\zeta+1$ states per time $t\in T(G,\lambda)+\delta$. For the non-leaf vertices, we solve MCKP to find the maximally valid states. This is done in $O(n\zeta^2)$ time. All other computation done for each state of a given vertex is achievable in constant time. Thus, the algorithm takes at most $O(n\zeta^3(T(G,\lambda)+\delta))$ time per vertex. If we perform this on each vertex, then we solve TRLP in $O(n^2\zeta^3(T(G,\lambda)+\delta))$ time.
\end{proof}

\subsection{Tree Decomposition Algorithm}

In this section, we prove the following result.
\medskip
\begin{theorem}\label{thm:trlp-tree-decomp}
    Given a temporal graph $(G,\lambda)$ where the treewidth of $G$ is at most $\omega$, we can solve TRLP in time
    $$O \left( n\omega^4 \zeta^3 (T(G,\lambda)+\delta)^{\omega})  \cdot (2 \delta h  T(G,\lambda) )^{4\omega^2 (T(G,\lambda) + \delta)^\omega}   \right).$$
    In particular, we can solve TRLP in polynomial time when the treewidth of the underlying graph, the permitted perturbation size $\delta$ and the lifetime of $(G,\lambda)$ are all bounded by constants.
\end{theorem}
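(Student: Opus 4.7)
The plan is to iterate over each candidate source vertex $v_s \in V(G)$ and, for each, run a dynamic program on a nice tree decomposition of $G$ of width at most $\omega$. Since $G$ has treewidth at most $\omega$, such a nice tree decomposition with $O(n)$ bags (leaf, introduce vertex, introduce edge, forget, and join) can be computed in time depending only on $\omega$ and $n$. The DP computes, at each bag, the set of all ``reachable profiles'' achievable by some $(\delta,\zeta')$-perturbation of the portion of $(G,\lambda)$ processed so far. We return \textbf{YES} iff for some $v_s$ the root carries a profile whose reached count is at least $h$ and whose perturbation count is at most $\zeta$.

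A DP state at a bag $B$ consists of three pieces. First, an integer $\zeta' \in \{0, 1, \ldots, \zeta\}$ recording the number of perturbations used so far in the processed subtree. Second, an integer $r' \in \{0, 1, \ldots, h\}$ recording the number of already-forgotten vertices confirmed to be reachable from $v_s$. Third, and most importantly, an \emph{interface signature}: for each $u \in B$ a subset $A_u \subseteq \{0, 1, \ldots, T(G,\lambda) + \delta\}$ listing the arrival times at $u$ of temporal paths from $v_s$ that can be realised using only edges of the processed subtree under the chosen perturbation, together with the selected (possibly perturbed) time for every edge introduced in an ancestor bag that is still incident to a current bag vertex. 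This interface information is precisely what later bags need in order to correctly extend temporal paths through the bag.

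The transitions follow the usual nice tree decomposition template. An \textbf{introduce vertex} bag adding $u$ sets $A_u = \{0\}$ if $u = v_s$ and $A_u = \emptyset$ otherwise. An \textbf{introduce edge} bag adding $e = uv$ enumerates choices of its time: each original time in $\lambda(e)$ (no perturbation charged) together with each perturbed choice in $\{t + d : t \in \lambda(e),\, d \in [-\delta,\delta]\}$ (charging one perturbation); for each choice we update $A_u$ and $A_v$ by closing under extension across $e$, and add to $\zeta'$ if a perturbation was used. A \textbf{forget} bag forgetting $u$ increments $r'$ by one if $A_u \neq \emptyset$ (for $u \neq v_s$) and then projects out $A_u$. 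A \textbf{join} bag merges two child states whose perturbation choices agree on any shared introduced edges by summing their $\zeta'$ and $r'$ counters and taking, for each $u \in B$, the union of their arrival sets $A_u$.

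The main obstacle is justifying that the interface signature $(A_u)_{u \in B}$ is a sufficient summary --- equivalently, that once a vertex $w$ is forgotten, its contribution to reachability cannot be altered by perturbations made later in the DP. This follows from the standard tree decomposition property that every edge incident to $w$ lies in some bag containing $w$, so all such edges have been processed (and their perturbations committed to within the state) by the moment $w$ is forgotten; thus $A_w$ at that instant already encodes every temporal path from $v_s$ to $w$ that any completion of the DP could realise, and the yes/no contribution of $w$ to $r'$ is determined. Double-counting of reached vertices at join bags is avoided by only incrementing $r'$ upon forgetting and by taking set unions (rather than sums) when combining the $A_u$ at bag vertices. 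Correctness is then an induction on the tree decomposition. Bounding the number of distinct states per bag via the product of the range of $\zeta'$, the range of $r'$, the $(\omega+1)$-fold product of possible arrival-time subsets drawn from $\{0,\ldots,T(G,\lambda)+\delta\}$, and the number of edge-perturbation choices locally stored, combined with the fact that each transition is polynomial in the state size and there are $O(n)$ bags, yields the runtime claimed in the theorem statement. Polynomial-time tractability when $\omega$, $\delta$ and $T(G,\lambda)$ are all bounded by constants follows immediately from the resulting bound.
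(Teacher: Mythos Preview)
Your proposal has a genuine gap in the interface signature. You claim that when a vertex $w$ is forgotten, ``$A_w$ at that instant already encodes every temporal path from $v_s$ to $w$ that any completion of the DP could realise,'' but this is false. It is true that every edge \emph{incident to} $w$ has been processed by then, but a temporal path from $v_s$ to $w$ may traverse many other edges that have not. Concretely, suppose $B=\{a,b\}$ is a join bag with $v_s=a$, child $s_1$ contains a forgotten vertex $x$ with time-edges $(ax,5)$ and $(xb,10)$, and child $s_2$ contains a forgotten vertex $y$ with a single time-edge $(by,11)$. When $y$ is forgotten inside $s_2$'s subtree, your $A_b$ there is empty (no path from $a$ to $b$ exists using only $s_2$'s edges), so $A_y=\emptyset$ and you record $y$ as unreached. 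Yet in the full graph $v_s$ reaches $y$ via $a\to x\to b\to y$. The same issue means your join rule of simply taking unions of the $A_u$ is also wrong: after a join, arrival times at bag vertices can arise from paths that alternate between the two subtrees, so the true $A_u$ is not the union of the children's values.

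The paper's DP avoids exactly this by making the count of reached forgotten vertices a \emph{function} $\mathcal{R}^{\text{below}}$ of an assignment of departure times to all bag vertices, rather than a single integer: for every possible vector of times at which paths might enter the bag from above, the state records how many forgotten vertices would then be reached. This deferred accounting lets the join node first compute the correct arrival times at bag vertices (via the iterated $\mathcal{R}^i$ of Definition~\ref{def:join-r-i}, which handles the alternation) and only then look up the correct counts in each child. Your state would need an analogous function-valued component to be correct, and this is precisely what drives the $h^{(T(G,\lambda)+\delta)^\omega}$ factor in the runtime bound.
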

Our algorithm proceeds by dynamic programming over a nice tree decomposition.  Before giving the algorithm, we discuss some definitions and the construction of our states.

\subsubsection{Algorithm preliminaries}

We begin by defining tree decompositions. We note that we use \emph{node} when referring to the tree decomposition and \emph{vertex} when referring to the original graph for clarity.
\medskip
\begin{definition}[Cygan et al. \cite{Cygan2015}]
    A \emph{tree decomposition} of a static graph $G$ is a pair $\mathcal{T}=(T,\{\mathcal{D}(s)_{s\in V(T)}\})$ where $T$ is a tree where every node is assigned a \emph{bag} $\mathcal{D}(s)\subseteq V(G)$ such that the following conditions hold:
    \begin{enumerate}[leftmargin=0.5cm]
        \item $\bigcup_{s\in V(T)}\mathcal{D}(s)=V(G)$;
        \item for every $uv\in E(G)$, there exists a node $s$ of $T$ such that a bag $\mathcal{D}(s)$ contains both $u$ and $v$;
        \item for every $u\in V(G)$, the set $\{s\in V(T)\,:\, u\in \mathcal{D}(s)\}$ induces a connected subtree of $T$.
    \end{enumerate}
    The \emph{width} of a tree decomposition is $\max_{s\in V(T)}|\mathcal{D}(s)|-1$. The \emph{treewidth} of a graph is the minimum width of a tree decomposition of that graph.
\end{definition}
\medskip
We use $G_s$ to refer to the subgraph induced by vertices introduced in the subtree rooted at $s$. Let $G'_s$ denote the subgraph induced by $V(G_s)\setminus \mathcal{D}(s)$.

For our algorithm we use a variation of a tree decomposition known as a nice tree decomposition. This allows us to reduce the number of cases we need to consider when computing the dynamic program over the decomposition.
\medskip
\begin{definition}[Cygan et al. \cite{Cygan2015}]\label{def:nice-tree-decomp}
    A rooted tree decomposition $(T,\{\mathcal{D}(s)_{s\in V(T)}\})$ is \emph{nice} if the following are true
    \begin{itemize}[leftmargin=0.5cm]
        \item $\mathcal{D}(\ell)=\emptyset$ for all leaf nodes $\ell$ of $T$;
        \item every non-leaf node of $T$ is one of the following three types:
        \begin{itemize}
            \item (\textbf{Introduce}) a node $s$ with exactly one child $s'$ such that $\mathcal{D}(s)=\mathcal{D}(s')\cup \{v\}$ for some vertex $v\notin\mathcal{D}(s)$;
            \item (\textbf{Forget}) a node $s$ with exactly one child $s'$ such that $\mathcal{D}(s)=\mathcal{D}(s')\setminus \{v\}$ for some vertex $v\in\mathcal{D}(s')$;
            \item (\textbf{Join}) a node $s$ with exactly two children $s_1,s_2$ such that $\mathcal{D}(s)=\mathcal{D}(s_1)=\mathcal{D}(s_2)$.
        \end{itemize}
    \end{itemize}
\end{definition}

It is well known (see, for example, Cygan et al. \cite[Lemma 7.4]{Cygan2015}) that if a graph $G$ admits a tree decomposition of width at most $w$, then it must admit a nice tree decomposition of width at most $w$ with at most $O(wn)$ nodes. Furthermore, they state that a nice tree decomposition can be computed in $O(\omega^2\cdot\max(n,|V(T)|))$ time, given a tree decomposition of the graph. We solve TRLP by dynamic program on a nice decomposition graph.

Without loss of generality, we assume that the root of the tree decomposition is a bag containing only a source vertex $x$. The algorithm finds the maximum reachability of the source vertex under any $(\delta,\zeta)$-perturbation. To find the maximum reachability of the graph under any $(\delta,\zeta)$-perturbation, we can run the algorithm with each possible source vertex. The dynamic program operates by computing a set of states for each bag. Calculation of states requires us to restrict departure and arrival times of temporal paths. Recall that the \emph{arrival time} of a temporal path is the time of the last time-edge in the path. Similarly, the time at which a temporal path \emph{departs} is the time $t$ if first time-edge in the path occurs at $t'\geq t$. For vertices $v$ and $u$ in the bag, if there is no temporal path from a vertex $v$ to a vertex $u$, we use the convention that the foremost arrival time of a path from $v$ to $u$ is $\infty$. Our states record:
\begin{itemize}
    \item the times at which edges in the bag are active following perturbation;
    \item for each vertex $v$ in the bag and time $t$, the foremost arrival time of a path in $G_s$ to each vertex in the bag from $v$ which departs at time $t$;
    \item for each set of vertices in the bag where each vertex $v$ has a specified (possibly different) departure time $t$, the total number of distinct vertices reached below the bag by vertices by a path departing each $v$ at time $t$; and
    \item the number of edges perturbed below the bag.
\end{itemize}

To describe the times at which paths leave each vertex in a given set of vertices, we use the following notation. Let $\mathcal{U}_{\mathcal{D}(s)}$ be the set of functions which map the vertices in the bag $\mathcal{D}(s)$ to $[0,T(G,\lambda)+\delta]\cup \{\infty\}$. For notational convenience, we consider a function $U$ in $\mathcal{U}_{\mathcal{D}(s)}$ to be the set $\{(v,U(v)): v \in S\subseteq\mathcal{D}(s)\}$ of vertex-time pairs. Let $(G,\lambda)$ be a temporal graph and $S\subseteq V(G)$, we use $\mathcal{E}((S,\lambda))$ to denote the set of time-edges in the subgraph induced by the vertices in the set $S$. The restriction of a function $f$ to some domain $H$ is denoted $f|_H$.

A state is a tuple $(P_s,\mathcal{R}^{\text{below}},\mathcal{R}^{\text{in}},\zeta_{G'_s})$ where 
\begin{enumerate}
    \item $P_s$ is a function $P_s:E(G[\mathcal{D}(s)])\to \mathcal{P}([1,T(G,\lambda)+\delta])$ such that $|\lambda(e)|=|P_s(e)|$ for all $e \in E(G)$, where $\mathcal{P}$ denotes the power set; this describes the  time-edges in the bag following a perturbation;
    \item $\mathcal{R}^{\text{in}}:\mathcal{D}(s)\times[0,T(G,\lambda)+\delta]\to \mathcal{U}_{\mathcal{D}(s)}$ is a function which maps a vertex-time pair $(v,t)$ to a function which gives the foremost arrival time of a path in $G_s$ from $v$ departing at $t$ to each vertex in the bag;
    \item for each $U\in \mathcal{U}_{\mathcal{D}(s)}$, $\mathcal{R}^{\text{below}}:\mathcal{U}_{\mathcal{D}(s)}\to [0,h]$ gives the number of distinct vertices below the bag to which there exists a temporal path from some vertex $v$ departing at time $U(v)$; and
    \item $\zeta_{G'_s}\in [0,\zeta]$ gives the number of perturbations allowed on edges with at least one endpoint forgotten below this bag.
\end{enumerate}
Since we only want to know whether the number of vertices reached is at least $h$, the set-function pair is mapped to $h$ by $R^{\text{below}}$ if there are at least $h$ vertices reachable from vertices in the bag when the earliest permitted departure time for a vertex $v$ is $U(v)$. Note that, for $\mathcal{R}^{\text{in}}(u,t)$, we allow foremost arrival at vertices by paths traversing vertices in the bag and vertices which have been forgotten. That is, $\mathcal{R}^{\text{in}}(u,t)$ describes the set of vertices in $\mathcal{D}(s)$ reachable from $u$ after time $t$ by paths consisting of vertices in $G_s$. 

We say that a state $(P_s,\mathcal{R}^{\text{below}},\mathcal{R}^{\text{in}},\zeta_{G'_s})$ is valid if there exists a perturbation $\lambda'$ of $(G_s,\lambda|_{G_s})$ such that
\begin{enumerate}
    \item under $\lambda'$, $\mathcal{R}^{\text{in}}(v,t)$ gives the foremost arrival times in $(G_s,\lambda')$ at each vertex in $\mathcal{D}(s)$ from $v$ departing at $t$;
    \item $\lambda'$ is equal to $P_s$ when restricted to $\mathcal{D}(s)$;
    \item $\lambda'$ perturbs exactly $\zeta_{G'_s}$ edges in $G'_s$;
    \item for each assignment of times $U\in\mathcal{U}_{\mathcal{D}(s)}$, the cardinality of the union of vertices in $G_s'$ reached from all vertices $v\in \mathcal{D}(s)$ by paths departing at time $U(v)$ is $\min(h,\mathcal{R}^{\text{below}}(U))$;
    \item the total number of edges perturbed in $G_s$ is at most $\zeta$.
\end{enumerate}
If this is the case we say that $\lambda'$ \emph{supports} $(P_s,\mathcal{R}^{\text{below}},\mathcal{R}^{\text{in}},\zeta_{G'_s})$. For ease, we refer to the set of time-edges perturbed by $P_s$ as $\text{changed}(P_s)$. That is, $\text{changed}(P_s)=\{(e,t)\,:\,t\in P_s(e), t\notin\lambda(e)\}$. Note that the final condition requires that $\text{changed}(P_s)+ \zeta_{G'_s}\leq \zeta$.

%

We now describe how we determine the validity of states for each type of node.

\subsubsection*{Leaf nodes} 
We assume that a leaf node contains no vertices by the definition of a nice tree decomposition (Definition~\ref{def:nice-tree-decomp}). In addition, there are no edges below a leaf which can be perturbed. Therefore, there is only one trivial possibility for a valid leaf state.

\subsubsection*{Introduce nodes}
Suppose that an introduce node $s$ has child $s'$ and that $\mathcal{D}(s)\setminus\mathcal{D}(s')=\{u\}$. We determine whether a state of an introduce node is valid as follows. 
\medskip
\begin{lemma}\label{lem:reach-nice-tree-intro}
    Let $s$ be an introduce node with child $s'$ and introduced vertex $u$. A state $(P_s,\mathcal{R}^{\text{below}}_s, \mathcal{R}^{\text{in}}_s,\zeta_{G'_s})$ of an introduce node $s$ is valid if and only if there is a valid state $(P_{s'},\mathcal{R}^{\text{below}}_{s'}, \mathcal{R}^{\text{in}}_{s'},\zeta_{G'_{s'}})$ of its child $s'$ such that
\begin{enumerate}[leftmargin=0.5cm]
    \item $P_{s}|_{E(G[\mathcal{D}(s')])}=P_{s'}$;
    \item for all vertices $v$ in $\mathcal{D}(s)$ and times $t$, $\mathcal{R}^{\text{in}}_s(v,t)(u)=t''$ if and only if either $v=u$ and $t=t''$, or $t''$ is the earliest time such that there exist $t'<t''$ and $w\in N(u)\cap \mathcal{D}(s)$ where $\mathcal{R}^{\text{in}}_{s'}(v,t)(w)=t'$ and $t''\in P_s(wu)$;
    \item for all $v\in \mathcal{D}(s)$ such that $v\neq u$, each time $t$, and each neighbour $w$ of $u$ in $\mathcal{D}(s)$, if $t_w$ is defined to be $\min\{t'\in P_s(uw)\,:\, t'\geq t\}$, then 
      $$ \mathcal{R}^{\text{in}}_{s}(u,t)(v) = \left\{
     \begin{array}{lr}
     \min\left(\min_{w\in N(u)\setminus \{v\}}\mathcal{R}^{\text{in}}_{s'}(w,t_w+1)(v), t_v\right) & \text{if $v\in N(u)$}\\
     \min_{w\in N(u)\setminus \{v\}}\mathcal{R}^{\text{in}}_{s'}(w,t_w+1)(v) & \text{otherwise}
     \end{array}
   \right.;$$
    \item for each $v,x\neq u\in\mathcal{D}(s)$ and each time $t$, if we let $t'$ be the earliest time such that there exists a neighbour $w$ of $u$ where $t'\in P_s(wu)$ and $\mathcal{R}^{\text{in}}_{s'}(v,t)(w)<t'$ or $\infty$ if no such neighbour exists, then $\mathcal{R}^{\text{in}}_{s}(v,t)(x)=\min(\mathcal{R}^{\text{in}}_{s'}(v,t)(x), \mathcal{R}^{\text{in}}_{s}(u,t'+1)(x))$;
    \item for all $U$ in $\mathcal{U}_{\mathcal{D}(s)}$, $\mathcal{R}^{\text{below}}_s(U)=\mathcal{R}^{\text{below}}_{s'}(\min(U,\mathcal{R}^{\text{in}}_s(u,U(u))))$;
    \item $\zeta_{G'_s}= \zeta_{G'_{s'}}$;
    \item $|\text{changed}(P_s)|+\zeta_{G'_{s}}\leq \zeta$.
\end{enumerate}

\end{lemma}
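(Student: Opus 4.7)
The plan is to prove both directions by constructing the natural perturbation: in the forward direction, restrict the supporting perturbation to $G_{s'}$; in the backward direction, extend the supporting perturbation at $s'$ by assigning times to the new edges incident to $u$ according to $P_s$.

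First I would record the key structural fact that the introduced vertex $u$ has no neighbours in $G_s$ outside $\mathcal{D}(s)$. This follows from the subtree property of tree decompositions: since $u$ appears in no bag of the subtree rooted at $s'$, any edge of $G_s$ incident to $u$ must go to a vertex in $\mathcal{D}(s)$. Consequently $V(G'_s) = V(G'_{s'})$, the edges with at least one forgotten endpoint are identical in $G_s$ and $G_{s'}$, and condition 6 follows immediately; condition 7 follows because the total perturbation count at $s$ decomposes as (changes strictly inside the bag) plus (changes with at least one endpoint below), which equals $|\text{changed}(P_s)| + \zeta_{G'_s}$.

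For the forward direction, assume $(P_s, \mathcal{R}^{\text{below}}_s, \mathcal{R}^{\text{in}}_s, \zeta_{G'_s})$ is supported by a perturbation $\lambda'$ of $(G_s, \lambda|_{G_s})$. Let $\lambda^*$ be the restriction of $\lambda'$ to $E(G_{s'})$; this is a perturbation of $(G_{s'}, \lambda|_{G_{s'}})$, and restricted further to $E(G[\mathcal{D}(s')])$ it agrees with $P_s|_{E(G[\mathcal{D}(s')])}$, yielding condition 1. I would then define $\mathcal{R}^{\text{in}}_{s'}$ and $\mathcal{R}^{\text{below}}_{s'}$ from $\lambda^*$ in the natural way. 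The key observation for conditions 2--5 is that any temporal path in $(G_s, \lambda')$ between bag vertices either stays entirely within $G_{s'}$ or visits $u$, and in the latter case every visit to $u$ must enter and leave via edges in $E(G[\mathcal{D}(s)])$; decomposing paths along their first (and last) visit to $u$ yields exactly the formulas stated in conditions 2--4, and condition 5 records the additional vertices below that become reachable once $u$ is available as a relay.

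For the backward direction, assume there is a valid state at $s'$ supported by some $\lambda''$, together with the seven numbered conditions. I would define $\lambda'$ on $E(G_s)$ by letting it agree with $\lambda''$ on $E(G_{s'})$ and with $P_s$ on the new edges incident to $u$. Condition 1 makes the definition consistent where the two overlap, and condition 7 bounds the total number of perturbations by $\zeta$. Verifying that $\lambda'$ supports the claimed state at $s$ reduces to checking that the foremost arrival-time and below-reachability functions in $(G_s, \lambda')$ coincide with $\mathcal{R}^{\text{in}}_s$ and $\mathcal{R}^{\text{below}}_s$; this follows from the inductive meaning of $\mathcal{R}^{\text{in}}_{s'}$ in $G_{s'}$ combined with conditions 2--5 applied path by path. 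The main obstacle will be the bookkeeping in conditions 3 and 4, where a temporal path from or through $u$ must account for all possible entry and exit neighbours of $u$ as well as the possibility of re-entering $u$'s neighbourhood further along the path. The crucial observation that unblocks this is that in a foremost temporal path one may assume $u$ is visited at most once, since revisiting $u$ can only delay arrival; this lets me treat ``before the visit to $u$'' and ``after the visit to $u$'' as two independent segments whose concatenation is exactly what the pointwise minimum formulas encode.
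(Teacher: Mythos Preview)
Your proposal is correct and follows essentially the same approach as the paper: both directions are handled by restricting/extending the supporting perturbation between $G_s$ and $G_{s'}$, using the structural fact that $u$ has no neighbours outside $\mathcal{D}(s)$ in $G_s$, and then verifying conditions 2--5 by decomposing temporal paths according to whether and where they pass through $u$. Your explicit observation that a foremost temporal path can be assumed to visit $u$ at most once is a clean way to organise what the paper does as a three-case analysis ($u$ as target, $u$ as source, $u$ as intermediate), but the underlying argument is the same.
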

\begin{proof}
    We claim that a state $(P_s,\mathcal{R}^{\text{below}}_s, \mathcal{R}^{\text{in}}_s ,\zeta_{G'_s})$ is supported by a $(\delta,\zeta)$-perturbation of $(G,\lambda)$ if and only if the restriction of that perturbation to $G_{s'}$ supports a state $(P_{s'},\mathcal{R}^{\text{below}}_{s'}, \mathcal{R}^{\text{in}}_{s'} ,\zeta_{G'_{s'}})$ of $s'$ such that the criteria in the lemma hold. We begin by showing that the state $(P_{s},\mathcal{R}^{\text{below}}_{s}, \mathcal{R}^{\text{in}}_{s} ,\zeta_{G'_{s}})$ is supported by a $(\delta,\zeta)$-perturbation $\lambda'$ if the criteria hold. Let $\lambda''$ be the perturbation which supports the state $(P_{s'},\mathcal{R}^{\text{below}}_{s'}, \mathcal{R}^{\text{in}}_{s'} ,\zeta_{G'_{s'}})$ and $\lambda'$ be the extension of $\lambda''$ given by $P_s$.

    If criteria (1), (6), and (7) hold, then $\lambda'$ must be an extension of $\lambda''$ and it must be a $(\delta,\zeta)$-perturbation of $\lambda|_{G_s}$. We now show that if the remaining criteria hold ((2)-(5)), then $\lambda'$ must support the state $(P_s,\mathcal{R}^{\text{below}}_s, \mathcal{R}^{\text{in}}_s ,\zeta_{G'_s})$. We begin by showing that the foremost arrival time of a path from $v\in \mathcal{D}(s)$ to $y\in \mathcal{D}(s)$ departing at time $t$ under $\lambda'$ is $t'$ if and only if $\mathcal{R}^{in}_s(v,t)(y)=t'$ as given in criteria (2)-(4). We have 3 cases to consider, namely $u=y$, $u=v$, and $v\neq u\neq y$. 
    
    It is clear that if $u=v=y$, the foremost arrival time of a path departing at $t$ under any temporal assignment from $v$ to itself is $t$. To check the remainder of the first case, we see that if $v\neq u= y$, the foremost arrival time of a path from $v$ to $u$ departing at time $t$ under $\lambda'$ must be the earliest time $t''$ such that there exists a neighbour $w$ of $u$ with foremost arrival time $t'<t''$ of a path departing at $t$, and an appearance of $\lambda(uw)$ at $t''$. That is, we require a path to $w$ to which we can prefix the edge $wu$. Since $\lambda'$ coincides with $P_s$ and supports $\mathcal{R}^{\text{in}}_{s'}$, this is precisely criterion (2). 
    
    Furthermore, suppose that for all vertices $v$ in $\mathcal{D}(s)$ and times $t$, $\mathcal{R}^{\text{in}}_s(v,t)(u)=t'$ if and only if either $v=u$ and $t=t''$, or $t''$ is the earliest time such that there exist $t'<t''$ and $w\in N(u)\cap \mathcal{D}(s)$ where $\mathcal{R}^{\text{in}}_{s'}(v,t)(w)=t'$ and $t''\in P_s(wu)$. Then we either have the case where the foremost path to $u$ in $(G_s,\lambda')$ is trivial or the concatenation of a foremost path to a neighbour of $u$ and an edge from that neighbour to $u$. Thus, the foremost path from a vertex $v\in\mathcal{D}(s)$ departing at time $t$ to $u$ arrives at time $t''$ if and only if $\mathcal{R}^{\text{in}}_s(v,t)(u)=t''$.

    In the second case, we have that $u=v\neq y$. If there exists a foremost path from $u$ to $y\in \mathcal{D}(s)$ under $\lambda'$ which departs at some time $t_1$ and arrives at time $t_2$, then, since $u$ has no neighbours in $G_{s}\setminus\mathcal{D}(s)$, there must be a (possibly trivial) path from a neighbour $w$ of $u$ to $y$ in $\mathcal{G}_{s'}$ which departs at some time $t_1'>t_1$ and arrives at time $t_2$. This must be the foremost path from a neighbour of $u$ to $y$ departing at $t_1$ such that the edge $uw$ is active at a time $t'$ such that $t_1\leq t'<t'_1$. Therefore, criterion (3) describes the foremost arrival time at vertices in the bag of paths from $u$ departing at time $t$ under $\lambda'$.

    Suppose that $\mathcal{R}^{\text{in}}_s(u,t)(y)=t''$. Then, if $y$ is a neighbour of $u$, $t''$ is the minimum of the first appearance of the edge $uy$ at or after $t$ and the earliest arrival time of the foremost path from another neighbour $w$ of $u$ to $y$ which does not traverse $u$ and departs after the earliest appearance of $uw$ at or after $t$. If $y$ is not a neighbour of $u$, then earliest arrival time of the foremost path from any neighbour $w$ of $u$ to $y$ which does not traverse $u$ and departs after the earliest appearance of $uw$ at or after $t$ must be the foremost arrival time of a path from $u$ to $y$.
    Hence, the foremost path from a vertex $u\in\mathcal{D}(s)$ departing at time $t$ to $y$ arrives at time $t''$ if and only if $\mathcal{R}^{\text{in}}_s(u,t)(y)=t''$.
    
    For the final case, let $v\neq u\neq y$ and $t''$ be the earliest time of arrival of a path from $v$ to $y$ departing at time $t$ under $\lambda'$. This path either traverses $u$ or it does not. If it does, then the foremost arrival time of the path can be found by finding the foremost arrival time of a path from $u$ to $y$ departing after the arrival of the foremost path from $v$ to $u$. If the path does not traverse $u$, $t''$ must be the same as the foremost arrival time in the graph $(G_{s'},\lambda'')$. Since we take the minimum of both values in (4), $t''$ must be equal to $\mathcal{R}^{\text{in}}_s(v,t)(y)$ for all $v$, $y\in\mathcal{D}(s)$ and times $t$. Therefore, if criteria (2)-(4) hold, there must be a $(\delta,\zeta)$-perturbation $\lambda'$ consistent with $P_s$ where the foremost arrival time under $\lambda'$ at each vertex in $\mathcal{D}(s)$ by paths from vertices $v$ in the bag departing at time $t$ is given by $\mathcal{R}^{\text{in}}_s(v,t)$ for all $v\in\mathcal{D}(s)$ and times $t$.

    We now show that, for each function $U\in\mathcal{U}_{\mathcal{D}(s)}$ the number of distinct vertices reached from vertices in the bag by paths departing $v\in\mathcal{D}(s)$ at time assigned by $U(v)$ under $\lambda'$ is $\mathcal{R}^{\text{below}}_{s'}(\min(U,\mathcal{R}^{\text{in}}_s(u,U(u))))$. This implies that if all criteria hold, the state $(P_s,\mathcal{R}^{\text{below}}_s, \mathcal{R}^{\text{in}}_s ,\zeta_{G'_s})$ is supported by $\lambda'$ and thus valid.

    To do this, we argue that for all functions $U\in \mathcal{U}_{\mathcal{D}(s)}$, the number of vertices reached in $G_s$ under $\lambda'$ by the set $A$ of vertices assigned finite times by $U$ by paths departing at the times assigned by $U$ is exactly $\mathcal{R}^{\text{below}}_s(U)$ if $\mathcal{R}^{\text{below}}_s(U)<h$ and at least $h$ otherwise. Suppose, for contradiction, that under $\lambda'$ the set $A$ reaches $h'$ vertices in $G'_s$ at the times given by $U$ and either $h>h'\neq \mathcal{R}^{\text{below}}_s(U)$, or $\mathcal{R}^{\text{below}}_s(U) < h\leq h'$. The vertices reachable below the bag must be reachable by a path terminating at a vertex in $\mathcal{D}(s')$ which prefixes a path which only traverses vertices in $G'_s$. The vertex $u$ cannot be incident to any forgotten vertices by the construction of a nice tree decomposition, therefore the final vertex $v$ in $\mathcal{D}(s)$ on this path must be in $\mathcal{D}(s')$. The path from $v$ must depart at either $U(v)$ or the time of arrival of the path from $u$ to $v$ (or both). In addition, no vertices are forgotten between $s$ and $s'$. Hence, $G'_s=G'_{s'}$. This gives us that, under $\lambda'$, the number of vertices reached strictly below the bag by a set $A$ at the times specified by the function $U$ must be $\mathcal{R}^{\text{below}}_s(U)=\mathcal{R}^{\text{below}}_{s'}(\min(U,\mathcal{R}^{\text{in}}_s(u,U(u))))$. If the number of vertices reachable from $A$ at the times given by $U$ not equal to $\mathcal{R}^{\text{below}}_s(U)$ and $\mathcal{R}^{\text{below}}_s(U)<h$, then the number of vertices in $G'_{s'}$ reachable by the set of vertices assigned finite times by the function $\min(U,\mathcal{R}^{\text{in}}(u,U(u))$ must not be equal to
    $\mathcal{R}^{\text{below}}_{s'}(\min(U,\mathcal{R}^{\text{in}}(u,U(u)))$. This contradicts validity of the state $(P_{s'},\mathcal{R}^{\text{below}}_{s'}, \mathcal{R}^{\text{in}}_{s'},\zeta_{G'_{s'}})$ of $s'$. This shows that if (5) holds, then $\mathcal{R}^{\text{below}}_s(U)$ describes, under $\lambda'$, the number of distinct vertices reached below the bag by vertices in the bag departing at the times specified by $U$. Hence, if all criteria in the lemma statement hold, the state $(P_{s},\mathcal{R}^{\text{below}}_{s}, \mathcal{R}^{\text{in}}_{s},\zeta_{G'_{s}})$ must be supported by $\lambda'$.

    We now show that if there exists a $(\delta,\zeta)$-perturbation $\lambda^*$ of $(G_s,\lambda|_{G_s})$ that supports the state $(P_{s},\mathcal{R}^{\text{below}}_{s}, \mathcal{R}^{\text{in}}_{s},\zeta_{G'_{s}})$, then the criteria must hold. Let $\lambda^*_{s'}$ be the restriction of $\lambda^*$ to $G_{s'}$, and $(P_{s'},\mathcal{R}^{\text{below}}_{s'}, \mathcal{R}^{\text{in}}_{s'} ,\zeta_{G'_{s'}})$ be the child state supported by $\lambda^*$. Then, if $P_s$ is the extension of $P_{s'}$ which coincides with $\lambda^*$, (1) must be true. In addition, by the definitions of a nice tree decomposition and a $(\delta,\zeta)$-perturbation, (6) and (7) must also hold.

    What remains is to show that (2)-(5) also hold. It is clear that (2) describes the foremost arrival time of paths to $u$ under $\lambda^*$. That is, the foremost arrival time from some vertex $v\in\mathcal{D}(s)$ to $u$ departing at $t$ is $t''$ if and only if $v=u$ and $t=t''$, or $t''$ is the earliest time such that there is a foremost path from $v$ to a neighbour $w$ of $u$ arriving at time $t'<t''$ and an appearance of the edge $uw$ at time $t''$. Thus this criterion must match the state $(P_{s},\mathcal{R}^{\text{below}}_{s}, \mathcal{R}^{\text{in}}_{s},\zeta_{G'_{s}})$. 
    
    The foremost arrival of paths from $u$ departing at $t$ to a vertex $v\neq u$ under $\lambda^*$ must be the minimum arrival time of a foremost path to $v$ from a neighbour $w$ of $u$ departing after the earliest occurrence of the edge $uw$ at or after $t$. Therefore, criterion (3) also matches the state $(P_{s},\mathcal{R}^{\text{below}}_{s}, \mathcal{R}^{\text{in}}_{s},\zeta_{G'_{s}})$. To see that criterion (4) also holds, let $t_2$ be the earliest time such that $\mathcal{R}^{\text{in}}_{s'}(v,t_1)(y)=t_2$ or $\mathcal{R}^{\text{in}}_s(u,t^*(u))(y)=t_2$ where $\mathcal{R}^{\text{in}}_s(v,t_1)(u)<t^*(u)$. Suppose that $\mathcal{R}^{\text{in}}_{s'}(v,t_1)(y)=t_2$. Then, since the restriction $\lambda^*_{s'}$ of $\lambda^*$ supports $(P_{s'},\mathcal{R}^{\text{below}}_{s'}, \mathcal{R}^{\text{in}}_{s'},\zeta_{G'_{s'}})$, there must be a foremost temporal path from $v$ to $y$ in $\mathcal{D}(s)$ under $\lambda^*$ which departs at $t_1$ and arrives by $t_2$. If $\mathcal{R}^{\text{in}}_s(u,t^*(u))(y)=t_2$ where $\mathcal{R}^{\text{in}}_s(v,t_1)(u)<t^*(u)$, then we have a path from $v$ to $u$ in $\mathcal{D}(s)$ which departs at $t_1$ and arrives before $t^*$ and a path from $u$ to $y$ which departs at $t^*$ and arrives by $t_2$. Concatenating these paths gives us a path under $\lambda^*$ which departs $v$ at $t_1$ and arrives at $y$ by $t_2$. We can see that $t_2$ must be the foremost arrival time of a path from $v$ to $w$ departing at $t$ under $\lambda^*$, else we contradict that $(P_{s'},\mathcal{R}^{\text{below}}_{s'}, \mathcal{R}^{\text{in}}_{s'} ,\zeta_{G'_{s'}})$ is supported by $\lambda^*_{s'}$. Thus criteria (2)-(4) hold.

    All that remains is to show that criterion (5) must hold. Since $G_s=G_{s'}$, the number of distinct vertices reached below by vertices in the bag $\mathcal{D}(s)$ given departure times $U$ must be equal to the number of vertices reached by vertices in the bag $\mathcal{D}(s')$ given departure times which account for the possibility of a path which traverses $u$. Since $u$ has no neighbours strictly below $\mathcal{D}(s')$, this is only possible if a path reaches a vertex in the bag by a time earlier than prescribed by $U$. By our assertions that criteria (2)-(4) hold, this departure time must be described by $\min (U,\mathcal{R}^{\text{in}}_s(u,U(u))$. For each function $U$ the vertices in $\mathcal{D}(s)$ must reach exactly $\mathcal{R}^{\text{below}}_{s'}(U')$ distinct vertices below the bag where $U'=\min (U,\mathcal{R}^{\text{in}}_s(u,U(u))$ by paths which depart at the times given by $U$ if $\mathcal{R}^{\text{below}}_{s'}(U')<k$ and at least $\mathcal{R}^{\text{below}}_{s'}(U')$ vertices otherwise. This follows from the fact that the subtrees rooted at $s$ and $s'$ consist of the same vertex set. Thus, criterion (5) must hold. Therefore the state $(P_{s},\mathcal{R}^{\text{below}}_{s}, \mathcal{R}^{\text{in}}_{s} ,\zeta_{G'_{s}})$ is valid if and only if the criteria hold.
\end{proof}

\subsubsection*{Forget nodes}
Suppose that a forget node $s$ has child $s'$ and that $\mathcal{D}(s')\setminus\mathcal{D}(s)=\{u\}$. 
\medskip
\begin{lemma}\label{lem:reach-nice-tree-forget}
   Let $s$ be a forget node with child $s'$ and forgotten vertex $u$. A state $(P_s,\mathcal{R}^{\text{below}}_s, \mathcal{R}^{\text{in}}_s ,\zeta_{G'_s})$ of a forget node $s$ is valid if and only if there exists a valid state $(P_{s'},\mathcal{R}^{\text{below}}_{s'}, \mathcal{R}^{\text{in}}_{s'} ,\zeta_{G'_{s'}})$ of $s'$ such that
\begin{enumerate}[leftmargin=0.5cm]
    \item $P_{s}=P_{s'}|_{E(G[\mathcal{D}(s')])}$;
    \item for each function $U\in \mathcal{U}_{\mathcal{D}(s)}$, if there is a vertex $w$ and finite times $t_1$, $t_2$ such that $U(w)=t_1$ and $\mathcal{R}^{\text{in}}_{s'}(w,t_1)(u)=t_2<\infty$, then $\mathcal{R}^{\text{below}}_s(U)=\min(\mathcal{R}^{\text{below}}_{s'}(U')+1,h)$ where $U'$ is the extension of $U$ where $U'(u)=t_2+1$, otherwise $\mathcal{R}^{\text{below}}_s(U)=\mathcal{R}^{\text{below}}_{s'}(U'')$ where $U''$ is the extension of $U$ such that $U''(u)=\infty$;
    \item $\mathcal{R}^{\text{in}}_s(v,t)=\mathcal{R}^{\text{in}}_{s'}(v,t)|_{\mathcal{D}(s)}$ for all $v\in \mathcal{D}(s)$;
    \item $\zeta_{G'_s}= \zeta_{G'_{s'}} + |\{(e,t):u\in e, (e,t)\in \text{changed}(P_{s'})\}|$. 
\end{enumerate}
\end{lemma}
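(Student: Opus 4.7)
The plan is to prove the biconditional in the same style as the introduce-node case: exhibit a single perturbation $\lambda^*$ that serves as a witness on both sides. The key structural observation is that a forget node introduces no vertices and no new edges of $G$ below the bag boundary, so $G_s = G_{s'}$ as subgraphs of $G$. Consequently any $\delta$-perturbation of $(G_s, \lambda|_{G_s})$ \emph{is} a $\delta$-perturbation of $(G_{s'}, \lambda|_{G_{s'}})$ and conversely. The difference is purely bookkeeping: the vertex $u$ moves from $\mathcal{D}(s')$ into $V(G'_s) = V(G'_{s'}) \cup \{u\}$, and every edge of $G[\mathcal{D}(s')]$ incident to $u$ moves from the ``bag'' tally $P_{s'}$ into the ``below'' tally $\zeta_{G'_s}$.

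For the forward direction, suppose $\lambda^*$ supports $(P_s, \mathcal{R}^{\text{below}}_s, \mathcal{R}^{\text{in}}_s, \zeta_{G'_s})$. I would define the induced child data by letting $P_{s'}$ be $\lambda^*$ restricted to $E(G[\mathcal{D}(s')])$, $\mathcal{R}^{\text{in}}_{s'}(v,t)$ record foremost arrival times under $\lambda^*$ to each vertex of $\mathcal{D}(s')$ (including $u$), $\mathcal{R}^{\text{below}}_{s'}$ record distinct vertices reached in $G'_{s'}$ under the prescribed departures, and $\zeta_{G'_{s'}}$ count the perturbations of $\lambda^*$ lying strictly in $G'_{s'}$. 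Criteria (1), (3) and (4) are immediate: (1) holds since $\mathcal{D}(s) \subseteq \mathcal{D}(s')$ and $P_s$ and $P_{s'}$ are both restrictions of the same $\lambda^*$; (3) holds because restricting the codomain of $\mathcal{R}^{\text{in}}_{s'}(v,t)$ to $\mathcal{D}(s)$ recovers foremost arrival times in $(G_s, \lambda^*) = (G_{s'}, \lambda^*)$; and (4) holds because the edges of $G'_s$ not lying in $G'_{s'}$ are exactly those incident to $u$ that were previously counted within $P_{s'}$, contributing $|\{(e,t) : u \in e, (e,t) \in \text{changed}(P_{s'})\}|$ to the below-bag perturbation count.

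Criterion (2) is the heart of the proof. Given $U \in \mathcal{U}_{\mathcal{D}(s)}$, the set of vertices of $V(G'_s) = V(G'_{s'}) \cup \{u\}$ reachable from $\mathcal{D}(s)$ at departures $U$ under $\lambda^*$ is obtained by taking the reachable set of $V(G'_{s'})$ from the augmented bag $\mathcal{D}(s')$ with a suitable departure time assigned to $u$. If some $w \in \mathcal{D}(s)$ with $U(w) = t_1 < \infty$ satisfies $\mathcal{R}^{\text{in}}_{s'}(w, t_1)(u) = t_2 < \infty$, then $u$ itself is reached, and any vertex reachable only via $u$ is picked up by extending a path leaving $u$ at time $\geq t_2 + 1$ (strict temporality). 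Treating $u$ as a bag vertex with departure $t_2 + 1$ and other bag vertices with their $U$-departures yields exactly $\mathcal{R}^{\text{below}}_{s'}(U')$ vertices of $V(G'_{s'})$, to which we add one for $u$ and cap at $h$. If no such $w, t_2$ exist then $u$ is unreachable, paths cannot be continued through $u$, and we use $U''(u) = \infty$. When multiple witnesses $w$ give different $t_2$, the minimum is the correct choice and is automatically selected by the foremost optimality of $\mathcal{R}^{\text{in}}_{s'}$.

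The reverse direction runs the same bookkeeping backwards: given a valid child state with witness $\lambda^*$, viewing $\lambda^*$ as a perturbation of $G_s$ (which equals $G_{s'}$) and verifying each criterion reconstructs a supporting perturbation of the forget state. The main obstacle will be criterion (2): showing that the contribution of $u$ is counted once and only once, that paths through $u$ use the optimal onward departure $t_2 + 1$, and that the $\min(\cdot, h)$ cap is propagated correctly when reachability saturates. Once this is settled, the remaining criteria follow by the partition of the perturbed edge set into those inside $\mathcal{D}(s)$, those incident to $u$, and those strictly in $G'_{s'}$, together with the identity $V(G_s) = V(G_{s'})$.
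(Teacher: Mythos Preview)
Your proposal is correct and follows essentially the same approach as the paper: both arguments rest on the structural fact that $G_s = G_{s'}$ for a forget node, so the same perturbation $\lambda^*$ witnesses validity on both sides, with criteria (1), (3), (4) reducing to bookkeeping about where the edges incident to $u$ are counted, and criterion (2) handled by a case split on whether $u$ is reachable from some bag vertex at its assigned departure time. Your treatment is in fact slightly more explicit than the paper's in flagging the choice of $t_2$ when multiple witnesses $w$ exist.
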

\begin{proof}
    We show that we have a valid state $(P_s,\mathcal{R}^{\text{below}}_s, \mathcal{R}^{\text{in}}_s ,\zeta_{G'_s})$ of $s$ which fits the criteria given in the lemma statement if and only if there is a temporal assignment $\lambda'$ which supports it. 
    
    We begin by showing that if the criteria hold, the perturbation $\lambda'$ which supports the child state $(P_{s'},\mathcal{R}^{\text{below}}_{s'}, \mathcal{R}^{\text{in}}_{s'} ,\zeta_{G'_{s'}})$ must also support $(P_{s},\mathcal{R}^{\text{below}}_{s}, \mathcal{R}^{\text{in}}_{s} ,\zeta_{G'_{s}})$. Since we have a forget node, $G_s=G_{s'}$. Therefore, any $(\delta,\zeta)$-perturbation $\lambda'$ of $(G_{s'},\lambda)$ must be a $(\delta,\zeta)$-perturbation of $(G_{s},\lambda)$. By description of the state, the restriction of $P_{s'}$ to $P_s$ coincides with $\lambda'$ and the number of edges perturbed below $\mathcal{D}(s)$ must be exactly the sum of $\zeta_{G'_{s'}}$ and the edges which are perturbed under $P_{s'}$ which have been forgotten in $s$. Therefore, $P_{s'}$ and $\zeta_{G'_{s'}}$ must coincide with $\lambda'$. Since we allow paths to vertices in the bag which traverse forgotten vertices in our calculation of $\mathcal{R}^{\text{in}}_{s}$ and, by criterion (3), $\mathcal{R}^{\text{in}}_{s}=\mathcal{R}^{\text{in}}_{s'}|_{\mathcal{D}(s)}$, if $(P_{s'},\mathcal{R}^{\text{below}}_{s'}, \mathcal{R}^{\text{in}}_{s'} ,\zeta_{G'_{s'}})$ is supported by $\lambda'$, then our calculation of $\mathcal{R}^{\text{in}}_s$ must be supported by $\lambda'$. 
    
    Now suppose that $\mathcal{R}^{\text{below}}_s$ is as described in criterion (2). Then we show that the number of vertices below the bag $\mathcal{D}(s)$ temporally reachable under $\lambda'$ from vertices in $\mathcal{D}(s)$ after the times assigned by $U$ is exactly $\mathcal{R}^{\text{below}}_s(U)$ if $\mathcal{R}^{\text{below}}_s(U)<h$ and at least $\mathcal{R}^{\text{below}}_s(U)$ otherwise. We have two cases to consider: $\mathcal{R}^{\text{below}}_s(U)=\mathcal{R}^{\text{below}}_{s'}(U'')$ and $\mathcal{R}^{\text{below}}_s(U)=\min(\mathcal{R}^{\text{below}}_{s'}(U')+1,h)$. The latter occurs, by criterion (2), only if there is a vertex $w$ and finite times $t_1$, $t_2$ such that $U(w)=t_1$ and $\mathcal{R}^{\text{in}}_{s'}(w,t_1)(u)=t_2$. That is, it occurs if $u$ is temporally reachable from any vertex $w$ by a temporal path departing at $U(w)$. If this is true, the vertices in $\mathcal{D}(s)$ must reach all of the vertices below the bag reached by vertices of the child bag and $u$ by a path departing at either $u$ or another vertex $v\in\mathcal{D}(s)$ the time given by $U$. We note in particular that any vertices below the bag reached from $u$ must still be temporally reachable from vertices in the bag since the path from $w$ to $u$ arrives before the path from $u$ to vertices below the bag departs under $U'$. Therefore, in this case, the number of vertices below the bag $\mathcal{D}(s)$ temporally reachable from vertices in $\mathcal{D}(s)$ after the times assigned by $U$ is exactly $\mathcal{R}^{\text{below}}_s(U)$ under $\lambda'$ if $\mathcal{R}^{\text{below}}_s(U)<k$ and at least $\mathcal{R}^{\text{below}}_s(U)$ otherwise. 
    
    In the other case, there are no temporal paths from any vertex $v\in \mathcal{D}(s)$ departing at time $U(v)$ which traverse $u$ under $\lambda$. Therefore, the number of vertices below the bag $\mathcal{D}(s)$ temporally reachable from vertices in $\mathcal{D}(s)$ after the times assigned by $U$ is at least $\mathcal{R}^{\text{below}}_{s'}(U'')$ under $\lambda'$ where $U''$ is the extension of $U$ such that $U(u)=\infty$. 

    We now show the reverse direction, that if a state $(P_s,\mathcal{R}^{\text{below}}_s, \mathcal{R}^{\text{in}}_s ,\zeta_{G'_s})$ of $s$ is supported by some $(\delta,\zeta)$-perturbation $\lambda''$, then the criteria must hold. Since $G_s=G_{s'}$ and all $(\delta,\zeta)$-perturbations of $G_{s'}$ support valid states of $s'$, $\lambda''$ must support a valid state $(P_{s'},\mathcal{R}^{\text{below}}_{s'}, \mathcal{R}^{\text{in}}_{s'} ,\zeta_{G'_{s'}})$ of $s'$ such that $P_{s}=P_{s'}|_{E(G[\mathcal{D}(s')])}$ and $\zeta_{G'_s}= \zeta_{G'_{s'}} + |\{(e,t):u\in e, (e,t)\in \text{changed}(P_{s'})\}|$. What remains to show is that criteria (2) and (3) hold.

    We begin with criterion (3). Since $\lambda''$ also supports $(P_{s'},\mathcal{R}^{\text{below}}_{s'}, \mathcal{R}^{\text{in}}_{s'} ,\zeta_{G'_{s'}})$, the foremost arrival time of any vertex $v$ in $\mathcal{D}(s)$ under $\lambda''$ after some time $t$ must be exactly $\mathcal{R}^{\text{in}}_{s'}(v,t)$. Finally, the total number of distinct vertices below the bag $\mathcal{D}(s)$ reached by vertices $v$ in $\mathcal{D}(s)$ by paths departing at time $U(v)$ must precisely be the total number of vertices reached by vertices in $\mathcal{D}(s)$ departing at the times assigned by $U$ plus any vertices reached from $u$ departing after the foremost arrival of a path from a vertex in $\mathcal{D}(s)$. This is exactly the value of $\mathcal{R}^{\text{below}}_{s}$ as described by (2). Therefore, a state of a forget node is valid if and only if the criteria hold.
\end{proof}
\subsubsection*{Join nodes} 
Suppose that a join node $s$ has children $s_1,s_2$ and that $\mathcal{D}(s)=\mathcal{D}(s_1)=\mathcal{D}(s_2)$. 

We begin with the following definition which will aid us in calculating states of a join node. We use the observation that any foremost path in $G_s$ for a join node $s$ must alternately traverse paths consisting of vertices in the bag $\mathcal{D}(s)$ and vertices strictly below either child. We find the foremost arrival time at a vertex in the bag by splitting a path using $i$ ``alternations'' into two subpaths with some midpoint $x$ and checking whether there is a faster path traversing $x$ such that the paths to and from $x$ are each allowed fewer than $i$ alternations. The foremost arrival time using at most $i$ alternations is calculated recursively using the following formula.

\medskip
\begin{definition}\label{def:join-r-i}
    Let $(P_s,\mathcal{R}^{\text{below}}_s, \mathcal{R}^{\text{in}}_s ,\zeta_{G'_s})$ be a state of a join node $s$ with children $s_1$ and $s_2$. Let $(P_{s_1},\mathcal{R}^{\text{below}}_{s_1}, \mathcal{R}^{\text{in}}_{s_1} ,\zeta_{G'_{s_1}})$ be a valid state of $s_1$ and $(P_{s_2},\mathcal{R}^{\text{below}}_{s_2}, \mathcal{R}^{\text{in}}_{s_2} ,\zeta_{G'_{s_2}})$ be a valid state of $s_2$ such that $P_s=P_{s_1}=P_{s_2}$. For all pairs $(v,t)\in \mathcal{D}(s)\times [0,T(G,\lambda)+\delta]$, we define the function $\mathcal{R}^i:\mathcal{D}(s)\times[0,T(G,\lambda)+\delta]\to \mathcal{U}_{\mathcal{D}(s)}$ as follows: $\mathcal{R}^0(v,t)= \min (\mathcal{R}^{\text{in}}_{s_1}(v,t), \mathcal{R}^{\text{in}}_{s_2}(v,t))$; and for all $i>0$ 
    $$\mathcal{R}^i(v,t)=\min\left(\min_{x\in\mathcal{D}(s)}\left(\mathcal{R}^{i-1}(x,\mathcal{R}^{i-1}(v,t)(x)+1)\right),\mathcal{R}^{i-1}(v,t)\right).$$   
\end{definition}

We can now calculate the valid states of join nodes.
\medskip
\begin{lemma}\label{lem:reach-nice-tree-join}
    Let $s$ be a join node with children $s_1$ and $s_2$. A state $(P_s,\mathcal{R}^{\text{below}}_s, \mathcal{R}^{\text{in}}_s ,\zeta_{G'_s})$ of a join node $\mathcal{D}(s)$ is valid if and only if there exist valid states $(P_{s_1},\mathcal{R}^{\text{below}}_{s_1}, \mathcal{R}^{\text{in}}_{s_1} ,\zeta_{G'_{s_1}})$ of $s_1$ and $(P_{s_2},\mathcal{R}^{\text{below}}_{s_2}, \mathcal{R}^{\text{in}}_{s_2} ,\zeta_{G'_{s_2}})$ of $s_2$ such that
\begin{enumerate}[leftmargin=0.5cm]
    \item $P_{s_1}=P_{s_2}=P_s$;
    \item $\mathcal{R}^{\text{in}}_s= \mathcal{R}^{\lceil\log|\mathcal{D}(s)|\rceil}$ (as defined in Definition~\ref{def:join-r-i});
    \item for all functions $U\in\mathcal{U}_{\mathcal{D}(s)}$, $\mathcal{R}^{\text{below}}_{s}(U)=\min(\mathcal{R}^{\text{below}}_{s_1}(U')+\mathcal{R}^{\text{below}}_{s_2}(U'),h)$ where $U'(v)=\min(U(v),\min_{w\in\mathcal{D}(s)}\mathcal{R}^{\text{in}}_s(w,U(w))(v)+1)$ for all $v\in\mathcal{D}(s)$;
    \item $\zeta_{G'_s}=\zeta_{G'_{s_1}}+\zeta_{G'_{s_2}}$
    \item $\zeta_{G_s}+|\text{changed}(P_s)|\leq \zeta$.
\end{enumerate}
\end{lemma}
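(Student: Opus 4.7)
The plan is to prove both directions of the biconditional by exhibiting, in each direction, a correspondence between a supporting perturbation $\lambda'$ of $(G_s, \lambda|_{G_s})$ for the join state and a pair of supporting perturbations $\lambda'_1, \lambda'_2$ for child states of $s_1$ and $s_2$ respectively. Since the two subtrees below $s_1$ and $s_2$ share vertices only within $\mathcal{D}(s)$, any $(\delta,\zeta)$-perturbation of $(G_s, \lambda|_{G_s})$ decomposes canonically as the union of its restrictions to $E(G_{s_1})$ and $E(G_{s_2})$, with both restrictions coinciding on $E(G[\mathcal{D}(s)])$. This immediately handles criteria (1), (4), and (5): the equality $P_{s_1} = P_{s_2} = P_s$ expresses the consistency of the two restrictions on the shared edges of the bag, $\zeta_{G'_s} = \zeta_{G'_{s_1}} + \zeta_{G'_{s_2}}$ follows from the vertex-disjointness of $V(G'_{s_1})$ and $V(G'_{s_2})$ (tree-decomposition property), and the global budget constraint (5) is the restatement that the perturbation uses at most $\zeta$ edges overall.

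The crux of the forward direction is verifying criterion (2): the foremost arrival times in $(G_s, \lambda')$ from $v$ departing at $t$ are correctly captured by $\mathcal{R}^{\lceil \log|\mathcal{D}(s)|\rceil}$. For this I will argue by induction on $i$ that $\mathcal{R}^i(v,t)(u)$ equals the foremost arrival time at $u \in \mathcal{D}(s)$ of any temporal path in $G_s$ that departs $v$ no earlier than $t$ and whose intersection with $\mathcal{D}(s)$ has at most $2^i + 1$ vertices (that is, the path alternates between $G_{s_1}$ and $G_{s_2}$ at most $2^i$ times through bag vertices). The base case $i=0$ is the validity of $\mathcal{R}^{\text{in}}_{s_1}$ and $\mathcal{R}^{\text{in}}_{s_2}$, since a path with no alternations lies entirely in one of $G_{s_1}$, $G_{s_2}$. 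The inductive step applies the observation that any path with at most $2^i$ alternations can be split at some midpoint $x \in \mathcal{D}(s)$ into two subpaths, each with at most $2^{i-1}$ alternations, exactly matching the recursive definition. To close this argument I will use the fact that a foremost path may be taken to visit each vertex of $\mathcal{D}(s)$ at most once (otherwise we could truncate and obtain an at-least-as-early path), so at most $|\mathcal{D}(s)|$ bag vertices ever appear, giving at most $|\mathcal{D}(s)| - 1 \le 2^{\lceil \log |\mathcal{D}(s)|\rceil}$ alternations. Therefore $\mathcal{R}^{\lceil \log |\mathcal{D}(s)|\rceil}(v,t)$ equals the true vector of foremost arrival times under $\lambda'$.

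For criterion (3), once $\mathcal{R}^{\text{in}}_s$ is correct, consider the departure-time profile $U \in \mathcal{U}_{\mathcal{D}(s)}$ and the adjusted profile $U'$ defined in the statement. I claim that the set $V_1$ of vertices in $V(G'_{s_1})$ reachable under $\lambda'$ from some $v \in \mathcal{D}(s)$ by a path departing $v$ at time $\ge U(v)$ equals the corresponding set reachable from some $w \in \mathcal{D}(s_1)$ departing at time $\ge U'(w)$ inside $G_{s_1}$ alone. The inclusion $\supseteq$ is immediate. For $\subseteq$, any path witnessing a vertex in $V_1$ has a last entry into $\mathcal{D}(s)$ before plunging into $G'_{s_1}$; the arrival time at that entry vertex is bounded by $\mathcal{R}^{\text{in}}_s(w, U(w))(\cdot)+1$ for some $w$, which is exactly what $U'$ records. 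An analogous statement holds for $V_2 \subseteq V(G'_{s_2})$. Since $V(G'_{s_1}) \cap V(G'_{s_2}) = \emptyset$, the counts add with a min against $h$ to yield the formula in (3). Validity of the child states then gives the desired equality.

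The reverse direction glues together two child-supporting perturbations into one on $G_s$: consistency on $E(G[\mathcal{D}(s)])$ is guaranteed by $P_{s_1} = P_{s_2}$, and the perturbed-edge sets are disjoint outside the bag, so the union is a $\delta$-perturbation using at most $\zeta$ edges by (4) and (5). The verifications of $\mathcal{R}^{\text{in}}_s$ and $\mathcal{R}^{\text{below}}_s$ under this glued perturbation are precisely the same arguments above run in the reverse direction. The main obstacle throughout is the $\mathcal{R}^{\text{in}}$ calculation in criterion (2); the doubling invariant and the uniqueness-of-bag-visits lemma for foremost paths are where almost all the work lies.
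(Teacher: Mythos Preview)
Your proposal is correct and follows essentially the same approach as the paper: decompose/glue supporting perturbations across the two subtrees (handling (1), (4), (5) via disjointness of $G'_{s_1}$ and $G'_{s_2}$), establish (2) by induction on $i$ in the doubling recursion $\mathcal{R}^i$ with termination coming from the bound on bag vertices along a foremost path, and derive (3) via the adjusted departure profile $U'$. One small caveat: the paper states the inductive invariant in terms of the number of maximal \emph{separating subpaths} in $G[\mathcal{D}(s)]$ rather than the number of bag vertices on the path---your parenthetical equating ``$2^i+1$ bag vertices'' with ``$2^i$ alternations'' is not literally correct (a path entirely inside $G_{s_1}$ may visit many bag vertices yet have zero alternations), but your ``alternations'' count is a valid alternative invariant and the argument goes through with it.
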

\begin{proof}
    We claim that a state $(P_s,\mathcal{R}^{\text{below}}_s, \mathcal{R}^{\text{in}}_s ,\zeta_{G'_s})$ is supported by a $(\delta,\zeta)$-perturbation $\lambda'$ of $(G,\lambda)$ if and only if restrictions of $\lambda'$ support the states $(P_{s_1},\mathcal{R}^{\text{below}}_{s_1}, \mathcal{R}^{\text{in}}_{s_1} ,\zeta_{G'_{s_1}})$ and $(P_{s_2},\mathcal{R}^{\text{below}}_{s_2}, \mathcal{R}^{\text{in}}_{s_2} ,\zeta_{G'_{s_2}})$ of $s_1$ and $s_2$ respectively and the criteria in the lemma hold. We begin by showing that the state $(P_{s},\mathcal{R}^{\text{below}}_{s}, \mathcal{R}^{\text{in}}_{s} ,\zeta_{G'_{s}})$ is supported by some $(\delta,\zeta)$-perturbation $\lambda'$ if the criteria hold. We refer to the $(\delta,\zeta)$-perturbations supporting $s_1$ and $s_2$ as $\lambda_1$ and $\lambda_2$ respectively. Let $\lambda'$ be the $(\delta,\zeta)$-perturbation of $G_s$ which is equal to $\lambda_1$ when restricted to $G_{s_1}$ and equal to $\lambda_2$ when restricted to $G_{s_2}$. This is well-defined by criterion (1).

    Since the set of vertices in the bags of nodes $s$, $s_1$ and $s_2$ is the same, and $P_s=P_{s_1}=P_{s_2}$, $\lambda'$ must coincide with $P_s$ when restricted to the edges in $G[\mathcal{D}(s)]$. In addition, since the edge sets of $G'_{s_1}$ and $G'_{s_2}$ are disjoint, $\lambda'$ must perturb exactly $\zeta_{G_s}$ edges below the bag. By criterion (5), the number of edges perturbed by $\lambda$ in total must be at most $\zeta$. Thus $\lambda'$ is a $(\delta,\zeta)$-perturbation of $G_s$. We now show that if criteria (2) and (3) hold, $\lambda'$ supports the state $(P_s,\mathcal{R}^{\text{below}}_s, \mathcal{R}^{\text{in}}_s ,\zeta_{G'_s})$.
    \begin{claim}\label{claim:join}
         A foremost path in $(G_s,\lambda')$ from $v$ to $u$ departing at $t$ arrives at $t'$ for all times $t,t'$ and vertices $u,v$ in $\mathcal{D}(s)$ if and only if $\mathcal{R}^{\text{in}}_s(v,t)(u)=t'$.
    \end{claim}
    \begin{proof}
    
    We begin with the observation that any foremost temporal path between two vertices in $(G_s,\lambda')$ must be a concatenation of (possibly trivial) paths alternating between $G_{s_1}$ and $G_{s_2}$ which are separated by paths consisting of at least one vertex in $\mathcal{D}(s)$. This is because the only edges in $G_s$ with an end point in each of $G_{s_1}$ and $G_{s_2}$ must be in $G[\mathcal{D}(s)]$. Since there are at most $|\mathcal{D}(s)|$ distinct vertices in $\mathcal{D}(s)$, there are at most $|\mathcal{D}(s)|$ subpaths of a foremost path contained entirely in $G[\mathcal{D}(s)]$. We will refer to a maximal subpath contained entirely in $G[\mathcal{D}(s)]$ as a \emph{separating} subpath.

    We will show by induction on $i$ that $\mathcal{R}^i(v,t)(w)=t'$ if and only if $t'$ is the earliest arrival time of a path from $v$ to $x$ departing at $t$ which contains at most $2^i+1$ separating subpaths. We begin with our base case, $i=0$. Here we set the value of $\mathcal{R}^0(v,t)(w)$ to be $\min (\mathcal{R}^{\text{in}}_{s_1}(v,t)(w), \mathcal{R}^{\text{in}}_{s_2}(v,t)(w))$ for all $v,w\in\mathcal{D}(s)$ and $t\in [0,T(G,\lambda)+\delta]$. We first show that if a path containing at most two separating subpaths from $v$ to $w$ departing at time $t$ arrives at time $t'$, then $\mathcal{R}^0(v,t)(w)=t'$. Note that since we are looking for a path from $v\in\mathcal{D}(s)$ to $w\in \mathcal{D}(s)$ there must be at least one separating subpath if any temporal path exists. Therefore, we must show that $\min (\mathcal{R}^{\text{in}}_{s_1}(v,t)(w), \mathcal{R}^{\text{in}}_{s_2}(v,t)(w))$ for all $v,w\in\mathcal{D}(s)$ is the earliest arrival time of a temporal path containing either 1 or 2 separating subpaths.
    If a path contains at most 2 separating subpaths, then it must contain at most one subpath which traverses vertices in only one of $G'_{s_1}$ or $G'_{s_2}$. Assume without loss of generality, that it traverses edges in the former subgraph. Then a path from $v$ to $w$ with at most 2 separating subpaths in $G_s$ departing at time $t$ must be of the form $p_v p_1 p_w$ where $p_v$ and $p_w$ are separating subpaths containing $v$ and $w$ respectively and $p_1$ is a path traversing only vertices in $V(G_{s_1})\setminus \mathcal{D}(s)$. By validity of $(P_{s_1},\mathcal{R}^{\text{below}}_{s_1}, \mathcal{R}^{\text{in}}_{s_1} ,\zeta_{G'_{s_1}})$, the arrival time of this path must be $\mathcal{R}^{\text{in}}_{s_1}(v,t)(w)$ which is at most $\mathcal{R}^{\text{in}}_{s_2}(v,t)(w)$. Thus $\mathcal{R}^0(v,t)(w)$ gives the earliest arrival time of a path with at most 2 separating subpaths from $v$ to $w$ departing at time $t$. Thus, $\mathcal{R}^0(v,t)(w)=t'$ if and only if the earliest arrival time of a path containing at most two separating subpaths from $v$ to $w$ departing at time $t$ is $t'$ under $\lambda'$.

    We now assume that for all $j\leq i$, $\mathcal{R}^j(v,t)(w)=t'$ if and only if the earliest arrival time of a path with at most $2^j+1$ separating subpaths from $v$ to $w$ departing at time $t$ in $(G_s,\lambda')$ arrives at $t'$. We begin by showing that if $\mathcal{R}^{i+1}(v,t)(w)=t'$, then a path containing at most $2^{i+1}+1$ separating subpaths from $v$ to $w$ departing at time $t$ which arrives earliest arrives at time at least $t'$. For all vertices $v$, $w\in\mathcal{D}(s)$ and times $t$, we calculate $\mathcal{R}^{i+1}(v,t)(w)$ by finding the minimum of $\mathcal{R}^{i}(v,t)(w)$ and $\min_{x\in\mathcal{D}(s)}\mathcal{R}^{i}(x,\mathcal{R}^{i}(v,t)(x)+1)(w)$ for all $x\in\mathcal{D}(s)$. Suppose, for a contradiction, that under $\lambda'$ the earliest arrival time of a path $P$ containing at most $2^{i+1}+1$ separating subpaths from $v$ to $w$ departing at $t$ is $t''<t'$. We have two cases to consider. Namely, $P$ contains at most $2^i+1$ separating subpaths, or the number of separating subpaths in $P$ is between $2^j+1$ and $2^{i+1}+1$. If the former is true, then $\mathcal{R}^{i}(v,t)(w)=t''$ by the inductive hypothesis; a contradiction.
    
    Now suppose that the number of separating subpaths in $P$ is between $2^i+1$ and $2^{i+1}+1$. Let $x$ be the final vertex in the $(2^i)$th separating subpath, $P_{vx}$ be the subpath of $P$ from $v$ to $x$, and $P_{xw}$ be the subpath of $P$ from $x$ to $w$. Then, $P_{vx}$ and $P_{xw}$ must contain at most $2^{i}+1$ and at least 2 separating subpaths each. In addition, $P_{vx}$ must be a path containing at most $2^i+1$ separating paths from $v$ to $x$ departing at time $t$ with earliest arrival time, and $P_{xw}$ must be a path containing at most $2^i+1$ separating paths from $x$ to $w$ departing strictly after the arrival time of $P_{vx}$ with earliest arrival time.
    Thus, $t''\geq\mathcal{R}^{i}(x,\mathcal{R}^{i}(v,t)(x)+1)(w)$, a contradiction of our assumption that $t''<t'$. 
    
    We now show that if $\mathcal{R}^{i+1}(v,t)(w)=t'$, then a path containing at most $2^{i+1}+1$ separating subpaths from $v$ to $w$ departing at time $t$ which arrives earliest arrives at time at most $t'$. If $\mathcal{R}^{i+1}(v,t)(w)=t'$, either $t'=\mathcal{R}^{i-1}(v,t)$ or there exists a vertex $x\in\mathcal{D}(s)$ such that 
    $t'=\mathcal{R}^{i-1}(x,\mathcal{R}^{i-1}(v,t)(x)+1)$. In both cases, this implies there exists a path containing at most $2^{i+1}+1$ separating subpaths from $v$ to $w$ departing at time $t$ which arrives at time $t'$. Thus the earliest such path must arrive at time $t''\geq \mathcal{R}^{i+1}(v,t)(w)$.
    Therefore, if $\mathcal{R}^{i+1}(v,t)(w)=t'$, then a path containing at most $2^{i+1}+1$ separating subpaths that arrives earliest from $v$ to $w$ departing at time $t$ arrives at time $t'$.

    We now show that if the path containing at most $2^{i+1}+1$ separating subpaths from $v$ to $w$ departing at time $t$ which arrives earliest arrives at time $t'$, then $\mathcal{R}^{i+1}(v,t)(w)=t'$. We note that a path containing at most $2^{i+1}+1$ separating subpaths must either contain at most $2^i+1$ separating subpaths, or be the concatenation of 2 subpaths consisting of a path from $v$ to a vertex $x\in\mathcal{D}(s)$ departing at time $t$ containing at most $2^i+1$ separating subpaths and a path from $x$ to $w$ departing at time $t'+1$ containing at most $2^i+1$ separating subpaths where $t'$ is the time of arrival of the former subpath. Therefore, we can find the earliest arrival time of such a path by finding the vertex $x$ in $\mathcal{D}(s)$ that minimises the earliest arrival time of a path comprised of a subpath containing at most $2^i+1$ separating subpaths from $v$ to $x$ and a subpath containing at most $2^i+1$ separating subpaths from $x$ to $w$, and taking the minimum of this arrival time and the minimum time of arrival of a subpath containing at most $2^i+1$ separating subpaths. This is precisely our calculation of $\mathcal{R}^{i+1}$. Thus, $\mathcal{R}^{i+1}(v,t)(w)=t'$ if and only if the earliest arrival time of a path with at most $2^{i+1}+1$ separating subpaths from $v$ to $w$ departing at time $t$ in $(G_s,\lambda')$ arrives at $t'$. By induction, $\mathcal{R}^i(v,t)(w)=t'$ if and only if the earliest arrival time of a path with at most $2^i+1$ separating subpaths from $v$ to $w$ departing at time $t$ in $(G_s,\lambda')$ arrives at $t'$ for all $i\in\mathbb{N}$.

    As stated earlier, there are at most $|\mathcal{D}(s)|$ subpaths of a foremost path contained entirely in $G[\mathcal{D}(s)]$ since there are at most $|\mathcal{D}(s)|$ distinct vertices in $\mathcal{D}(s)$. Thus, there are at most $|\mathcal{D}(s)|$ separating paths in a foremost path in $G_s$. Therefore, the foremost arrival time under $\lambda'$ of a path given a departing vertex and time must be given by $\mathcal{R}^{\lceil\log|\mathcal{D}(s)|\rceil}$. Therefore if (2) holds, $\mathcal{R}^{\text{in}}_s$ matches the state supported by $\lambda'$. 
    \end{proof}

    We now show that, if (3) holds, the state $(P_s,\mathcal{R}^{\text{below}}_s, \mathcal{R}^{\text{in}}_s ,\zeta_{G'_s})$ must be supported by $\lambda'$. We do this by showing that, under $\lambda'$, the number of distinct vertices reached below the bag by vertices in the bag at the departure times given by any function $U$ is precisely $\mathcal{R}^{\text{below}}_s$ as given by (3). The departure time of the paths reaching vertices below the bag must be after the foremost arrival time of a path from a vertex in the bag departing at the time specified by $U$. This includes the trivial path from $v$ to itself departing at $U(v)$. Note that, since each vertex is potentially assigned a different departure time by $U$, a path could arrive from a vertex $w$ in $\mathcal{D}(s)$ at a vertex $v\in \mathcal{D}(s)$ before $U(v)$ by a path departing at $U(w)$.
    Since the sets of vertices in $G_{s_1}\setminus\mathcal{D}(s)$ and $G_{s_2}\setminus\mathcal{D}(s)$ are disjoint, the number of vertices below the bag which are temporally reachable from a vertex in the bag at the departure times given under $\lambda'$ must be the sum of those reachable from that vertex in $G_{s_1}\setminus\mathcal{D}(s)$ and $G_{s_2}\setminus\mathcal{D}(s)$ under the restriction of $\lambda$ to each subgraph. This is precisely the number described by (3) (upper bounded by $h$). Therefore, if all criteria hold, the $(\delta,\zeta)$-perturbation of $G_s$ which is equal to $\lambda_1$ when restricted to $G_{s_1}$ and equal to $\lambda_2$ when restricted to $G_{s_2}$ must support the state $(P_s,\mathcal{R}^{\text{below}}_s, \mathcal{R}^{\text{in}}_s ,\zeta_{G'_s})$.

    Now suppose we have a $(\delta,\zeta)$-perturbation of $G_s$. Call this perturbation $\lambda''$. We will show that the criteria must hold. We denote by $\lambda''_1$ and $\lambda''_2$ the restrictions of $\lambda''$ to $G_{s_1}$ and $G_{s_2}$ respectively. By construction, criteria (1), (4), and (5) must hold. Furthermore, given states supported by each of $\lambda''_1$ and $\lambda''_2$, the foremost arrival time of paths between vertices in the bag under $\lambda''$ must be the foremost arrival time of a path containing at most $\mathcal{D}(s)$ separating subpaths. As shown in Claim~\ref{claim:join}, this is precisely $\mathcal{R}^{\lceil\log|\mathcal{D}(s)|\rceil}=\mathcal{R}^{\text{in}}_s$. Therefore, criterion (2) must hold. Finally, for any function $U\in \mathcal{U}_{\mathcal{D}(s)}$, the number of distinct vertices reachable from vertices $v\in\mathcal{D}(s)$ departing at $U(v)$ under $\lambda''$ must be the sum of vertices reachable strictly below the bags $\mathcal{D}(s_1)$ and $\mathcal{D}(s_2)$ by paths departing after the earliest time of arrival at vertices in the bag by paths departing at the times given by $U$. Hence, criterion (3) gives $\mathcal{R}^{\text{below}}_s$ under $\lambda''$. Thus all criteria hold and a state $(P_s,\mathcal{R}^{\text{below}}_s, \mathcal{R}^{\text{in}}_s ,\zeta_{G'_s})$ is supported by a $(\delta,\zeta)$-perturbation $\lambda'$ of $(G,\lambda)$ if and only if restrictions of $\lambda'$ support the states $(P_{s_1},\mathcal{R}^{\text{below}}_{s_1}, \mathcal{R}^{\text{in}}_{s_1} ,\zeta_{G'_{s_1}})$ and $(P_{s_2},\mathcal{R}^{\text{below}}_{s_2}, \mathcal{R}^{\text{in}}_{s_2} ,\zeta_{G'_{s_2}})$ of $s_1$ and $s_2$ respectively and the criteria in the lemma hold.
\end{proof}
\subsubsection{Proof of Theorem \ref{thm:trlp-tree-decomp}}
\begin{theorem*}[Theorem \ref{thm:trlp-tree-decomp} restated]
    Given a temporal graph $(G,\lambda)$ where the treewidth of $G$ is at most $\omega$, we can solve TRLP in time
    $$O \left( n\omega^5 \zeta^3 (T(G,\lambda)+\delta)^{\omega})  \cdot (2 \delta h  T(G,\lambda) )^{4\omega^2 (T(G,\lambda) + \delta)^\omega}   \right).$$
    In particular, we can solve TRLP in polynomial time when the treewidth of the underlying graph, the permitted perturbation size $\delta$ and the lifetime of $(G,\lambda)$ are all bounded by constants.
\end{theorem*}
\begin{proof}
We can compute a tree decomposition in $\omega^{O(\omega^3)}n$ time given a graph on $n$ vertices with treewidth $\omega$ \cite{bodlaender_linear-time_1996}. Furthermore, given a tree decomposition of a graph of width at most $\omega$, we can find a nice tree decomposition of the graph of width at most $\omega$ in time bounded by $O(\omega^3n)$ \cite{Cygan2015}. Given a nice tree decomposition of a temporal graph $(G,\lambda)$ as described in Definition~\ref{def:nice-tree-decomp}, we solve TRLP by finding all valid states of the root node for each choice of source vertex $x$. For such a node, we check if there exists a valid state of the root node $s$ containing a source vertex $x$ such that $\mathcal{R}^{\text{below}}_s(U)\geq h-1$ where $U(x)=0$.

Our algorithm works up from the leaves to the root, computing all valid states for each node, so that when we consider any node we have access to the set of valid states for all its children. We use Lemmas~\ref{lem:reach-nice-tree-intro},~\ref{lem:reach-nice-tree-forget} and~\ref{lem:reach-nice-tree-join} to find the valid states of each type of node given the valid states of its children. To bound the running time of the algorithm, we bound the time needed to compute the valid states of a node given the valid states of its children (if there are any). 

We can bound the number of possible states for each bag by considering the number of possibilities for each component of a state. There are at most $\zeta$ edges perturbed below a given bag. The number of time-edges per bag is bounded above by $O(w^2\cdot T(G,\lambda))$ where $w$ is the treewidth of the underlying graph $G$. For a given time-edge, there are $2\delta$ times that it could be active under a perturbation of $\lambda$, therefore there are at most $(2\delta)^{w^2T(G,\lambda)}$ possible functions $P_s$. Recall that $\mathcal{R}^{\text{in}}$ is a function from the set $\mathcal{D}(s)\times[0,T(G,\lambda)+\delta]$ to the set $\mathcal{U}_{\mathcal{D}(s)}$. There are at most $\omega (T(G,\lambda)+\delta)$ and $(T(G,\lambda)+\delta+1)^{\omega})$ elements in the domain and codomain of $\mathcal{R}^{\text{in}}$ respectively. Thus, there are at most $(T(G,\lambda)+\delta)^{\omega^2(T(G,\lambda)+\delta)}$ functions $\mathcal{R}^{\text{in}}$ for each bag $\mathcal{D}(s)$.

As stated earlier, for a given bag $\mathcal{D}(s)$, there are at most $(T(G,\lambda)+\delta+1)^{\omega}$ functions in $\mathcal{U}_{\mathcal{D}(s)}$. Since this is the domain of $\mathcal{R}^{\text{below}}$ and the codomain is of cardinality $h+1$, there are at most
$$h^{(T(G,\lambda)+\delta+1)^{{\omega}}}$$
possible functions $\mathcal{R}^{\text{below}}$.
There are therefore at most 
\begin{align*}
    O\left(\zeta\cdot (2\delta)^{w^2T(G,\lambda)}\cdot (T(G,\lambda)+\delta+1)^{\omega^2(T(G,\lambda)+\delta)}\cdot h^{(T(G,\lambda)+\delta)^{{\omega}}}\right) \\
    = O\left( \zeta \cdot (2 \delta h  T(G,\lambda) )^{\omega^2 (T(G,\lambda) + \delta)^\omega} \right)
\end{align*}
possible states for a node in the nice tree decomposition. 

We can find the valid states for leaf nodes in constant time. Given a state of an introduce node, we check whether it is valid by checking that each of the 7 criteria in Lemma~\ref{lem:reach-nice-tree-intro} hold for a valid state of its child. Given a child state, the time needed to check each criterion is as follows:
\begin{itemize}[leftmargin=0.5cm]
    \item For Criterion 1, we can check in time $O(\omega^2\tau)$ that $P_s$ extends $P_{s'}$ of the child state by examining each time-edge in the bag.
    \item For each of Criteria 2-4, and for each pair $(v,t)$, we need time $O(\omega \tau)$ to consider all time-edges incident with $u$ and perform constant-time lookups for each; repeating for every pair $(v,t)$ means the total time needed to check all three criteria is $O(\omega^2 \tau (T(G,\lambda)+\delta))$.
    \item For Criterion 5, for a fixed $U$ we need time $O(\omega)$ to compute the updated departure times for all vertices, and then look up corresponding values in the child state; summing over all possible functions $U$ means this criterion can be checked in time $O(\omega(T(G,\lambda)+\delta)^{\omega})$.
    \item For Criterion 6 we need only constant time to check that $\zeta_{G'_s}=\zeta_{G'_{s'}}$.
    \item For Criterion 7, we can compute the number of perturbed time-edges under $P_s$ in time $O(\omega^2 \tau)$ by considering every time-edge in the bag, then perform a constant-time arithmetic operation.
\end{itemize}
Overall, this means the time needed to check consistency of a given combination of states for a parent and child is
$O(\omega^2(T(G,\lambda)+\delta)^{\omega})$.  This must be repeated for every pair of possible parent and child states, giving us a run-time of 
$$O\left( \zeta^2\omega^2 \cdot (2 \delta h  T(G,\lambda) )^{3\omega^2 (T(G,\lambda) + \delta)^\omega} \right) .$$

To find valid states of forget nodes we check consistency of each combination of parent and child states using the criteria of Lemma \ref{lem:reach-nice-tree-forget} as follows:
\begin{itemize}[leftmargin=0.5cm]
    \item For Criterion 1, we can check in time $O(\omega^2\tau)$ that $P_s$ is a restriction of $P_{s'}$ of the child state by examining each time-edge in the bag.
    \item To check Criterion 2, for each function $U \in \mathcal{U}_{\mathcal{D}(s)}$ we must perform a constant number of constant-time operations for each vertex $w \in \mathcal{D}(s)$; repeating for every $U$ and $w$ therefore takes time $O(\omega(T(G,\lambda)+\delta)^{\omega})$.
    \item For Criterion 3, for each pair $(v,t)$  with $v \in \mathcal{D}(s)$ we check that $\mathcal{R}^{\text{in}}_s(v,t) = \mathcal{R}^{\text{in}}_{s'}$ in time $O(\omega)$; repeating this for each pair $(v,t)$ gives a total time of $O(\omega^2(T(G,\lambda)+\delta))$ to check the criterion.
    \item We require $O(\omega\tau)$ to check Criterion 4 by counting the perturbed edges incident to the forgotten vertex.
\end{itemize}
Overall, this means the time needed to check consistency of a given combination of states for a parent and child is $O(\omega^2(T(G,\lambda) + \delta)^\omega)$. This must be repeated for every pair of possible parent and child states, giving us a run-time of 
$$O\left( \zeta^2\omega^2 \cdot (2 \delta h  T(G,\lambda) )^{3\omega^2 (T(G,\lambda) + \delta)^\omega} \right) .$$

For join nodes we need to check combinations of a parent state and a state for each child.  The time needed to check whether the criteria of Lemma~\ref{lem:reach-nice-tree-join} hold for each combination is as follows:
\begin{itemize}[leftmargin=0.5cm]
    \item For Criterion 1, we can check in time $O(\omega^2\tau)$ that $P_s=P_{s_1}=P_{s_2}$ by examining each time-edge in the bag.
    \item To check Criterion 2 we must compute $R^{\lceil \log \omega \rceil}$ as in Definition~\ref{def:join-r-i}.  Initialising $R^0(v,t)(w)$ takes constant time for each pair $(v,t)$ and vertex $w$, so time $O(\omega^2(T(G,\lambda) + \delta))$ overall.  To compute $\mathcal{R}^i(v,t)$ for a fixed pair $(v,t)$ and vertex $w$, given constant-time access to all values of $\mathcal{R}^{i-1}$, we require time $O(\omega)$; Calculating this for all pairs $(v,t)$ and vertices $w$ therefore takes time $O(\omega^3(T(G,\lambda) + \delta))$.  Since we repeat this process $O(\log \omega)$ times, the total time needed to check Criterion 2 is $O(\omega^3\log \omega(T(G,\lambda)+\delta))$ time.
    \item Criterion 3 requires us to compute an updated function for the departure times of the paths. For each vertex in the bag $\mathcal{D}(s)$, this takes time $O(\omega)$. Checking the criterion for all functions $U\in\mathcal{U}_{\mathcal{D}(s)}$, therefore requires $O(\omega^2(T(G,\lambda)+\delta)^{\omega})$ time.
    \item We can check Criterion 4 in constant time by checking that $\zeta_{G_s}$ is the sum of $\zeta_{G_{s_1}}$ and $\zeta_{G_{s_2}}$.
    \item Criterion 5, we can compute the number of perturbed time-edges under $P_s$ in time $O(\omega^2 \tau)$ by considering every time-edge in the bag, then perform a constant-time arithmetic operation.
\end{itemize}
Overall, this means that the time needed to check consistency of a given combination of states for a parent and its two children is $O(\omega^3\log\omega (T(G,\lambda)+\delta)^{\omega}))$.  This must be repeated for every triple of possible parent and child states, giving us a run-time of 
$$O \left( \omega^4 \zeta^3 (T(G,\lambda)+\delta)^{\omega})  \cdot (2 \delta h  T(G,\lambda) )^{4\omega^2 (T(G,\lambda) + \delta)^\omega}   \right).$$



To solve TRLP, we must find all valid states of every node in the nice tree decomposition. Recall that we have $O(\omega n)$ nodes in a nice tree decomposition, where $n$ is the number of vertices in the input graph. Finding valid states of join nodes is the most time-consuming part of the algorithm, therefore the worst-case running time is 
$$O \left( n\omega^5 \zeta^3 (T(G,\lambda)+\delta)^{\omega})  \cdot (2 \delta h  T(G,\lambda) )^{4\omega^2 (T(G,\lambda) + \delta)^\omega}   \right).$$
\end{proof}

\section{Eccentricity}\label{sec:eccentricity}
We now consider two closely related problems under the same model of perturbation.  Instead of asking about the number of reachable vertices, we are interested in whether it is possible to reach \emph{all} vertices in the graph by paths whose length or duration is at most some upper bound.  It is unsurprising that the problem of determining whether there exists a set of perturbations achieving either goal is NP-complete; in contrast to TRLP, however, it turns out that these problems do not become easier when we allow an appearance of every edge to be perturbed.  If all these perturbations can also be sufficiently large, however, we do still recover tractability.

We have previously discussed strict temporal paths, as well as their lengths. We will use the terms \emph{shortest} and \emph{longest} to be specifically in relation to the number of edges in the path.
However, there are other measures of a temporal path that may be relevant under various real-world scenarios. Earlier, we defined foremost paths and the arrival time of a foremost path. In addition, we say that the \emph{duration} of a path $P:= ((v_0v_1,t_1),\ldots, (v_{\ell-1}v_\ell,t_\ell))$ is $t_\ell-t_1$.
A path $P$ is \emph{fastest} of a set of paths $\mathcal{P}$ if there is no other path $P'\in \mathcal{P}$ of strictly smaller duration.

We begin with formal definitions of the problems we are considering.  In both definitions, we are given a temporal graph $(G,\lambda)$ and a source vertex $v_s$.
This vertex $v_s$ has a \emph{temporal shortest eccentricity} of at most $k$ if, for every vertex $v_t\in V(G)$, a shortest temporal path from $v_s$ to $v_t$ contains at most $k$ edges.
The vertex $v_s$ has a \emph{temporal fastest eccentricity} of at most $k$ if, for every vertex $v_t\in V(G)$, a fastest temporal path from $v_s$ to $v_t$ has a duration of at most $k$.
We use the notation $\teccs((G,\lambda),v)$ (respectively $\teccfa((G,\lambda),v)$) to mean the temporal shortest (respectively fastest) eccentricity of $v$ in the temporal graph $(G,\lambda)$.  We note that, given a temporal graph $(G,\lambda)$ and a vertex $v \in V(G)$, both $\teccs((G,\lambda),v)$ and $\teccfa((G,\lambda),v)$ can be computed in polynomial time \cite{XUAN2003}.

We now define the problems of determining whether there is a perturbation achieving some specified eccentricity.

\begin{framed}
\noindent
\textbf{\textsc{Temporal shortest eccentricity under perturbation (TSEP)}}\\
\textit{Input:} A temporal graph $(G,\lambda)$, a source vertex $v_s \in V(G)$, and integers $k$, $\delta$, and $\zeta$.\\
\textit{Question:} Is there a $(\delta,\zeta)$-perturbation $\lambda'$ of $\lambda$ such that $\teccs((G,\lambda'),v_s) \leq k$?
\end{framed}

\begin{framed}
\noindent
\textbf{\textsc{Temporal fastest eccentricity under perturbation (TFaEP)}}\\
\textit{Input:} A temporal graph $(G,\lambda)$, a source vertex $v_s \in V(G)$, and integers $k$, $\delta$, and $\zeta$.\\
\textit{Question:} Is there a $(\delta,\zeta)$-perturbation $\lambda'$ of $\lambda$ such that $\teccfa((G,\lambda'),v_s) \leq k$?
\end{framed}

Here we have considered two of the three standard notions of optimality for temporal paths.  One could naturally define a third problem based on temporal \emph{foremost} eccentricity (requiring the earliest arrival time of a path from $v$ to every other vertex to be at most $k$), but this problem has previously been studied by Deligkas et al.~\cite{deligkas_minimizing_2023_bugfree} under another name.  The problem \textsc{ReachFast}$(k)$ is to minimise the foremost eccentricity of some \emph{set} $S$ of sources (i.e.~minimise the maximum eccentricity over all $s \in S$) when at most $k$ edges can be perturbed, while \textsc{ReachFastTotal}$(k)$ asks the same question when the sum of the sizes of all perturbations must be bounded by $k$.  The behaviour of these problems is similar to what we see for TSEP and TFaEP: they are intractable in general, but can be solved in polynomial time if both the number and size of the perturbations are unrestricted.




We begin by observing that TSEP and TFaEP are both NP-complete.  Inclusion in NP is trivial (the set of perturbations acts as a certificate), and to see NP-hardness we note that in the proof of \Cref{lemma:ds-to-reachability} our constructed temporal graph contains a source vertex that reaches all vertices via a path of at most two edges and of duration at most two if and only if the original graph has a dominating set of size $r$.  (We note that Construction~\ref{const:domset} is similar to the construction used by Deligkas et al.~\cite[Theorem 2]{deligkas_minimizing_2023_bugfree} to show hardness of \textsc{ReachFast}$(k)$ and \textsc{ReachFastTotal}$(k)$ by reduction from \textsc{Hitting Set}.)  We can further conclude from this observation (as in \Cref{thm:inapprox-reach}) that both problems are W[2]-hard parameterised by $\zeta$ and are NP-hard to approximate (where the goal is to find the smallest number of perturbations required).

Recall that TRLP became polynomial-time solvable when we were allowed to perturb at least one appearance of every edge.  In contrast with this result, we now show that both TFaEP and TSEP remain NP-hard even when an appearance of every edge can be perturbed, for any constant $\delta$, and for constant values of $k$. 
Note that this is also a contrast to the behaviour for foremost eccentricity: Deligkas et al.~\cite[Algorithm 1]{deligkas_minimizing_2023_bugfree} show that this problem is efficiently solvable when $\zeta$ is large.

We begin with TSEP, and give a hardness result via a reduction from SAT.
\medskip
\begin{theorem}\label{theorem:tsep-many-perturbed-np-hard}
     For any integers $k \geq 4$ and $\delta\geq 1$, TSEP is \NP-hard even if $\zeta=m$.
\end{theorem}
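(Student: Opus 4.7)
The plan is a polynomial-time reduction from SAT. Given a CNF formula $\phi$ with variables $x_1,\dots,x_n$ and clauses $C_1,\dots,C_m$, I will construct in polynomial time a temporal graph $(G,\lambda)$ on $O(n+m)$ vertices, in which every edge carries a single time label so that $|E(G)| = |\mathcal{E}(G,\lambda)|$ and the restriction $\zeta = m$ is truly no restriction at all, together with a source $v_s$, such that $\phi$ is satisfiable iff there exists a $(\delta, |E(G)|)$-perturbation $\lambda'$ with $\teccs((G,\lambda'),v_s) \leq k$. I will present the construction for $\delta = 1$ and $k = 4$; the generalisation to arbitrary $\delta \geq 1$ and $k \geq 4$ is obtained by scaling all time labels by a factor of $\Theta(\delta)$ and padding each path with $k-4$ additional rigidly-timed edges.

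For each variable $x_i$ I introduce a \emph{commit edge} $e_i = a_ib_i$ with a single time label $t_i$ (the $t_i$ being spread across the lifetime by at least $\Theta(k\delta)$ so that distinct variable gadgets do not interact), together with edges $v_sa_i$ labelled $t_i - 1$ and $v_sb_i$ labelled $t_i + 1$. Under a $\delta$-perturbation, $\lambda'(e_i) \in [t_i-\delta, t_i+\delta]$, and I regard the lower half of this interval as encoding $x_i = \mathrm{T}$ and the upper half as encoding $x_i = \mathrm{F}$. For each clause $C_j$ I create a fresh vertex $c_j$, and for each literal $\ell$ of $C_j$ on variable $x_i$ I add a length-$(k-2)$ padding from the appropriate endpoint of $e_i$ to $c_j$, using the direction $a_i\to b_i$ for positive literals and $b_i\to a_i$ for negative literals. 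The padding edges carry single time labels chosen so that the full length-$k$ path from $v_s$ to $c_j$ is strictly time-increasing iff $\lambda'(e_i)$ lies in the half-interval that makes $\ell$ true: the positive-literal path via $v_sa_ib_i\dots c_j$ requires $\lambda'(e_i) \leq t_i$ (because the initial edge $v_sa_i$ cannot exceed $t_i$ after any $\delta$-perturbation), while the negative-literal path via $v_sb_ia_i\dots c_j$ requires $\lambda'(e_i) \geq t_i+1$ (because the initial edge $v_sb_i$ cannot drop below $t_i$).

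Correctness is then essentially syntactic: all $a_i, b_i$ and all internal padding vertices are at distance at most $k-1$ from $v_s$ by construction, so the eccentricity question reduces to whether every clause vertex $c_j$ is reachable within $k$ hops. By the binary forcing above, this holds for a given $\lambda'$ iff the truth assignment obtained by reading off whether each $\lambda'(e_i)$ lies in the lower or upper half of $[t_i-\delta,t_i+\delta]$ satisfies every clause of $\phi$. The forward direction is a direct verification given a satisfying assignment; the reverse direction reads off the assignment from the perturbation. The main technical obstacle, and where most of the care is required, is the calibration of the padding time labels so that the half-plane forcing on $\lambda'(e_i)$ really is rigid for every $\delta\geq 1$ simultaneously: this is handled by chaining $k-3$ intermediate padding edges whose own $\pm\delta$ freedoms are successively pinned by adjacent fixed neighbours in a domino-like argument, leaving the perturbation of $e_i$ as the unique genuine degree of freedom and letting the case analysis on the $2\delta+1$ possible values of $\lambda'(e_i)$ confirm that only values in the intended half of the interval admit completion to a valid length-$k$ temporal path.
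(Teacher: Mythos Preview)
Your high-level plan matches the paper's: a SAT reduction in which each variable gets a single ``commit edge'' whose perturbed time encodes truth, and each clause vertex is placed so that it is reachable within $k$ hops only through a commit edge set the right way. The paper uses exactly this scheme.

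The gap is in your specific gadget. By making \emph{both} $a_i$ and $b_i$ adjacent to $v_s$, you create a length-$(k-1)$ underlying path to every clause vertex that bypasses $e_i$. For a positive literal the intended route is $v_s\,a_i\,b_i\,[\text{padding}]\,c_j$ (length $k$); but $v_s\,b_i\,[\text{padding}]\,c_j$ has length $k-1$, and symmetrically for negative literals via $v_s\,a_i\,[\text{padding}]\,c_j$. For the reduction to work you must block \emph{both} shortcuts under \emph{every} $\delta$-perturbation while still allowing the intended length-$k$ paths. Consider the negative case with $k=4$, $\delta=1$: the intended path $v_s b_i a_i q_1 c_j$ needs $\lambda'(v_sb_i)<\lambda'(e_i)<\lambda'(a_iq_1)<\lambda'(q_1c_j)$, forcing $\lambda'(a_iq_1)>\lambda'(e_i)\geq t_i-1$. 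But then one can always set $\lambda'(v_sa_i)=t_i-2$ and choose $\lambda'(a_iq_1),\lambda'(q_1c_j)$ increasing, making the shortcut $v_s a_i q_1 c_j$ a valid temporal path of length $3<k$ \emph{regardless} of $\lambda'(e_i)$. Any attempt to block this by capping $\lambda'(a_iq_1)$ below $\min\lambda'(v_sa_i)=t_i-2$ kills the intended path too, since $\lambda'(e_i)\geq t_i-1$. (Your parenthetical explanation of the positive-literal forcing is also backwards---the upper bound on $\lambda'(e_i)$ must come from the \emph{next} padding edge, not from $v_sa_i$---but that is secondary.)

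The paper's gadget sidesteps this by \emph{not} giving the two endpoints of the commit edge separate edges to $v_s$. Instead there is a single chain $v_s=v_{i,0}\cdots v_{i,k-4}$, after which the $\top$- and $\bot$-routes diverge by inserting one extra hop either \emph{before} the commit edge (via $v_{i,k-3}$) or \emph{after} it (via $v_{i,k}$), so both routes have length exactly $k$ and the hybrid route $v_{i,k-4}v_{i,k-2}v_{i,k-1}v^c_j$ is killed by a $3\delta$ gap in the labels. That asymmetric before/after padding is the idea your construction is missing.
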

\begin{proof}
    We show how, given integers $\delta$ and $k\geq 4$, an instance $((G,\lambda), k, 1, m)$ (where $m=|E(G)|$) of TSEP can be constructed from any instance $\bigwedge_j C_j$ of SAT
    such that $((G,\lambda),k, \delta, m)$ is a yes-instance of TSEP if and only if $C\bigwedge_j C_j$ is a yes-instance of SAT.
    To begin, we construct the temporal graph $(G,\lambda)$ from an instance $\bigwedge_j C_j$ of SAT on variables $\{x_i\}$ as follows.
    First, create the vertex $v_s$.
    Then, for each clause $C_j$ create the vertex $v^c_j$ (collectively referred to as clause vertices), and
    for each variable $x_i$ define $v_{i,0} = v_s$,
    create the vertices $v_{i,\ell}$ for $\ell\in[1,k]$,
    for $\ell\in[1,k-4]$ add the temporal edges $(v_{i,\ell}v_{i,\ell+1}, \ell+1)$,
    and add the temporal edges $(v_{i, k-4}v_{i, k-3},k-3)$, $(v_{i,k-4}v_{i,k-2},3\delta+k-2)$, $(v_{i,k-3}v_{i,k-2},k-2)$, $(v_{i,k-2}v_{i,k-1}, \delta+k-1)$, and $(v_{i,k-1}v_{i,k},3\delta+k)$.
    Then, for each $C_j$ that contains $x_i$ as a literal we add the temporal edge $(v_{i,k-1}v^c_j,k)$ and for each $C_j$ that contains $\overline x_i$ as a literal we add the temporal edge $(v_{i,k}v^c_j, 3\delta+k+1)$.
    A diagram of this appears in Figure~\ref{fig:tsep-many-perturbed-np-hard}.

    We now note some properties of this construction.
    First, $v_{i,k-4}v_{i,k-2}v_{i,k-1}v^c_j$ can never form part of a temporal path in $(G,\lambda')$ for any $\delta$-perturbation $\lambda'$ of $\lambda$
    as both $\lambda'(v_{i,k-4}v_{i,k-2}) \geq 2\delta+k-2$ and  $\lambda'(v_{i,k-1}v^c_j) \leq \delta+k$ hold for every $\delta$-perturbation.
    Therefore any path of length $k$ from $v_s$ to an arbitrary clause vertex $v^c_j$ in $(G,\lambda')$ for some $\delta$-perturbation $\lambda'$ must be either of the form $v_sv_{i,1}v_{i,2}\ldots v_{i,k-3}v_{i,k-2}v_{i,k-1}v^c_j$ for some $i$ (we call this a $\top$-path for $x_i$), or of the form $v_sv_{i,1}v_{i,2}\ldots v_{i,k-4}v_{i,k-2}v_{i,k-1}v_{i,k}v^c_j$ for some variable $x_i$ (we will call this a $\bot$-path for $x_i$).
    If we have a $\top$-path for $x_i$ in $(G,\lambda')$ then $\lambda'(v_{i,k-2}v_{i,k-1})\leq \delta+k-1$, and if we have a $\bot$-path for $x_i$ in $(G,\lambda')$ then $\lambda'(v_{i,k-2}v_{i,k-1} \geq 2\delta+k-1$.

    Next we show that $((G,\lambda), v_s, k, 1, m)$ is a yes-instance for TSEP if and only if $\bigwedge_j C_j$ is satisfiable.
    First, assume that $\bigwedge_j C_j$ is satisfiable and that $\phi : \{x_i\} \mapsto \{\top,\bot\}$ is a truth assignment that satisfies $\bigwedge_j C_j$, and construct $\lambda'$ as follows.
    \begin{align*}
        \forall i \quad & \lambda'(v_sv_{i,1}) = 1  \\
        \forall i \, \forall \ell \in [1,k-4] \quad & \lambda'(v_{i,\ell}v_{i,\ell+1}) = \ell+1  \\
        \forall i \quad & \lambda'(v_{i,k-4}v_{i,k-2}) = 2\delta+k-2 \\
        \forall i \quad & \lambda'(v_{i,k-1}v_{i,k}) = 2\delta+k \\
        \forall i,j \quad & \lambda'(v_{i,k-1}v^c_j) = \delta+k \\ 
        \forall i,j \quad & \lambda'(v_{i,k}v^c_j) = 2\delta+k+1 \\
        \forall i\, \text{ where }\phi(x_i) = \top \quad & \lambda'(v_{i,k-2}v_{i,k-1}) = k-1 \\
        \forall i\, \text{ where }\phi(x_i) = \bot \quad & \lambda'(v_{i,k-2}v_{i,k-1}) = 2\delta+k-1 \\
    \end{align*}
    This satisfies $\lambda'$ is a $(\delta,\zeta)$-perturbation of $\lambda$, and note that in particular, if $\phi(x_i) = \top$ then we have a $\top$-path for $x_i$ in $(G,\lambda')$, and if $\phi(x_i) = \bot$ then we have a $\bot$-path for $x_i$ in $(G,\lambda')$.
    
    We now consider shortest paths to various groups of vertices.
    First, for any variable $x_i$, if $\phi(x_i) = \top$ then $v_sv_{i,1}v_{i,2}\ldots v_{i,k-3}v_{i,k-2}v_{i,k-1}v_{i,k}$ is a temporal path of length $k$, and if $\phi(x_i) = \bot$ then $v_sv_{i,1}v_{i,2}\ldots v_{i,k-4}v_{i,k-2}v_{i,k-1}v_{i,k}$ is a temporal path of length $k-1$ and $v_sv_{i,1}v_{i,2}\ldots v_{i,k-4}v_{i,k-3}$ is a temporal path of length $k-3$.
    It thus remains to see whether vertices of the form $v^c_j$ are reachable in paths of length $k$.
    Take an arbitrary such vertex $v^c_j$; as $\phi$ satisfies $\bigwedge_j C_j$, at least one literal in $C_j$ must be true.
    If this literal is of the form $x_i$ (i.e., $x_i$ appears non-negated in $C_j$), then $\phi(x_i) = \top$ and so we have a $\top$-path for $x_i$ in $(G,\lambda')$ and thus the vertex $v^c_j$ is reachable by a path of length $k$.
    Next, if this literal is of the form $\overline {x_i}$ (i.e., $x_i$ appears negated in $C_j$), then $\phi(x_i) = \bot$ and so we have a $\bot$-path for $x_i$ in $(G,\lambda')$ and thus the vertex $v^c_j$ is reachable by a path of length $k$.
    Thus as $\lambda'$ is a $(\delta,m)$-perturbation, $((G, \lambda), v_s, k, \delta, m)$ is a yes-instance for TSEP.

    Next, assume that $((G,\lambda), v_s, k, 1, m)$ is a yes-instance for TSEP.
    Then there must be a $(\delta,m)$-perturbation $\lambda'$ of $\lambda$, such that for any clause vertex $v^c_j$, there is a path of length $k$ from $v_s$ to $v^c_j$.
    We now define $\phi: \{x_i\} \mapsto \{\top, \bot\}$. 
    For each edge of the form $v_{i,k-2}v_{i,k-1}$, if $\lambda'(v_{i,k-2}v_{i,k-1}) \leq \delta+k-1$ then let $\phi(x_i) = \top$, else let $\phi(x_i) = \bot$.
    Now consider an arbitrary clause $C_j$, and its associated clause vertex $v^c_j$.
    There must exist a temporal path of length $k$ in $(G,\lambda')$ from $v_s$ to $v^c_j$, and by the construction of $(G,\lambda)$ and the fact that $\lambda'$ is a $\delta$-perturbation, it must be that this path is either either a $\top$-path for some variable $x_i$, or a $\bot$-path for some variable $x_i$.
    If this path is a $\top$-path for some variable $x_i$ then $\lambda'(v_{i,k-2}v_{i,k-1} \leq \delta+k-1$ and so $\phi(x_i) = \top$, and by construction, the literal $x_i$ appears non-negated in $C_j$, so $C_j$ is satisfied.
    Alternatively, if this path is a $\bot$-path for some variable $x_i$ then $\lambda'(v_{i,k-2}v_{i,k-1} \geq 2\delta +k - 1 > \delta+k-1$ and so $\phi(x_i) = \bot$, and by construction, the literal $x_i$ appears negated in $C_j$, so $C_j$ is satisfied.
    Since we took an arbitrary clause vertex $v^c_j$, it follows that $\phi$ satisfies $\bigwedge_j C_j$.
\end{proof}

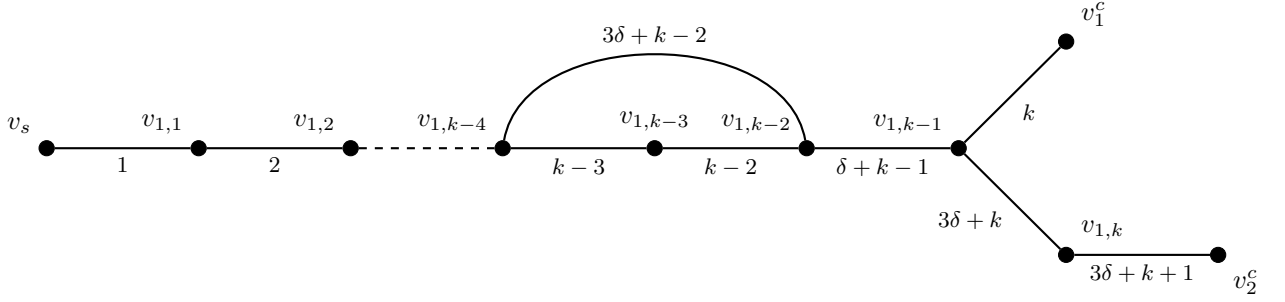
\begin{figure}
    \centering
    \begin{tikzpicture}[%
    node distance=1.8cm,
    vert/.style={draw, circle, minimum size=2mm, fill, inner sep=0pt},
    edge/.style={draw, thick},
    every edge quotes/.style = {auto, font=\footnotesize}
    ]
        \node[vert,label={above left:$v_s$}] (vs) {};
        \node[vert,label={above left:$v_{1,1}$}, right of=vs] (v11) {};
        \node[vert,label={above left:$v_{1,2}$}, right of=v11] (v12) {};
        \node[vert,label={above left:$v_{1,k-4}$}, right of=v12] (vk4) {};
        \node[vert,label={above:$v_{1,k-3}$}, right of=vk4] (vk3) {};
        \node[vert,label={above left:$v_{1,k-2}$}, right of=vk3] (vk2) {};
        \node[vert,label={above left:$v_{1,k-1}$}, right of=vk2] (vk1) {};
        \node[vert,label={above right:$v_{1,k}$}, below right of=vk1] (vk) {};
        \node[vert,label={above right:$v^c_1$}, above right of=vk1] (vc1) {};
        \node[vert,label={below right:$v^c_2$}, right of=vk] (vc2) {};
        
        \draw[edge] (vs) edge["{$1$}"'] (v11);
        \draw[edge] (v11) edge["{$2$}"'] (v12);
        \draw[edge] (v12) edge[dashed] (vk4);
        \draw[edge] (vk4) edge["{$k-3$}"'] (vk3);
        \draw[edge] (vk4) edge["{$3\delta+k-2$}", bend left=80] (vk2);
        \draw[edge] (vk3) edge["{$k-2$}"'] (vk2);
        \draw[edge] (vk2) edge["{$\delta+k-1$}"'] (vk1);
        \draw[edge] (vk1) edge["{$3\delta+k$}"'] (vk);
        \draw[edge] (vk1) edge["{$k$}"'] (vc1);
        \draw[edge] (vk) edge["{$3\delta+k+1$}"'] (vc2);
        
    \end{tikzpicture}
    
    \caption{Diagram of a partial construction for Theorem~\ref{theorem:tsep-many-perturbed-np-hard} showing the relevant vertices for variable $x_1$ and clauses $C_1$ and $C_2$.
    In this construction, clause $C_1$ contains the literal $x_1$ and clause $C_2$ contains the literal $\overline {x_1}$.
    This diagram is labelled with time labels from $\lambda$, and the dashed line indicates a path on the vertices $v_{i,2} \ldots v_{i,k-4}$, where $\lambda(v_{i,\ell}v_{i,\ell+1}) = \ell+2$.
    Note that, as $\delta=1$, if $\lambda'$ is a $1$-perturbation of $\lambda$ and contains a path of either length or duration $k$ from $v_s$ to $v^c_1$ through $v_{i,k-1}$ it must be that $\lambda'(v_{i,k-2}v_{i,k-1})=k-1$, while if $(G,\lambda')$ is to contain a path of length or duration $k$ from $v_s$ to $v^c_2$ through vertex $v_{i,k-1}$ it must that $\lambda'(v_{i,k-2}v_{i,k-1})=k$.
    This dichotomy translates to setting the variable $x_i$ to either true (if $\lambda'(v_{i,k-2}v_{i,k-1})=k-1$) or false (if $\lambda'(v_{i,k-2}v_{i,k-1})=k$).}
    \label{fig:tsep-many-perturbed-np-hard}
\end{figure}

We give an analogous result for TFaEP, which again involves a reduction from SAT.
\medskip
\begin{theorem}\label{theorem:tfaep-np-hard-any-k-delta}
    For any integers $k \geq 2$ and $\delta\geq 1$, TFaEP is \NP-hard even if $\zeta=m$.
\end{theorem}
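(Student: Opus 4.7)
I would reduce from SAT, in a manner analogous to the proof of Theorem~\ref{theorem:tsep-many-perturbed-np-hard}, but tuning the variable gadgets so that it is the \emph{duration} rather than the length of temporal paths that is controlled by the critical edge's perturbation. Given a SAT instance $\bigwedge_j C_j$ over variables $\{x_i\}$, I would construct a temporal graph $(G,\lambda)$ with a source vertex $v_s$ and, for each clause $C_j$, a clause vertex $v^c_j$, together with a small variable gadget for each $x_i$ containing a distinguished critical edge $e^*_i$ carrying a single time-label.

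The first key step is to design the gadget so that the perturbation of $e^*_i$ must lie in exactly one of two disjoint intervals: one encoding $x_i=\top$, which enables a temporal path of duration at most $k$ from $v_s$ to every $v^c_j$ with $x_i\in C_j$; and one encoding $x_i=\bot$, which enables a duration-$\leq k$ path to every $v^c_j$ with $\overline{x_i}\in C_j$. For the base case $\delta=1$ and any $k\geq 2$, I would use two extra vertices $a_i,b_i$ and four edge times: $\lambda(v_s,a_i)=k$, $\lambda(a_i,b_i)=k+1$ (the critical edge), $\lambda(b_i,v^c_j)=k+1$ for each $C_j$ with $x_i\in C_j$, and $\lambda(b_i,v^c_j)=2k+2$ for each $C_j$ with $\overline{x_i}\in C_j$. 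A short case analysis of the triple of times on the 3-edge path $v_s a_i b_i v^c_j$ shows that a ``true-clause'' vertex is reachable in duration $\leq k$ exactly when $\lambda'(a_i,b_i)\in\{k,k+1\}$ and a ``false-clause'' vertex exactly when $\lambda'(a_i,b_i)=k+2$, giving the required XOR. For $\delta>1$, I would lift this by padding the gadget with additional timing vertices and inflating the time-gaps proportionally so that the analogous disjoint-intervals property persists.

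Given the gadget, the correctness direction is routine: a satisfying assignment yields a perturbation that sets each $\lambda'(e^*_i)$ inside the correct interval, and any valid perturbation conversely induces (via the XOR) a consistent truth assignment satisfying every clause. The step I expect to need the most care is ruling out \emph{unintended} duration-$\leq k$ paths from $v_s$ to a clause vertex --- for example, paths that leave one variable gadget and return via $v_s$, revisit vertices (which strict temporal paths allow), or hop through another clause vertex. I would dispose of these by the same style of structural case analysis used in Theorem~\ref{theorem:tsep-many-perturbed-np-hard}, exploiting that the only edges at $v_s$ occur at early times and that two clause vertices share no common neighbour, so any inter-gadget detour is forced to a duration strictly greater than $k$.
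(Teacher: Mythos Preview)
Your high-level strategy matches the paper's: both reduce from SAT by attaching to each variable $x_i$ a gadget containing a single critical edge whose $\delta$-perturbation must fall in one of two disjoint ranges, one enabling duration-$\le k$ paths to clause vertices containing $x_i$ and the other to those containing $\overline{x_i}$. The paper's gadget differs in that it handles all $k\ge 2$ and $\delta\ge 1$ with one parameterised construction: a path $v_s=v_{i,0},v_{i,1},\dots,v_{i,k-1}$ with $\lambda(v_{i,\ell-1}v_{i,\ell})=\ell$, and a final edge $v_{i,k-1}v^c_j$ at time $k+2\delta$ (if $x_i\in C_j$) or $k-1$ (if $\overline{x_i}\in C_j$); the dichotomy is whether $\lambda'(v_{i,k-2}v_{i,k-1})=k-1+\delta$ or is strictly smaller. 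Your three-edge gadget for $\delta=1$ is sound, but ``padding and inflating'' for general $\delta$ is exactly the place where a concrete construction is required, and the paper's uniform treatment avoids the case split altogether.

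One assertion in your sketch is simply false: two clause vertices \emph{can} share a common neighbour---if $C_j$ and $C_{j'}$ both contain the same literal on $x_i$, then $v^c_j$ and $v^c_{j'}$ are both adjacent to your $b_i$. Your construction still rules out unintended length-$\ge 5$ paths, but for a timing reason rather than a structural one: any five strictly increasing times drawn from the edge windows $[k-1,k+1]$, $[k,k+2]$, and $[2k+1,2k+3]$ are forced to span strictly more than $k$. Replace the neighbour claim with that argument.
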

\begin{proof}
    We show how, given integers $k\geq 2$ and $\delta$, an instance $((G,\lambda), k, \delta, m)$ (where $m=|E(G)|$) of TFaEP can be constructed from an instance $\bigwedge C_j$ of SAT such that $((G,\lambda), k, \delta, m))$ is a yes-instance of TFaEP if and only if $\bigwedge C_j$ is a yes-instance of SAT.
    First, we take our instance $\bigwedge C_j$ of SAT and construct our temporal graph $(G,\lambda)$, beginning by creating the vertex $v_s$.
    Then for each variable $x_i$, define $v_{i,0} = v_s$, and create the vertices $v_{i,\ell}$ for $\ell\in [1,k-1]$.
    Lastly, for each clause $C_j$ in the SAT instance create the vertex $v^c_j$.
    We next create the temporal edges.
    First, for each variable $x_i$ add the edge $(v_sv_{i,1}, 1)$.
    Then for each variable $x_i$, and for each $\ell\in [1,k-2]$, add the edge $(v_{i,\ell}v_{i,\ell+1}, \ell+1)$.
    Lastly, for each variable $x_i$, and each clause $C_j$, if $x_i$ appears as a literal (i.e., non-negated) in $C_j$, add the edge $(v_{i,k-1}v^c_j,k+2\delta )$,
    and if $\overline{x_i}$ appears as a literal in $C_j$, add the edge $(v_{i,k-1}v^c_j,k-1)$.
    A diagram of this gadget is shown in \Cref{fig:tfaep-np-hard-any-k-delta}.

    We now note some properties of this construction for an arbitrary $\delta$-perturbation $\lambda'$ of $\lambda$.
    First, consider some variable $x_i$, and some clause $C_j$.
    There can be no temporal path from $v_s$ to $v^c_j$ of length strictly less than $k$,
    and so for any temporal path from $v_s$ to $v^c_j$ in $(G,\lambda')$ of duration at most $k$, if edge $e'$ immediately follows edge $e$, then $\lambda'(e') = \lambda'(e)+1$.
    Now consider some variable $x_i$, and some clause $C_j$ such that $x_i$ only appears non-negated in $C_j$.
    To have a valid temporal path of duration $k$ in $(G,\lambda')$ from $v_s$ to $v^c_j$ through $v_{i,k-1}$, we must have $\lambda'(v_{i,k-2}v_{i,k-1}) = k-1 + \delta$ and $\lambda'(v_{i,k-1}v^c_j) = k + \delta$.
    On the other hand, if $x_i$ only appears negated in $C_j$, then for a valid temporal path of duration $k$ in $(G,\lambda')$ from $v_s$ to $v^c_j$ through $v_{i,k-1}$ to exist, we must have $\lambda'(v_{i,k-1}v^c_j) \leq k - 1 + \delta$, and so we must have $\lambda'(v_{i,k-2}v_{i,k-1}) \leq k - 2 + \delta$.
    We later use this dichotomy to encode whether the variable $x_i$ is true or false.

    We now show that $((G,\lambda),v_s, k, \delta, m)$ is a yes-instance to TFaEP if and only if $\bigwedge_j C_j$ is a yes-instance to SAT.
    First, assume that $\bigwedge_j C_j$ is satisfiable, and let $\phi$ be a function mapping variables to $\{\top,\bot\}$ that satisfies $\bigwedge_j C_j$.
    We will use $\phi$ to construct a a $\delta$-perturbation $\lambda'$ of $\lambda$.
    For each variable $x_i$, if $\phi(x_i) = \top$ then let $\lambda'(v_sv_{i,1}, 1+\delta)$, for each $\ell \in [1,k-2]$ let $\lambda'(v_{i,\ell}v_{i,\ell+1}) = \ell+1 + \delta$, and for each clause $C_j$ that contains $x_i$ as a literal (i.e.~non-negated) let $\lambda'(v_{i,k-1}v^c_j) = k + \delta$.
    For each variable $x_i$, if $\phi(x_i) = \bot$ then let $\lambda(v_sv_{i,1}, 1)$, for each $\ell \in [1,k-2]$ let $\lambda'(v_{i,\ell}v_{i,\ell+1}) = \ell+1$, and for each clause $C_j$ that contains $x_i$ as a literal (i.e.~non-negated) let $\lambda'(v_{i,k-1}v^c_j) = k$.
    We have not yet defined $\lambda'$ for some edges $e$ yet; for such edges, let $\lambda'(e) = \lambda(e)$.
    By this construction, $\lambda'$ is a $\delta$-perturbation of $\lambda$, so we now consider $\teccfa((G,\lambda'),v_s)$.
    First, for any variable $x_i$, the path $v_sv_{i,1}\ldots v_{i,k-2}v_{i,k-1}$ is a path of duration $k-1$ in $(G,\lambda')$, so we need only consider vertices of the form $v^c_j$.
    Consider an arbitrary such vertex $v^c_j$, and let $C_j$ be its associated clause. As $\phi$ satisfies $C_j$, there must be some variable $x_i$ such that either $\phi(x_i) = \top$ and $x_i$ appears as a literal in $C_j$, or $\phi(x_i) = \bot$ and $\overline{x_i}$ appears as a literal in $C_j$.
    In either case, the path $v_sv_{i,1}\ldots v_{i,k-2}v_{i,k-1}v^c_j$ is a temporal path of duration $k$ in $(G,\lambda')$, and so $\teccfa((G,\lambda'),v_s) = k$ and so $((G,\lambda), v_s, k, \delta, m)$ is a yes-instance to TFaEP.

    Next, assume that $((G,\lambda), v_s, k, \delta, m)$ is a yes-instance to TFaEP, and let $\lambda'$ be the associating perturbation that achieves $\teccfa((G,\lambda'),v_s) \leq k$.
    Construct a $\phi$ mapping variables to $\{\top,\bot\}$ as follows.
    If $\lambda'(v_{i,k-2}v_{i,k-1}) = k-1 + \delta$ let $\phi(x_i) = \top$, else let $\phi(x_i) = \top$.
    Now take an arbitrary clause $C_j$, and associated vertex $v^c_j$.
    There is a temporal path of duration $k$ in $(G,\lambda')$ from $v_s$ to $v^c_j$, as we have a yes-instance to TFaEP.
    By construction, such a path must use an edge of the form $v_{i,k-2}v_{i,k-1}$ for some $i$ corresponding to a variable $i$.
    If $\lambda'(v_{i,k-2}v_{i,k-1}) = k-1 + \delta$ then $\phi(x_i) = \top$ and the edge $v_{i,k-1}v^c_j$ must be used at time $k+\delta$, and so by construction (and as $\lambda'$ is a $\delta$-perturbation) it must be that $x_i$ appears non-negated in $C_j$, and so $C_j$ is satisfied.
    Alternatively, $\lambda'(v_{i,k-2}v_{i,k-1}) < k - 1 + \delta$ and $\phi(x_i) = \bot$, meaning we must have $\lambda'(v_{i,k-1}v^c_j) < k + \delta$, and so by construction (and as $\lambda'$ is a $\delta$-perturbation)  it must be that $x_i$ appears negated in $C_j$, and so $C_j$ is satisfied.
    As we chose $C_j$ arbitrarily, it follows that $\bigwedge_j C_j$ is satisfied by $\phi$, and so is a yes-instance to SAT.
\end{proof}

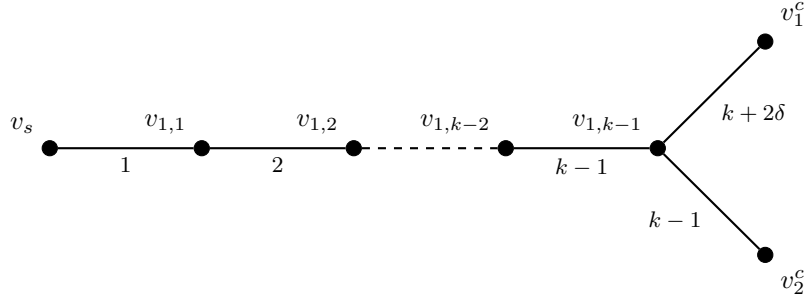
\begin{figure}
    \centering
    \begin{tikzpicture}[%
    node distance=2cm,
    vert/.style={draw, circle, minimum size=2mm, fill, inner sep=0pt},
    edge/.style={draw, thick},
    every edge quotes/.style = {auto, font=\footnotesize}
    ]
        \node[vert,label={above left:$v_s$}] (vs) {};
        \node[vert,label={above left:$v_{1,1}$}, right of=vs] (v11) {};
        \node[vert,label={above left:$v_{1,2}$}, right of=v11] (v12) {};
        \node[vert,label={above left:$v_{1,k-2}$}, right of=v12] (vk2) {};
        \node[vert,label={above left:$v_{1,k-1}$}, right of=vk2] (vk1) {};
        \node[vert,label={above right:$v^c_1$}, above right of=vk1] (vc1) {};
        \node[vert,label={below right:$v^c_2$}, below right of=vk1] (vc2) {};
        
        \draw[edge] (vs) edge["{$1$}"'] (v11);
        \draw[edge] (v11) edge["{$2$}"'] (v12);
        \draw[edge] (v12) edge[dashed] (vk4);
        \draw[edge] (vk2) edge["{$k-1$}"'] (vk1);
        \draw[edge] (vk1) edge["{$k + 2\delta$}"'] (vc1);
        \draw[edge] (vk1) edge["{$k-1$}"'] (vc2);
        
    \end{tikzpicture}
    \caption{Diagram of a partial construction for Theorem~\ref{theorem:tfaep-np-hard-any-k-delta} showing the relevant vertices for variable $x_1$ and clauses $C_1$ and $C_2$.
    In this construction, clause $C_1$ contains the literal $x_1$ and clause $C_2$ contains the literal $\overline {x_1}$.
    This diagram is labelled with time labels from $\lambda$, and the dashed line indicates a path on the vertices $v_{1,2} \ldots v_{1,k-4}$, where $\lambda(v_{1,\ell}v_{i,\ell+1}) = \ell+1$.
    Note that, as $\delta=1$, if $\lambda'$ is a $\delta$-perturbation of $\lambda$ and contains a path of duration at most $k$ from $v_s$ to $v^c_1$ through $v_{1,k-1}$ it must be that $\lambda'(v_{1,k-2}v_{1,k-1})=k+\delta-1$, while if $(G,\lambda')$ is to contain a path of duration at most $k$ from $v_s$ to $v^c_2$ through vertex $v_{1,k-1}$ it must be that $\lambda'(v_{1,k-2}v_{1,k-1})< k+\delta-1$.
    This dichotomy translates to setting the variable $x_1$ to either true (if $\lambda'(v_{1,k-2}v_{1,k-1})=k-1 + \delta$) or false (if $\lambda'(v_{1,k-2}v_{1,k-1}) \leq k-2+\delta$).}
    \label{fig:tfaep-np-hard-any-k-delta}
\end{figure}

We conclude this section with one positive result for TSEP and TFaEP.  Recall that TRLP becomes trivial when both $\zeta$ and $\delta$ are sufficiently large (every instance is a yes-instance).  While neither TSEP nor TFaEP becomes trivial under the same conditions, we now show that both problems can be solved in polynomial time when both $\zeta$ and $\delta$ are large.

Our algorithms make use of ubiquitous foremost temporal path trees.  Recall that 
a ubiquitous foremost temporal path is a foremost temporal path $P$ such that all prefixes of $P$ are themselves foremost temporal paths, and that a ubiquitous foremost temporal path tree from a source vertex $v_s$ is a temporal graph $(G_T,\lambda_T)$ such that $G_T$ is a tree, $\tau(G_T,\lambda_T) = 1$ and for any $v_t\in V(G)$, a foremost temporal path in $(G_T,\lambda_T)$ from $v_s$ to $v_t$ arrives at $v_t$ no later than any temporal path.  Our algorithms further rely on the following observation: for any edge $e$, the possible set of times it can be perturbed to includes the range $\{1,\ldots,r_e\}$ for some suitably large integer $r_e$, which means that along any sequence of (non-temporal) edges $(e_1,e_2,\ldots)$ that can form a temporal path under some suitable perturbation, we can assume that $\lambda(e_i) = i$.  We write $d_G(v_1,v_2)$ for the number of edges in the shortest path between $v_1$ and $v_2$ in $G$.

\medskip
\begin{theorem}\label{theorem:tsep-tfaep-if-large-delta-and-zeta}
    Each of TSEP and TFaEP can be solved in time $O(m(\log n + \log \delta \tau(G,\lambda)))$ on instances $((G,\lambda), v_s, k, \delta, \zeta)$ if $\delta \geq T(G,\lambda)$ and $\zeta \geq n - 1$.
\end{theorem}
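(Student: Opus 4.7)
The plan is to reduce both problems uniformly to a single computation of a ubiquitous foremost temporal path tree (via \Cref{theorem:ubiquitous-foremost-tree}) in a carefully constructed auxiliary temporal graph. For each $e \in E(G)$, set $r_e = \max \lambda(e) + \delta$ and define $\lambda''(e) = \{1, 2, \ldots, r_e\}$. Since $\delta \geq T(G,\lambda)$, every $t' \in \{1, \ldots, r_e\}$ lies in the perturbation range of the time-edge $(e, \max \lambda(e))$, so $\lambda''(e)$ records exactly the set of times to which some time-edge of $e$ can be perturbed under a $\delta$-perturbation.

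The central observation is the one flagged in the excerpt: a static path $e_1, \ldots, e_d$ in $G$ from $v_s$ to $v_t$ admits a temporal realisation in some $\delta$-perturbation of $\lambda$ if and only if $r_{e_i} \geq i$ for every $i$, in which case one may take times $1, 2, \ldots, d$ along the path. It follows that the foremost arrival time at $v_t$ in $(G, \lambda'')$ equals the minimum length $d^\star_{v_t}$ of any temporal path from $v_s$ to $v_t$ realisable in some $\delta$-perturbation, and that the minimum duration of such a realisation is $d^\star_{v_t}-1$ (longer realisations are necessarily slower because duration is at least length minus one). I would therefore compute a ubiquitous foremost temporal path tree of $(G, \lambda'')$ rooted at $v_s$ and let $D$ be the maximum arrival time at any vertex; the algorithm returns yes for TSEP iff $D \leq k$ and yes for TFaEP iff $D \leq k+1$.

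For the forward direction, the tree assigns a single time $d = \lambda_T(e)$ to each of its at most $n-1$ edges, and I would extract a $(\delta, \zeta)$-perturbation $\lambda'$ by, for every tree edge with $d \notin \lambda(e)$, perturbing the time-edge $(e, \max \lambda(e))$ to $(e, d)$; this uses at most $n - 1 \leq \zeta$ perturbations and does not collide with the remaining time-edges of $e$ (which stay put). Then the tree path from $v_s$ to each $v_t$ is a temporal path in $(G, \lambda')$ of length $d^\star_{v_t}$ and duration $d^\star_{v_t} - 1$. For the converse direction, any $(\delta, \zeta)$-perturbation $\lambda^\dagger$ achieving the required eccentricity provides, for each $v_t$, a temporal path with strictly increasing integer times $t_1 < \cdots < t_d$ satisfying $t_i \leq \max \lambda(e_i) + \delta = r_{e_i}$, so $i \leq t_i \leq r_{e_i}$ for every $i$, and the same static edge sequence is realisable in $(G, \lambda'')$ with length at most $k$ (or $k+1$ respectively), giving $D \leq k$ (or $D \leq k+1$).

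The main technical nuisance I expect is verifying the ``if and only if'' characterisation of realisability cleanly and ensuring the extracted perturbation remains a valid bijection on time-edges; the latter is handled by perturbing only one time-edge per tree edge and leaving the rest untouched. Once that is in place, the running time follows from \Cref{theorem:ubiquitous-foremost-tree} applied to $(G, \lambda'')$, where $\tau(G, \lambda'') \leq T(G,\lambda) + \delta$, comfortably within the claimed $O(m(\log n + \log \delta \tau(G,\lambda)))$ bound.
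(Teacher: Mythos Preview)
The proposal is correct and follows essentially the same approach as the paper: both expand each edge's label set to $\{1,\ldots,r_e\}$ (using $\delta \geq T(G,\lambda)$), compute a ubiquitous foremost temporal path tree via \Cref{theorem:ubiquitous-foremost-tree}, observe that foremost arrival times in this auxiliary graph coincide with the minimum lengths of realisable paths, and extract a $(\delta,n-1)$-perturbation from the at most $n-1$ tree edges. Your explicit realisability criterion $r_{e_i}\geq i$ is a slightly cleaner formulation than the paper's induction on tree distance, and you are more explicit about the $D\leq k$ versus $D\leq k+1$ thresholds distinguishing TSEP from TFaEP, but the substance is identical.
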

\begin{proof}
    Define $\lambda'$ by $\lambda'(e) = \{ \max(1, \lambda(e) + p) \mid p \in [-\delta, \delta]\}$ for $e \in E(G)$, noting
    that as $\delta \geq T(G,\lambda)$, for each $e\in E(G)$ there is some positive integer $r_e$ such that $\lambda'(e) = \{1,2,\ldots, r_e\}$.
    Then use~\Cref{theorem:ubiquitous-foremost-tree} to calculate a ubiquitous foremost temporal path tree $(G_T, \lambda_T)$ on $v_s$ in the temporal graph $(G,\lambda')$
    in $O(m(\log n + \log\tau(G,\lambda'))) = O(m(\log n + \log \delta\tau(G,\lambda)))$ time.
    By setting $\lambda''(e) = \lambda_T(e)$ for $e\in E(G_T)$ and $\lambda''(e) = \lambda(e)$ for $e\not\in E(G_T)$, we know that $\lambda''$ is a $(\delta,\zeta)$-perturbation of $\lambda$ with the property that if a temporal path exists in $(G_T,\lambda_T)$, then the same path must also exist in $(G,\lambda'')$.
    Note that while $V(G_T)\subseteq V(G)$, it is possible that $V(G) \neq V(G_T)$: in this case the instance of either TSEP or TFaEP is a no-instance, so we assume from now on that $V(G_T) = V(G)$.
    We claim that for each goal vertex $v_g\in V(G)$, the unique temporal path from $v_s$ to $v_g$ in $(G_T,\lambda_T)$ arrives at time $d_{G_T}(v_s,v_g)$, where $d_{G_T}(v_s,v_g)$ is the distance from $v_s$ to $v_g$ in the graph $G_T$.

    We show this by induction on $d_{G_T}(v_s,v_g)$.
    If $d_{G_T}(v_s,v_g) = 1$, then $v_sv_g\in E(G_T)$, and as $1\in \lambda'(v_sv_g)$, there is a foremost path from $v_s$ to $v_g$ that arrives at time 1.
    By definition of a ubiquitous foremost temporal path tree, the unique temporal path from $v_s$ to $v_g$ must therefore arrive at time 1, completing the base case for the induction.

    We next proceed by induction.
    Let $d_{G_T}(v_s,v_g) = d > 1$, and let $v_1$ be a vertex such that $v_1v_g\in E(G_T)$ and $d_{G_T}(v_s,v_1) = d-1$.
    Such a vertex $v_1$ must exist, else there can be no path of length $d$ from $v_s$ to $v_g$ in $G_T$.
    By induction, the unique temporal path from $v_s$ to $v_1$ in $(G_T,\lambda_T)$ arrives at time $d-1$.
    Additionally, there is a temporal path from $v_s$ to $v_g$ in $(G_T,\lambda_T)$ (else we would have $v_g\not\in V(G_T)$.
    Let $d'$ be the arrival time of the unique foremost temporal path in $(G_T,\lambda_T)$ from $v_s$ to $v_g$.
    As $\lambda'(v_1v_g) = \{1,2,\ldots, r_e\}$ for some positive integer $r_e$, if $d' > d$ then $d'\in \{1,2,\ldots,r_e\}$ and so we must have $d\in \{1,2,\ldots,r_e\}$ as both are positive integers. But then the edge $v_1v_g$ could be used at time $d$ instead, breaking our assumption.
    Therefore the unique temporal path from $v_s$ to $v_g$ in $(G_T,\lambda_T)$ arrives at time $d$.

    Noting that, as $(G_T,\lambda_T)$ is a ubiquitous foremost temporal path tree, no temporal path in any $(\delta,\zeta)$-perturbation of $\lambda$ can arrive earlier than $d_{G_T}(v_s,v_g)$, 
    we can also deduce that no temporal path in any $(\delta,\zeta)$-perturbation could be shorter than $d_{G_T}(v_s,v_g)$ (else it could arrive earlier than $d_{G_T}(v_s,v_g)$) and no temporal path in any $(\delta,\zeta)$-perturbation can have a lower duration (by the same argument).
    It therefore remains to check that either $\teccs{((G_T,\lambda_T), v_s)} \leq k$ (if we are considering TSEP) or $\teccfa{((G_T,\lambda_T), v_s)} \leq k$ (if we are considering TFaEP).
    As $(G_T,\lambda_T)$ is a ubiquitous foremost temporal path tree, each of those can be checked in $O(n)$ time.
\end{proof}
\section{Conclusion and Open Questions}\label{sec:conclusion}
Motivated by the uncertainty and errors intrinsic in recording real-world temporal graphs, we have investigated temporal graph reachability problems under perturbation, in particular asking whether there exists a perturbation of time-edges of a temporal graph that results in a minimum required reachability.  We find that when arbitrarily many perturbations are allowed, this can be resolved efficiently, but that the problem is, in general, \NP-hard and $W[2]$-hard with respect to $\zeta$. Furthermore, under the assumption of ETH, we show that there is no $f(\zeta)n^{o(\zeta)}$-time algorithm for our problem. We have tractability results when the underlying graph is a tree or when it has bounded treewidth and the lifetime and size of permitted perturbations are also bounded by constants.  For contrast, we show that allowing arbitrarily many perturbations is not sufficient to give tractability of related eccentricity problems, where the question is not just whether there is some source vertex that reaches some specified number of vertices, but whether it reaches vertices within some travel duration (fastest) or in few steps (shortest).  

An obvious avenue for further research is on other temporal graph problems with the same style of perturbations.  However, inspired by the motivation to make temporal algorithms more suitable for real-world temporal contact data, we suggest exploring temporal graph modification problems that are robust to  perturbations: for example, removing (by, e.g. vaccination) vertices from a temporal network to limit reachability in a way that has guarantees under possible bounds of perturbation.  


\backmatter

\section*{Statements and Declarations}
The authors have no competing interests and no data was used in this work.
For the purpose of open access, the author(s) has applied a Creative Commons Attribution (CC BY) licence to any Author Accepted Manuscript version arising from this submission.

\subsection*{Author Contributions}
All authors (Jessica Enright, Laura Larios-Jones, William Pettersson, Kitty Meeks) contributed to the generation of ideas, the writing of the manuscript and the reviewing of the manuscript.

\subsection*{Funding}
Jessica Enright, Kitty Meeks, and William Pettersson are supported by EPSRC grant EP/T004878/1. William Pettersson is additionally supported by EPSRC grant EP/X013618/1. 


\newpage
\clearpage
\bibliography{reachabilityRefs}
\end{document}